\numberwithin{equation}{section}
\newcommand{\reals}{\mathbb{R}}
\newcommand{\naturals}{\mathbb{N}}
\newcommand{\pr}{\mathbb{P}}        % probability
\newcommand{\ex}{\mathbb{E}}        % expectation
\newcommand{\ind}{\boldsymbol{1}}          % indicator function
\newcommand{\norm}[1]{\left\| #1 \right\|} % norm
\newcommand{\bs}[1]{\boldsymbol{#1}}
\newcommand{\haz}{\alpha^*}                  % hazard rate
\newcommand{\hazvec}{\boldsymbol{\alpha}^*}  % discretized hazard rate
\newcommand{\cumhaz}{A^*}                    % cumulative hazard
\newcommand{\quadvar}[1]{\langle #1 \rangle} % quadratic variation 
\newcommand{\ngrid}{n}
\newcommand{\rpack}[1]{\texttt{#1}}
\theoremstyle{plain}
\newtheorem{theorem}{Theorem}[section]
\newtheorem{prop}[theorem]{Proposition}
\newtheorem{lemma}[theorem]{Lemma}
\newtheorem*{theorem-unnumbered}{Theorem}
\theoremstyle{definition}
\newtheorem*{remark-nonumber}{Remark}
\newtheorem{setting}{Setting}
\let\save@mathaccent\mathaccent
\newcommand*\if@single[3]{%
  \setbox0\hbox{${\mathaccent"0362{#1}}^H$}%
  \setbox2\hbox{${\mathaccent"0362{\kern0pt#1}}^H$}%
  \ifdim\ht0=\ht2 #3\else #2\fi
  }
\newcommand*\rel@kern[1]{\kern#1\dimexpr\macc@kerna}
\newcommand*\widebar[1]{\@ifnextchar^{{\wide@bar{#1}{0}}}{\wide@bar{#1}{1}}}
\newcommand*\wide@bar[2]{\if@single{#1}{\wide@bar@{#1}{#2}{1}}{\wide@bar@{#1}{#2}{2}}}
\newcommand*\wide@bar@[3]{%
  \begingroup
  \def\mathaccent##1##2{%
%Enable nesting of accents:
    \let\mathaccent\save@mathaccent
%If there's more than a single symbol, use the first character instead (see below):
    \if#32 \let\macc@nucleus\first@char \fi
%Determine the italic correction:
    \setbox\z@\hbox{$\macc@style{\macc@nucleus}_{}$}%
    \setbox\tw@\hbox{$\macc@style{\macc@nucleus}{}_{}$}%
    \dimen@\wd\tw@
    \advance\dimen@-\wd\z@
%Now \dimen@ is the italic correction of the symbol.
    \divide\dimen@ 3
    \@tempdima\wd\tw@
    \advance\@tempdima-\scriptspace
%Now \@tempdima is the width of the symbol.
    \divide\@tempdima 10
    \advance\dimen@-\@tempdima
%Now \dimen@ = (italic correction / 3) - (Breite / 10)
    \ifdim\dimen@>\z@ \dimen@0pt\fi
%The bar will be shortened in the case \dimen@<0 !
    \rel@kern{0.6}\kern-\dimen@
    \if#31
      \overline{\rel@kern{-0.6}\kern\dimen@\macc@nucleus\rel@kern{0.4}\kern\dimen@}%
      \advance\dimen@0.4\dimexpr\macc@kerna
%Place the combined final kern (-\dimen@) if it is >0 or if a superscript follows:
      \let\final@kern#2%
      \ifdim\dimen@<\z@ \let\final@kern1\fi
      \if\final@kern1 \kern-\dimen@\fi
    \else
      \overline{\rel@kern{-0.6}\kern\dimen@#1}%
    \fi
  }%
  \macc@depth\@ne
  \let\math@bgroup\@empty \let\math@egroup\macc@set@skewchar
  \mathsurround\z@ \frozen@everymath{\mathgroup\macc@group\relax}%
  \macc@set@skewchar\relax
  \let\mathaccentV\macc@nested@a
%The following initialises \macc@kerna and calls \mathaccent:
  \if#31
    \macc@nested@a\relax111{#1}%
  \else
%If the argument consists of more than one symbol, and if the first token is
%a letter, use that letter for the computations:
    \def\gobble@till@marker##1\endmarker{}%
    \futurelet\first@char\gobble@till@marker#1\endmarker
    \ifcat\noexpand\first@char A\else
      \def\first@char{}%
    \fi
    \macc@nested@a\relax111{\first@char}%
  \fi
  \endgroup
}
\newcommand{\heading}[2]
{  \setcounter{page}{1}
   \begin{center}

   %\phantom{Distance to upper boundary}
   %\vspace{0.5cm}

   {\LARGE \textbf{#1}}
   \vspace{0.2cm}

   {\LARGE \textbf{#2}}
   \end{center}
}
\begin{document}

\heading{Piecewise constant hazard estimation}{with the fused lasso}
\bigskip
\begin{center}
{\large Manuel Rosenbaum, Jan Beyersmann \& Michael Vogt\footnote{Address: Institute of Statistics, Department of Mathematics and Economics, Ulm University, 89081 Ulm, Germany. Email addresses: \texttt{\{manuel.rosenbaum,jan.beyersmann,m.vogt\}@uni-ulm.de}. Corresponding author is Michael Vogt.}}\\[0.4cm] 
{\large \textit{Ulm University}}
\end{center}
\vspace{-0.7cm}
\renewcommand{\abstractname}{}
\begin{abstract}
\noindent In applied time-to-event analysis, a flexible and useful approach is to model the hazard rate as a piecewise constant function of time. However, the change points and values of the piecewise constant hazard are usually unknown and need to be estimated. In this paper, we develop a fully data-driven procedure for piecewise constant hazard estimation. We work in a general counting process framework which nests a wide range of popular models in time-to-event analysis including Cox's proportional hazards model with potentially high-dimensional covariates, competing risks models as well as more general multi-state models. To construct our estimator, we set up a regression model for the increments of the Breslow estimator and then use fused lasso techniques to approximate the piecewise constant signal in this regression model. In the theoretical part of the paper, we derive the convergence rate of our estimator as well as some results on how well the change points of the piecewise constant hazard are approximated by our method. We complement the theory by both simulations and a real data example, illustrating that our results apply in rather general event histories such as multi-state models.
\end{abstract}
\renewcommand{\baselinestretch}{1.2}\normalsize
\textbf{Key words:} survival analysis, fused lasso, change points, piecewise constant hazard. \\
\textbf{AMS 2020 subject classifications:} 62J07, 62N02, 62P10.

\section{Introduction}\label{sec:intro}

Hazard rates are a central quantity in time-to-event analysis. A flexible and useful approach is to model the hazard as a piecewise constant function of time. 
%A common example is the Cox model with a piecewise constant baseline hazard, which is often called the Poisson (or piecewise exponential) regression model for survival data \citep[see e.g.\ p.192 in][]{Andersenetal2021}. 
In practice, the change points and function values of the piecewise constant hazard are unknown and need to be estimated. In this paper, we develop a fully data-driven method for estimating piecewise constant hazard functions. Our method is based on techniques from high-dimensional statistics, in particular on fused lasso techniques \citep{Tibshirani2005}, which -- unlike many other machine learning techniques -- are theoretically tractable and thus allow us to back up our estimation approach by quite comprehensive theory.

We work in a general counting process framework with a multiplicative intensity structure. Specifically, the intensity process is the product of a deterministic function of time -- the (baseline) hazard rate we want to estimate -- and a predictable process (which possibly incorporates covariate effects and may thus depend on unknown parameters). This framework nests a wide range of popular models in time-to-event analysis, including the Cox model with potentially high-dimensional covariates, competing risks models as well as more general multistate models. The counting process framework under consideration is introduced formally in Section \ref{sec:model} along with some leading examples of models that are nested in it.

Piecewise constant hazard models are regularly presented in the textbook literature on time-to-event analysis as a very flexible parametric modelling approach.
%as a common modelling approach in practice.
However, guidance on choosing the number and location of the change points is either missing or rather informal.
In the research literature, the majority of articles is restricted to simple survival models where the hazard only has a single jump or change point.
%The research literature on piecewise constant hazard estimation is quite limited. The majority of articles focuses on simple survival models where the hazard only has a single jump or change point.
Early examples are \cite{MatthewsFarewell1982}, \cite{NguyenRogersWalker1984} and \cite{Loader1991} who study maximum likelihood methods for estimating the change point. Further studies include \cite{ChangChenHsiung1994}, \cite{GijbelsGuerler2003} and \cite{ZhaoWuZhou2009}. Closely related to the problem of estimating the change point is the problem of testing the null hypothesis of a constant hazard against the alternative of a change point. This test problem is investigated in \cite{MatthewsFarewellPyke1985}, \cite{Yao1986}, \cite{Worsley1988}, \cite{Henderson1990} and \cite{Loader1991} among others.

Only very few studies allow the piecewise constant hazard to have multiple change points. \cite{GoodmanLiTiwari2011} and \cite{HanSchellKim2014} propose sequential testing procedures to detect multiple change points in a simple survival model with right-censoring and no covariates. An extension of the approach in \cite{GoodmanLiTiwari2011} to a two-sample setting is provided in \cite{HeFangSu2013}. Apart from these frequentist methods, Bayesian approaches to estimate a piecewise constant hazard with multiple change points can be found in \cite{ArjasGasbarra1994} and \cite{CooneyWhite2021,CooneyWhite2023}.

An alternative approach to fitting piecewise constant hazards is based on piecewise exponential models (PEMs). 
PEMs discretize time into a large number $N$ of subintervals and assume that the hazard is constant within each interval. The hazard can thus be parameterized by a vector of $N$ interval-specific hazard levels, which are estimated jointly with the remaining model parameters by maximizing a Poisson likelihood. For sufficiently large $N$, this approach produces a very flexible estimator of the hazard. However, it has the drawback that the resulting estimator is not parsimonious: the estimated hazard levels are typically all different from each other. Consequently, even if the underlying hazard is a simple piecewise constant function with a single change point, the estimated hazard will exhibit $N$ change points in general. 
Another drawback of PEMs is their high dimensionality, as at least $N$ hazard parameters must be estimated. To address this issue, spline-based extensions such as piecewise exponential additive models (PAMs) have been proposed; see \cite{BenderGrollScheipl2018} for an overview. Rather than estimating $N$ unrelated hazard parameters, PAMs model the hazard by a smooth spline function and estimate the corresponding spline coefficients together with the remaining model parameters by penalized maximum likelihood with a quadratic penalty. While this approach substantially reduces the effective dimensionality of the problem, it tends to smooth out piecewise constant structures in the hazard and is therefore not specifically designed to recover piecewise constant hazards.

Our approach is based on the fused lasso. Similar to PEMs, we start from a fine discretization of the time axis and allow for a separate hazard level in each interval. However, instead of imposing smoothness through a quadratic penalty as in PAMs, we employ a fused lasso penalty on adjacent hazard levels. This penalty encourages neighboring hazard levels to coincide and therefore promotes piecewise constant hazard estimates with only a small number of change points. As a result, the proposed approach combines the flexibility of high-dimensional PEMs with automatic complexity reduction. In particular, when the underlying hazard is piecewise constant with only a few change points, our fused lasso estimator produces a parsimonious approximation with only a small number of change points in the vicinity of the true jumps.

%Our approach is based on the fused lasso, which is closely related to spline methods \citep[see][for details on the relation]{{Tibshirani2014}}. However, in contrast to the spline approach from above, the fused lasso does not employ fixed knots but chooses the knots (and their number) adaptively. Put differently, it does not produce jumps at a large number of pre-specified knots. It rather selects the jump points in a data-driven way. For instance, if the underlying function is a simple step function with a single jump, the fused lasso attempts to find a parsimonious approximation with only a few change points (ideally one) in the vicinity of the true jump. 

To construct our estimator, we reformulate hazard estimation as a regression problem to which fused lasso methodology can be applied. First, we estimate the cumulative hazard nonparametrically using an off-the-shelf estimator, namely the Breslow estimator (which reduces to the Nelson-Aalen estimator in the absence of covariates). Next, we compute increments of the Breslow estimator over a fine time grid. These increments can be shown to satisfy a simple regression model with the following property: the regression function is a discretized version of the underlying piecewise constant hazard. This allows us to construct an estimator of the hazard by applying techniques for estimating piecewise constant regression curves -- in particular, fused lasso techniques -- to the Breslow increments.

The main merits of our approach are as follows:
\begin{enumerate}[label=(\roman*),leftmargin=0.95cm]
\item The approach is not restricted to a specific time-to-event model. It rather works in a general counting process framework which nests a wide range of popular models in time-to-event analysis. 
\item The approach does not presuppose any knowledge about the piecewise constant structure of the hazard. In particular, the location of change points, their number and the function values of the hazard are unknown.
\item The approach produces parsimonious solutions, i.e., piecewise constant reconstructions of the hazard with only few parameters.
\item The approach is fully daten-driven: there are no free tuning parameters that need to be chosen in an adhoc fashion.  
\end{enumerate}
We are not aware of any other established technique that comes with all these advantages and is underpinned by comprehensive theory.

Our fused-lasso-based estimation procedure is developed step by step in Section \ref{sec:estimation} and backed up by theory in Section \ref{sec:theory}. Implementation details are dealt with in Section~\ref{sec:impl}. The methods and theory are complemented by a simulation study in Section \ref{sec:sim} and a real data example in Section \ref{sec:app}. The data example considers a piecewise constant parametrization of the three-variate hazard measure in an illness-death multistate model, which is used to jointly model so-called progression-free and overall survival, major time-to-event outcomes in oncology. The piecewise constant fit is judged by comparison against standard nonparametric Kaplan-Meier estimators and may, e.g., be used for planning of randomised controlled trials and evaluation of operational trial characteristics.

\section{Model setting}\label{sec:model}

In its most general form, our model can be formulated as follows: We observe an $n$-dimensional counting process $N_{1:n} = \{N_{1:n}(t): t \in [0,\tau]\}$ on a probability space $(\Omega,\mathcal{F},\pr)$. The process $N_{1:n}$ is adapted to the filtration $\{\mathcal{F}_t\}$, where $\mathcal{F}_t$ is the $\sigma$-algebra generated by the observed data up to time $t$. Hence, $\mathcal{F}_t$ represents the information available up to time $t$. Writing $N_{1:n} = (N_1,\ldots,N_n)$, we regard $N_i = \{ N_i(t): t \in [0,\tau]\}$ as the counting process of the $i$-th observed subject in a sample of size $n$. The Doob-Meyer decomposition yields that for each $i$, 
$N_i(t) = \Lambda_i(t) + M_i(t)$, 
where $\Lambda_i$ is a unique $\{\mathcal{F}_t\}$-predictable process and $M_i$ is a zero-mean local martingale with respect to $\{\mathcal{F}_t\}$. Assuming that $\Lambda_i$ is absolutely continuous, we can write $\Lambda_i(t) = \int_0^t \lambda_i(s) ds$ with some $\{\mathcal{F}_t\}$-predictable process $\lambda_i$. Plugging this into the Doob-Meyer decomposition gives
\begin{equation}\label{eq:Doob-Meyer-Ni}
N_i(t) = \int_0^t \lambda_i(s) ds + M_i(t),
\end{equation}  
where $\lambda_i$ is usually referred to as the intensity process of $N_i$. As common in the literature \citep[cp.][Chapter 3.1.2]{Aalen2008}, we assume that $\lambda_i$ has the multiplicative structure
\begin{equation}\label{eq:multiplicative-intensity-model}
\lambda_i(t) = Z_i(t,\bs{\beta}) \haz(t),
\end{equation}  
where $\haz$ is a non-negative deterministic function and $Z_i$ is an $\{\mathcal{F}_t\}$-predictable process that may depend on certain unknown parameters $\bs{\beta} \in \reals^d$ which need to be estimated from the data. The multiplicative structure \eqref{eq:multiplicative-intensity-model} arises very naturally in many popular models in time-to-event analysis. Some leading examples are discussed below. In summary, we consider the general model
\begin{equation}\label{eq:model-i}
N_i(t) = \int_0^t Z_i(s,\bs{\beta}) \haz(s) ds + M_i(t)
\end{equation}  
for each subject $i$ and the corresponding cumulative model
\begin{equation}\label{eq:model-cum}
\widebar{N}(t) = \int_0^t \widebar{Z}(s,\bs{\beta}) \haz(s) ds + \widebar{M}(t),
\end{equation}  
where $\widebar{N}(t) = \sum_{i=1}^n N_i(t)$, $\widebar{Z}(t,\bs{\beta}) = \sum_{i=1}^n Z_i(t,\bs{\beta})$ and $\widebar{M}(t) = \sum_{i=1}^n M_i(t)$.
Notably, the processes $\widebar{N}$, $\widebar{Z}$ and $\widebar{M}$ depend on the sample size $n$. However, for ease of notation, we suppress this dependence throughout the paper.

The central statistical object in model \eqref{eq:model-cum} besides the parameter vector $\bs{\beta}$ is the function $\haz: [0,\tau] \to \reals_{\ge 0}$. In this paper, we assume that $\haz$ has a piecewise constant structure: there are $K$ time points $0 < \tau_1 < \ldots < \tau_K < \tau$ such that
\begin{equation}\label{eq:pc-hazard}
\haz(t) =
\begin{cases}
\mathfrak{a}_1 & \text{for } t \in [0,\tau_1) \\
\mathfrak{a}_2 & \text{for } t \in [\tau_1,\tau_2) \\
\ \vdots & \\
\mathfrak{a}_{K+1} & \text{for } t \in [\tau_K,\tau], 
\end{cases}
\end{equation}
where $\mathfrak{a}_1,\ldots,\mathfrak{a}_{K+1}$ are non-negative constants. The jump locations $\tau_1,\ldots,\tau_K$, their number $K$ as well as the function values $\mathfrak{a}_1,\ldots,\mathfrak{a}_{K+1}$ are unknown. Our main goal is to construct an estimator of the piecewise constant function $\haz$.

\begin{remark-nonumber}
Most probably, the assumption that the function $\haz$ is piecewise constant does not hold exactly but only approximately in practice. We could reflect this in our methods and theory by allowing $\haz$ to be a general function which can be approximated sufficiently well by a sequence of piecewise constant functions $\haz_n$ with $K_n$ jumps (where $K_n$ grows with the sample size $n$ and the approximation quality gets better with increasing $n$). It is possible to adapt our theory to this more general setting. However, as this would make the already quite intricate formulation of our results and proofs even more involved, we have decided to stick with the simpler assumption that $\haz$ is exactly piecewise constant.     
\end{remark-nonumber}

\begin{remark-nonumber}
In some applications, it may be desirable to allow the jump locations $\tau_1,\ldots$ $\ldots,\tau_K$ to depend on covariates. Within our framework, however, this cannot be accommodated as the baseline hazard is multiplicatively separated from the covariate effects.
\end{remark-nonumber}

We now discuss some special cases of the general framework introduced above. In all considered settings, we assume that the data are independent and identically distributed (i.i.d.) across units $i$.

\begin{setting}\label{settingA}
To start with, we consider a simple survival model. Let $T_i^*$ denote the survival time of subject $i$ and assume that $T_i^*$ has a density $f^*$. Our goal is to estimate the density $f^*$ or the associated distribution function $F^*(t) = \int_0^ t f^*(s) ds$. The situation is complicated by the fact that we do not fully observe the survival times $T_i^*$ but only right-censored versions of them. More specifically, we observe the variables $T_i = T_i^* \land C_i$, where $\land$ denotes the minimum and $C_i$ is the censoring time, together with the variables $\delta_i = \ind(T_i^* \le C_i)$, which indicate whether there is censoring or not. The data sample thus has the form $\{ (T_i,\delta_i): i=1,\ldots,n \}$. \newline  
For theoretical analysis, a counting process formulation of the model is commonly used. In particular, we consider the counting processes $N_i$ defined by 
\[ N_i(t) = \ind(T_i \le t, \delta_i = 1) \]
for $i=1,\ldots,n$. Under standard conditions discussed below, the model equations \eqref{eq:model-i} and \eqref{eq:model-cum} arise with 
\begin{equation}\label{eq:hazard-rate}
\haz(t) := \frac{f^*(t)}{1 - F^*(t)} = \lim_{\delta \searrow 0} \frac{1}{\delta} \, \pr(t \le T_i^* < t+\delta \, | \, T_i^* \ge t)
\end{equation}  
and $Z_i(t) := \ind(T_i \ge t)$ (which does not depend on any unknown parameters $\bs{\beta}$, that is, $Z_i(t,\bs{\beta}) = Z_i(t)$). Hence, the function $\haz$ is identical to the hazard rate in the model. The hazard rate contains all the information we need. In particular, $f^*$ and $F^*$ can be retrieved from $\haz$ by the formulas $f^*(t) = \haz(t) \exp(- \int_0^t \haz(s) ds)$ and $F^*(t) = 1 - \exp(- \int_0^t \haz(s) ds)$. \newline
The key assumption to guarantee \eqref{eq:hazard-rate} -- i.e., to guarantee that the function $\haz$ is identical to the hazard parameter of interest -- is called \emph{independent censoring} in the counting process literature; see e.g.\ \cite{AndersenGill1993} or \cite{MartinussenScheike2006}. A special case is the stronger assumption of \emph{random censoring} where $T_i^*$ and $C_i$ are supposed to be stochastically independent. 
%Technical details: We observe the multivariate counting process $N_{1:n} = (N_1,\ldots,N_n)$ on the probability space $(\Omega,\mathcal{F},\pr)$. Model equations \eqref{eq:model-i} and \eqref{eq:model-cum} hold (with the above specified quantities) w.r.t.\ the filtration $\{\mathcal{F}_t\}$ generated by the observed data, which is given by $\mathcal{F}_t = \sigma(N_1(s),Z_1(s),\ldots,N_n(s),Z_n(s): 0 \le s \le t)$. 
\end{setting}

\begin{setting}\label{settingB}
We now extend Setting \ref{settingA} by incorporating covariates. There are various ways to do so. Presumably the most prominent way proposed by Cox leads to the proportional hazards model \citep{Cox1972}. Let the variables $(T_i,\delta_i)$ be defined as above and assume that we additionally observe a vector of $d$ covariates $\bs{W}_i = (W_{i1},\ldots,W_{id})^\top$, which are time-independent for simplicity. The data sample is thus given by $\{(T_i,\delta_i,\bs{W}_i): i=1,\ldots,n \}$. Notably, we do not assume the number of covariates $d$ to be small relative to the sample size $n$. We rather allow for high-dimensional (sparse) cases where $d$ is potentially much larger than $n$. A precise description of the high-dimensional setting under consideration is provided in Section \ref{subsec:theory-settings}. As in Setting \ref{settingA}, we consider the counting processes
\[ N_i(t) = \ind(T_i \le t, \delta_i = 1) \]
for $i=1,\ldots,n$. In the Cox model, the intensity process $\lambda_i$ of $N_i$ is assumed to have the form
\[ \lambda_i(t) = \ind(T_i \ge t) \exp(\bs{\beta}^ \top \bs{W}_i) \haz(t). \]
Setting $Z_i(t,\bs{\beta}) = \ind(T_i \ge t) \exp(\bs{\beta}^ \top \bs{W}_i)$, the process $\lambda_i$ thus has the multiplicative structure \eqref{eq:multiplicative-intensity-model}, implying that model equations \eqref{eq:model-i} and \eqref{eq:model-cum} are satisfied. As in Setting \ref{settingA}, we assume the right-censoring to be independent. A special case is once again random censoring, meaning that $T_i^*$ and~$C_i$ are stochastically independent given the covariates. Under the assumption of independent censoring, the conditional hazard 
\begin{equation}\label{eq:cond-hazard-rate}
\haz(t \, | \, \bs{w}) := \lim_{\delta \searrow 0} \frac{1}{\delta} \, \pr(t \le T_i^* < t+\delta \, | \, T_i^* \ge t, \bs{W}_i=\bs{w})
\end{equation}  
is identical to $\haz(t \, | \, \bs{w}) = \exp(\bs{\beta}^ \top \bs{w}) \haz(t)$ and $\haz$ plays the role of a baseline hazard in the model. 
%Technical details: We observe the multivariate counting process $N_{1:n} = (N_1,\ldots,N_n)$ on the probability space $(\Omega,\mathcal{F},\pr)$. Model equations \eqref{eq:model-i} and \eqref{eq:model-cum} hold (with the above specified quantities) w.r.t.\ the filtration $\{\mathcal{F}_t\}$ generated by the observed data, which is given by $\mathcal{F}_t = \sigma(N_1(s),\ind(T_1 \ge s),\ldots,N_n(s),\ind(T_n \ge s): 0 \le s \le t) \lor \sigma(\bs{W}_1,\ldots,\bs{W}_n)$. 
\end{setting}

\begin{setting}\label{settingC}
In Settings \ref{settingA} and \ref{settingB}, the observed subjects $i$ only face one risk, e.g., the risk of dying when $T_i^*$ is time-to-death. The competing risks model, in contrast, allows to incorporate different (competing) risks. Technically speaking, the survival times~$T_i^*$ have associated marks~$\varepsilon_i\in\{1, 2, \dots R\}$ which correspond to the $R$ possible risks in the model. For instance, if~$T_i^*$ is time-to-death, there are $R$ causes of death such as cardiovascular death ($\varepsilon_i = 1$), cancer death ($\varepsilon_i = 2$), and so on. \newline
As in Settings \ref{settingA} and \ref{settingB}, we only observe a right-censored version $T_i = T_i^* \land C_i$ of the survival time $T_i^*$, where $C_i$ denotes the censoring time. In addition, we allow for left-truncation in the model, where the left-truncation time $L_i$ is assumed to be strictly smaller than $C_i$ with probability $1$. The time $L_i$ is usually interpreted as delayed study entry, a common phenomenon in observational data. This means that subject $i$ only becomes observable after time $L_i$ \textit{provided that} $T_i^* > L_i$. Thus, in the presence of left-truncation, the data are sampled from a conditional probability distribution, in particular, from the distribution given that study entry occurs. Defining $\delta_i$ as before and allowing for covariates $\bs{W}_i$ as in Setting~\ref{settingB}, the observed data sample has the form $\{(T_i,\delta_i \cdot \varepsilon_i, L_i, \bs{W}_i): i=1,\ldots,n\}$. \newline
We now treat each event $r \in \{1, 2, \ldots R\}$ separately. We thus fix $r$ and consider the event-specific counting processes $N_{ir}$ for $i=1,\ldots,n$ defined by
\[ N_{ir}(t) = \ind(L_i < T_i \le t, \delta_i\cdot\varepsilon_i = r). \]
As in Setting \ref{settingB}, we assume a Cox model such that the intensity $\lambda_{ir}$ of $N_{ir}$ has the form 
\begin{displaymath}
\lambda_{ir}(t) = \ind(L_i< t\le T_i) \exp(\bs{\beta}_r^\top \bs{W}_i) \haz_r(t),
\end{displaymath}
where $\bs{\beta}_r$ is a vector of event-specific regression coefficients. For given $r$, the $n$-dimen\-sio\-nal counting process with the components $N_{ir}$ for $i=1,\ldots,n$ thus satisfies the general model equations \eqref{eq:model-i} and \eqref{eq:model-cum}. Under the assumption of independent right-censoring and independent left-truncation, the event-specific conditional hazard 
\begin{displaymath}
\haz_r(t \, | \, \bs{w}) := \lim_{\delta \searrow 0} \frac{1}{\delta} \, \pr\big(t \le T_i^* < t+\delta, \varepsilon_i=r \, \big| \, T_i^* \ge t, \bs{W}_i=\bs{w}\big)
\end{displaymath}
is identical to $\haz_r(t \, | \, \bs{w}) = \exp(\bs{\beta}_r^\top \bs{w})\haz_r(t)$ and the function $\haz_r$ has the interpretation of an event-specific baseline hazard. 
%Technical details: For given $r$, we observe the multivariate counting process $N_{1:n,r} = (N_{1r},\ldots,N_{nr})$ on the probability space $(\Omega,\mathcal{F},\pr^B)$, where $\pr^B$ is the conditional probability measure given study entry defined by $\pr^{B}(A) = \pr(A \cap B)/\pr(B)$ for all $A \in \mathcal{F}$ with $B = \bigcap_{i=1}^n \{T_i^* > L_i\}$. The unconditional measure $\pr$ is thus replaced by the conditional measure $\pr^B$. Model equations \eqref{eq:model-i} and \eqref{eq:model-cum} hold (with the above specified quantities) w.r.t.\ the filtration $\{\mathcal{F}_t\}$ generated by the observed data, which is given by $\mathcal{F}_t = \sigma(N_{1r}(s),\ind(L_1 < s \le T_1),\ldots,N_{nr}(s),\ind(L_n <s \le  T_n): 0 \le s \le t, \, 1 \le r \le R) \lor \sigma(\bs{W}_1,\ldots,\bs{W}_n)$. Assuming that the data vectors $(T_i,\delta_i \cdot \varepsilon_i, L_i, \bs{W}_i)$ are independent across $i$ w.r.t.\ to the unconditional measure $\pr$, we also get that they are independent w.r.t\ the conditional measure $\pr^B$.
\end{setting}

\begin{setting}\label{settingD}
As a final setting, we consider a general multi-state model where we observe a finite-state Markov process for each subject $i=1,\ldots,n$. More specifically, we observe a nonhomogeneous, time-continuous Markov process $X_i$ with state space~$\mathcal{R} = \{0, 1, 2, \ldots R\}$, right-continuous sample paths and, for ease of presentation, $\pr(X_i(0) = 0) = 1$ for each $i$. A generalization to a non-degenerate initial distribution is immediate by conditioning on the initial states \citep[][Section~IV.4]{AndersenGill1993}. \newline
We treat each direct $\ell \to m$ transition with $\ell, m \in \mathcal{R}, \ell \neq m$, separately. For a given $\ell \to m$ transition, we define the counting process $N_{i,\ell \to m}$ for $i=1,\ldots,n$ by
\[ N_{i,\ell \to m}(t) = \# \big\{s \le t:  X_i(s-) = \ell \text{ and } X_i(s) = m \big\}, \]
which counts the number of direct $\ell \to m$ transitions in the time interval $[0,t]$. By Theorem II.6.8 in \cite{AndersenGill1993}, the process $N_{i,\ell \to m}$ has the decomposition
\[ N_{i,\ell \to m}(t) = \int_0^t \lambda_{i,\ell \to m} (s) ds + M_{i,\ell \to m}(t), \]
where $M_{i,\ell \to m}$ is a zero-mean local martingale with respect to the filtration $\{\mathcal{F}_t\}$ generated by the observed data and the intensity process $\lambda_{i,\ell \to m}$ has the form $\lambda_{i,\ell \to m}(s) = Z_{i,\ell}(s) \haz_{\ell \to m}(s)$ with $Z_{i,\ell}(s) = \ind(X_i(s-) = \ell)$ and the transition hazard
\[ \haz_{\ell \to m}(t) = \lim_{\delta \searrow 0} \frac{1}{\delta} \, \pr \big( X_i(t+\delta)=m \, \big| \, X_i(t-)=\ell \big), \]
assuming that the limit in the above display exists. Hence, the processes $N_{i,\ell \to m}$ satisfy model equations \eqref{eq:model-i} and \eqref{eq:model-cum}. Right-censoring and left-truncation can be incorporated as additional states in the model \citep[see Example III.3.3 in][for further details]{AndersenGill1993}. Moreover, covariates can be incorporated, for example, by imposing a Cox model on the intensity process $\lambda_{i,\ell \to m}$ as in Settings \ref{settingB} and \ref{settingC}. For simplicity, we have however ignored covariates in our exposition. \newline
Notably, Settings \ref{settingA}--\ref{settingC} can be formulated as special cases of the multi-state framework just described (with covariates added). Another popular multi-state model in applied survival analysis is the so-called illness-death model without recovery. This model is of major interest in oncology and underlies our application example in Section \ref{sec:app}, where the model is introduced in detail.  
\end{setting}

\enlargethispage{0.25cm}
\section{Estimation strategy}\label{sec:estimation}

We now describe how to construct an estimator of the piecewise constant hazard $\haz: [0,\tau] \to \infty$. In the presence of left-truncation and right-censoring, the data usually become very sparse close to the end points of the interval $[0,\tau]$. Consider for instance a hypothetical clinical study with $0$ and $\tau$ being the starting and closing date of the study, respectively. In such a situation, there are usually very few patients with event times close to $\tau$. In case of delayed study entry, this issue arises (most probably) close to $0$ as well. Hence, in the presence of left-truncation and right-censoring, reliable estimation of $\haz$ is only possible on a subinterval $[\tau_{\min},\tau_{\max}]$ of $[0,\tau]$. Notably, this is not a shortcoming of our methodology. It is rather an issue any (nonparametric) approach has to deal with. For example, it is well-known that in the presence of right-censoring, the Kaplan-Meier estimator becomes very unreliable towards the right-end point $\tau$. Usually, it levels out and the associated confidence intervals become huge, reflecting the sparsity of the data close to $\tau$.

%We now describe how to construct an estimator of the piecewise constant hazard function $\haz: [0,\tau] \to \infty$. In the presence of left-truncation and right-censoring, the data become very sparse close to the end points of the interval $[0,\tau]$. Hence, reliable estimation of $\haz$ is only possible on a subinterval $[\tau_{\min},\tau_{\max}]$ of $[0,\tau]$. Notably, this is not a shortcoming of our methodology. It is rather an issue any (nonparametric) approach has to deal with. For example, it is well-known that in the presence of right-censoring, the Kaplan-Meier estimator becomes very unreliable towards the right-end point $\tau$. Usually, it levels out and the associated confidence intervals become huge, reflecting the sparsity of the data close to $\tau$. 

In what follows, we thus restrict attention to estimation of $\haz$ on a suitable subinterval $[\tau_{\min}, \tau_{\max}]$. If there is only right-censoring, we can set $\tau_{\min} = 0$ and choose $\tau_{\max}$ somewhat smaller than $\tau$, leaving us with the interval $[0,\tau_{\max}]$. If there is left-truncation as well, we also need to choose $\tau_{\min}$ somewhat larger than $0$. For convenience, we assume that the interval $[\tau_{\min},\tau_{\max}]$ (i) has length $1$ and (ii) is so large that all change points of $\haz$ lie in its interior, that is, $\tau_k \in (\tau_{\min}, \tau_{\max})$ for all $k=1,\ldots,K$. (i) can always be achieved by re-normalizing the data and (ii) is imposed solely to avoid additional notation specifying which change points lie in $(\tau_{\min},\tau_{\max})$ and which do not. 
In practice, the choice of the estimation interval $[\tau_{\min},\tau_{\max}]$ may (but need not) be guided by expert knowledge. For example, experts may have prior knowledge that change points can only occur within a specific time window $[\tau^*_{\min},\tau^*_{\max}]$ with $0 < \tau^*_{\min} <\tau^*_{\max} < \tau$. In this case, it is natural to choose the estimation window $[\tau_{\min}, \tau_{\max}]$ equal to (or slightly larger than) $[\tau_{\min}^*, \tau_{\max}^*]$.

We now turn to the construction of our estimator of $\haz$. We proceed in several steps:

\textbf{Step 1.} To start with, we require an estimator of the cumulative hazard function $A^*(t) = \int_0^t \haz(s) ds$. We work with the standard Breslow estimator which is constructed as follows: By \eqref{eq:model-cum}, it holds that $d\widebar{N}(t) = \haz(t) \widebar{Z}(t,\bs{\beta}) dt + d\widebar{M}(t)$. With $J(t,\bs{\beta}) = \ind(\widebar{Z}(t,\bs{\beta}) > 0)$ and the convention that $0/0 := 0$, we thus get that
%\[ \frac{J(t,\bs{\beta})}{\widebar{Z}(t,\bs{\beta})} d\widebar{N}(t) = J(t,\bs{\beta}) \haz(t) dt + \frac{J(t,\bs{\beta})}{\widebar{Z}(t,\bs{\beta})} d\widebar{M}(t), \]
%which in turn yields that
\begin{equation}\label{eq:Breslow1}
\int_0^t \frac{J(s,\bs{\beta})}{\widebar{Z}(s,\bs{\beta})} d\widebar{N}(s) = \int_0^t J(s,\bs{\beta}) \haz(s) ds + \int_0^t \frac{J(s,\bs{\beta})}{\widebar{Z}(s,\bs{\beta})} d\widebar{M}(s).
\end{equation}  
Now suppose we have an estimator $\hat{\bs{\beta}}$ of the parameter vector $\bs{\beta} \in \reals^d$ at hand. How to construct such an estimator depends on the specific setting under consideration. In many cases, it is possible to use a standard estimator off the shelf. In the Cox model of Settings \ref{settingB} and \ref{settingC}, for instance, $\hat{\bs{\beta}}$ can be chosen to be the partial likelihood estimator of $\bs{\beta}$ or an $\ell_1$-penalized version of it. From \eqref{eq:Breslow1}, it follows that
\begin{equation}\label{eq:Breslow2}
\int_0^t \frac{J(s,\hat{\bs{\beta}})}{\widebar{Z}(s,\hat{\bs{\beta}})} d\widebar{N}(s) = \int_0^t \haz(s) ds + \Delta^J(t) + \Delta^\beta(t) + \eta(t),
\end{equation}  
where $A^*(t) = \int_0^t \haz(s) ds$ is the cumulative hazard and 
\begin{align*}
\Delta^J(t) & = \int_0^t (J(s,\bs{\beta})-1) \haz(s) ds \\ 
\Delta^\beta(t) & = \int_0^t \frac{J(s,\hat{\bs{\beta}})}{\widebar{Z}(s,\hat{\bs{\beta}})} d\widebar{N}(s) - \int_0^t \frac{J(s,\bs{\beta})}{\widebar{Z}(s,\bs{\beta})} d\widebar{N}(s) \\ 
\eta(t) & = \int_0^t \frac{J(s,\bs{\beta})}{\widebar{Z}(s,\bs{\beta})} d\widebar{M}(s).
\end{align*}
The left-hand side of \eqref{eq:Breslow2},
\[ \hat{A}(t) := \int_0^t \frac{J(s,\hat{\bs{\beta}})}{\widebar{Z}(s,\hat{\bs{\beta}})} d\widebar{N}(s), \]
is the Breslow estimator of $A^*(t)$. As integrating over a counting process is the same as summing the integrand over the jump times of the process, it can be written as
$\hat{A}(t) = \sum_{\{k: S_k \le t\}} J(S_k,\hat{\bs{\beta}}) / \widebar{Z}(S_k,\hat{\bs{\beta}})$, 
where $S_1 < S_2 < S_3 < \ldots$ are the jump times of $\widebar{N}$. According to \eqref{eq:Breslow2}, the distance between the Breslow estimator $\hat{A}(t)$ and the cumulative hazard $A^*(t)$ is the sum of three error terms: the bias term $\Delta^J(t)$, the approximation error $\Delta^\beta(t)$ which results from estimating $\bs{\beta}$ by $\hat{\bs{\beta}}$ and the term $\eta(t)$ which (by using standard theory for counting processes) can be shown to be a zero-mean local martingale.

\textbf{Step 2.} We next derive a simple regression model for the increments of the Breslow estimator $\hat{A}$. Cover the interval $[\tau_{\min},\tau_{\max}]$ with an equidistant grid of step length $1/n$. The grid points are given by $t_j = \tau_{\min} + j/n$ for $j=0,\ldots,n$ (taking into account that the length of $[\tau_{\min},\tau_{\max}]$ is $1$). Now consider the increments
\[ Y_j := \frac{\hat{A}(t_j) - \hat{A}(t_{j-1})}{t_j - t_{j-1}} = n \{\hat{A}(t_j) - \hat{A}(t_{j-1})\}. \]
Using \eqref{eq:Breslow2}, we immediately obtain that 
\begin{equation}\label{eq:signal-plus-noise-model}
Y_j = \haz_j + u_j \quad \text{with} \quad u_j = \Delta_j^\alpha + \Delta_j^J + \Delta_j^\beta + \eta_j
\end{equation}  
and $\haz_j = \haz(t_j)$ for $j=1,\ldots,n$, where 
\begin{align*}
\Delta_j^\alpha & = \frac{\int_{t_{j-1}}^{t_j} \haz(s) ds}{t_j - t_{j-1}} - \haz(t_j) \\
\Delta_j^\ell & = \frac{\Delta^\ell(t_j) - \Delta^\ell(t_{j-1})}{t_j - t_{j-1}} \quad \text{for } \ell \in \{J,\beta\} \\
\eta_j & = \frac{\eta(t_j) - \eta(t_{j-1})}{t_j - t_{j-1}}. 
\end{align*}
Hence, the increments $Y_j$ follow a simple signal-plus-noise model, where the signal $\haz_j$ is the hazard rate $\haz(t_j)$ at time point $t_j = \tau_{\min} + j/n$ and $u_j$ is the noise term. Put differently, the signal vector $\hazvec =(\haz_1,\ldots,\haz_n)^\top$ in model \eqref{eq:signal-plus-noise-model} is identical to the discretized hazard rate $\hazvec = (\haz(t_1),\ldots,\haz(t_n))^\top$. Notably, the variables $Y_j$, $\haz_j$ and $u_j$ in model \eqref{eq:signal-plus-noise-model} depend on $n$. A more precise model formulation thus reads $Y_{j,n} = \haz_{j,n} + u_{j,n}$ with $\haz_{j,n} = \haz(\tau_{\min} + j/n)$. For ease of notation, we however suppress the dependence of $Y_j$, $\haz_j$ and $u_j$ on $n$ in what follows.

\textbf{Step 3.} By assumption, the discretized hazard rate $\hazvec = (\haz(t_1),\ldots,\haz(t_n))^\top$ is a piecewise constant vector. In order to estimate $\hazvec$ in model \eqref{eq:signal-plus-noise-model}, we may thus use techniques to recover a piecewise constant signal in a regression model. One such technique is the fused lasso. Applying the fused lasso to the constructed data sample $\{Y_1,\ldots,Y_n\}$ yields the estimator $\hat{\bs{\alpha}}_\lambda = (\hat{\alpha}_{\lambda,1},\ldots,\hat{\alpha}_{\lambda,n})^\top$ defined by
\begin{equation}\label{eq:fused-lasso}
\hat{\bs{\alpha}}_\lambda \in \underset{\bs{a} \in \reals^n}{\textnormal{argmin}} \bigg\{ \frac{1}{n} \sum_{j=1}^n (Y_j - a_j)^2 + \lambda \sum_{j=2}^n |a_j - a_{j-1}| \bigg\},
\end{equation} 
where $\lambda > 0$ is a tuning parameter. An estimator of the hazard rate $\haz$ on the interval $[\tau_{\min},\tau_{\max}]$ is immediately obtained from $\hat{\bs{\alpha}}_\lambda$ by constant interpolation. Specifically, we define the estimator $\hat{\alpha}_\lambda: [\tau_{\min},\tau_{\max}] \to \reals$ by setting  
\begin{equation}\label{eq:fused-lasso-interpolated} 
\hat{\alpha}_\lambda(t) =
\begin{cases}
\hat{\alpha}_{\lambda,1} & \text{for } t \in [t_0,t_1) \\
\hat{\alpha}_{\lambda,j} & \text{for } t \in [t_j, t_{j+1}) \text{ and } j \in \{1,\ldots,n-1\} \\
\hat{\alpha}_{\lambda,n} & \text{for } t = t_n
\end{cases}
\end{equation}
with $t_j = \tau_{\min} + j/n$.

\section{Theoretical results}\label{sec:theory}

In what follows, we investigate the estimator $\hat{\bs{\alpha}}_\lambda$ (and thus also implicitly the interpolated version $\hat{\alpha}_\lambda$) from a theoretical point of view. More specifically, we derive results on the convergence rate of $\hat{\bs{\alpha}}_\lambda$ as well as results on how well the change points of the hazard rate $\haz$ are approximated by those of $\hat{\bs{\alpha}}_\lambda$. Sections \ref{subsec:ass} and \ref{subsec:theory-general} provide the theory for the general counting process model of Section \ref{sec:model}. In Section \ref{subsec:theory-settings}, we verify that the conditions required for the general theory are satisfied in Settings \ref{settingA}--\ref{settingD} under standard assumptions. All proofs are relegated to the Appendix. 
Asymptotic statements are always to be understood in the sense that the sample size $n$ tends to infinity. Notably, the processes $\widebar{N}$, $\widebar{M}$ and $\widebar{Z}$ as well as the processes derived from them (in particular, $J$, $\Delta^J$, $\Delta^\beta$ and $\eta$) and the elements of the regression model \eqref{eq:signal-plus-noise-model} (namely, $Y_j$, $\haz_j$ and $u_j$) all depend on the sample size $n$. To keep the notation concise, we however suppress their dependence on $n$ throughout.

\subsection{Assumptions}\label{subsec:ass}

We impose the following assumptions on the general model from Section \ref{sec:model}:
\begin{enumerate}[label=(C\arabic*),leftmargin=1cm]
 
\item \label{C0} The filtered probability space $(\Omega,\mathcal{F},\{\mathcal{F}_t\}_{t \ge 0}, \pr)$ satisfies the ``usual conditions''.
%Completeness needed for application of Lemma \ref{lemma:Rosenthal} from \cite{Wood1999}.
  
\item \label{C1} The counting processes $N_i$ are i.i.d.\ across $i$. 

\item \label{C2} For each $i$, the compensator $\Lambda_i$ of the process $N_i$ has the form $\Lambda_i(t) = \int_0^t \lambda_i(s) ds$ with some $\{\mathcal{F}_t\}$-predictable process $\lambda_i$. Moreover, the process $\lambda_i$ has the multiplicative structure \eqref{eq:multiplicative-intensity-model}, that is, $\lambda_i(t) = Z_i(t,\bs{\beta}) \haz(t)$ with a non-negative deterministic function $\haz$ and an $\{\mathcal{F}_t\}$-predictable process $Z_i(\cdot,\bs{\beta})$. 
  
\item \label{C3} The processes $\widebar{Z}(\cdot,\bs{\beta})$ and $J(\cdot,\bs{\beta})/\widebar{Z}(\cdot,\bs{\beta})$ are locally bounded for each $n$.
  
\item \label{C4} It holds that $\pr ( \inf_{t \in [\tau_{\min},\tau_{\max}]} J(t,\bs{\beta}) = 0 ) = o(1)$. 

\item \label{C5} For some natural number $\nu > 4$, 
\[  \ex \left[ \bigg\| \frac{J(\cdot,\bs{\beta})}{\widebar{Z}(\cdot,\bs{\beta})} \bigg\|_{\infty}^{\nu} \right] = \ex \left[ \sup_{t \in [\tau_{\min},\tau_{\max}]} \bigg| \frac{J(t,\bs{\beta})}{\widebar{Z}(t,\bs{\beta})} \bigg|^{\nu} \right] \le \frac{C_{\infty,\nu}}{n^\nu}, \]
where $C_{\infty,\nu} < \infty$ is a fixed constant independent of $n$.

\item \label{C6} The approximation error $\Delta^\beta(t)$ in \eqref{eq:Breslow2} is such that 
\[ \max_{1 \le j \le n} \bigg| \frac{\Delta^\beta(t_j) - \Delta^\beta(t_{j-1})}{t_j - t_{j-1}} \bigg| = O_p(\rho_n), \]
where $\rho_n = o(n^{-\xi})$ for some arbitrarily small but fixed $\xi > 0$. 

\end{enumerate}
\ref{C0}--\ref{C3} are standard conditions in the context of statistical models based on counting processes. \ref{C4}--\ref{C6} are quite abstract high-order conditions. We show below that they are fulfilled in Settings \ref{settingA}--\ref{settingD} under common assumptions.

\subsection{General results}\label{subsec:theory-general}

Our first theoretical result bounds the distance between the estimator $\hat{\bs{\alpha}}_\lambda$ and the discretized hazard rate $\hazvec$. More specifically, it provides a bound on $|\hat{\alpha}_{\lambda,j} - \haz_j|$ for any $j \in \{1,\ldots,\ngrid\}$. To formulate the result, we use the following notation: For $k \in \{1,\ldots,K\}$, we let $n_k = \lceil (\tau_k - \tau_{\min}) \ngrid \rceil$ denote the index where the discretized hazard rate $\hazvec$ has the $k$-th jump. Put differently, $n_1 < \ldots < n_K$ are the $K$ indices with $\haz_{n_k-1} \ne \haz_{n_k}$. (Note that $1 < n_1$ and $n_K < n$ for sufficiently large $n$ because the change points of $\haz$ all lie in the interior of $[\tau_{\min},\tau_{\max}]$ by assumption.) In what follows, we call $n_1,\ldots,n_K$ the jump indices of the discretized hazard $\hazvec$. For notational convenience, we additionally set $\ngrid_0 = 1$ along with $\ngrid_{K+1} = \ngrid + 1$. For any $j$, we let $n_{k(j)}$ and $n_{k(j)+1}$ be the two jump indices with $n_{k(j)} \le j \le n_{k(j)+1}-1$ and define $d_j = \min\{j+1-n_{k(j)},n_{k(j)+1}-j\}$, which essentially gives the distance of $j$ to the nearest jump index. Finally, we let $r_{k(j)} = n_{k(j)+1} - n_{k(j)}$ be the length of the interval between the $k(j)$-th and the $(k(j)+1)$-th jump index. 
\begin{theorem}\label{theo:elementwise-bound} 
Let \ref{C0}--\ref{C6} be satisfied and let $\kappa_n = c_n \max \{ \ngrid^{2/\nu}, \sqrt{\ngrid} \rho_n \}$ with $\nu > 4$ from \ref{C5}, where $\{c_n\}$ is a slowly diverging sequence (e.g.\ $c_n = c_0 \log \log n$ with some constant $c_0 > 0$). Then for any positive value of the regularization parameter $\lambda$, the following holds with probability tending to $1$:
\[ |\hat{\alpha}_{\lambda,j} - \haz_j| \le \max \left\{ \frac{\kappa_n}{\sqrt{d_j}}, \frac{\kappa_n^2}{4\ngrid\lambda},\frac{2\ngrid\lambda}{r_{k(j)}} + \frac{2\kappa_n}{\sqrt{r_{k(j)}}} \right\} \quad \text{for } j \in \{1,\ldots,\ngrid\}. \]
\end{theorem}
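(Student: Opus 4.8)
The plan is to combine the Karush--Kuhn--Tucker (KKT) characterization of the fused lasso with a uniform control of the partial sums of the noise, and then to run a deterministic, localized geometric argument around the index $j$.

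\textbf{Step 1 (KKT / partial--sum identity).} First I would write down the subgradient optimality conditions for \eqref{eq:fused-lasso}. Setting $R_m = \sum_{i=1}^m (Y_i - \hat{\alpha}_{\lambda,i})$, the telescoping of the penalty subgradients gives
\[ R_m = -\tfrac{\ngrid \lambda}{2}\, s_{m+1}, \qquad s_{m+1}\in[-1,1], \]
where $s_{m+1}=\operatorname{sign}(\hat{\alpha}_{\lambda,m+1}-\hat{\alpha}_{\lambda,m})$ whenever $\hat{\alpha}_\lambda$ jumps between $m$ and $m+1$. This yields the two workhorse facts: $|R_m|\le \ngrid\lambda/2$ for all $m$, and $|R_m|=\ngrid\lambda/2$ with a \emph{known sign} at every jump of $\hat{\alpha}_\lambda$. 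Equivalently, the cumulative estimator is the taut string threaded through the tube of half-width $\ngrid\lambda/2$ about the cumulative data and $\hat{\alpha}_{\lambda,j}$ is its slope at $j$; I would use whichever formulation is cleaner.

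\textbf{Step 2 (uniform noise bound).} Next I would bound the partial sums of $u_i = \Delta_i^\alpha + \Delta_i^J + \Delta_i^\beta + \eta_i$ uniformly over all sub-intervals lying in a single constancy segment of $\haz$. On the event $\{\inf_{[\tau_{\min},\tau_{\max}]}J(\cdot,\bs{\beta})=1\}$, which has probability $1-o(1)$ by \ref{C4}, the term $\Delta_i^J$ contributes nothing to such sums; and since $\haz$ is piecewise constant, $\Delta_i^\alpha$ vanishes away from the $K$ true jump indices, so inside a segment only $\Delta_i^\beta$ and $\eta_i$ survive. The $\Delta^\beta$ part is handled crudely by \ref{C6}, namely $|\sum_{i=a}^b \Delta_i^\beta| \le (b-a+1)\,O_p(\rho_n)\le \kappa_n\sqrt{b-a+1}$ for $b-a+1\le \ngrid$, which is where the $\sqrt{\ngrid}\,\rho_n$ branch of $\kappa_n$ enters. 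The martingale part is the crux: writing $\sum_{i=a}^b \eta_i = \ngrid\{\eta(t_b)-\eta(t_{a-1})\}$ and using $\langle\eta\rangle(t_b)-\langle\eta\rangle(t_{a-1}) = \int_{t_{a-1}}^{t_b} (J/\widebar{Z})\,\haz\,ds \le \| J/\widebar{Z}\|_\infty\,\|\haz\|_\infty\,(b-a+1)/\ngrid$, a Burkholder--Davis--Gundy inequality together with the moment bound \ref{C5} gives $\ex|\ngrid\{\eta(t_b)-\eta(t_{a-1})\}|^{\nu}\le C\,(b-a+1)^{\nu/2}$. A Markov bound at level $\kappa_n\sqrt{b-a+1}$ and a union bound over the $O(\ngrid^2)$ sub-intervals cost $O(\ngrid^2/\kappa_n^{\nu})=O(c_n^{-\nu})=o(1)$, which explains both the $\ngrid^{2/\nu}$ factor and the slowly diverging $c_n$; the requirement $\nu>4$ ensures $\ngrid^{2/\nu}=o(\sqrt{\ngrid})$ so the rate is informative. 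The upshot is that, with probability tending to $1$, $|\sum_{i=a}^b u_i|\le \kappa_n\sqrt{b-a+1}$ simultaneously for all intervals $[a,b]$ inside one segment.

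\textbf{Step 3 (deterministic conversion to a pointwise bound).} Fix $j$, let $\mathfrak{a}=\haz_j$ be the value on the segment $S=\{n_{k(j)},\ldots,n_{k(j)+1}-1\}$, and note the exact identity $\sum_{i=a}^b(\hat{\alpha}_{\lambda,i}-\mathfrak{a}) = \sum_{i=a}^b u_i - (R_b - R_{a-1})$ for $[a,b]\subseteq S$, so each window average deviates from $\mathfrak{a}$ by at most $\kappa_n/\sqrt{b-a+1} + \ngrid\lambda/(b-a+1)$. The task is to upgrade such averages to a bound at the single point $j$, and I would case-split on the plateau $[p,q]\ni j$ of $\hat{\alpha}_\lambda$. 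If $[p,q]\subseteq S$ and its value is a local extremum, the signs of $s_p,s_{q+1}$ force $R_q - R_{p-1}=\pm\ngrid\lambda$; substituting and optimizing the resulting bound $|\hat{\alpha}_{\lambda,j}-\mathfrak{a}|\le \kappa_n/\sqrt{L} - \ngrid\lambda/L$ over the unknown plateau length $L$ produces the second term $\kappa_n^2/(4\ngrid\lambda)$. If instead the relevant excursion of $\hat{\alpha}_\lambda$ around $j$ reaches the boundary of $S$, the value can only be pinned against the segment endpoints, yielding the segment-length term $2\ngrid\lambda/r_{k(j)} + 2\kappa_n/\sqrt{r_{k(j)}}$. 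The remaining generic configuration uses a window of length $\asymp d_j$ between $j$ and the nearer endpoint and gives $\kappa_n/\sqrt{d_j}$; taking the worst case yields the stated maximum.

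I expect Step~3 to be the main obstacle. The probabilistic input controls only \emph{averages} of $\hat{\alpha}_\lambda-\mathfrak{a}$ over windows, whereas the claim is pointwise, and closing this gap requires careful use of the sign information attached to the jumps of $\hat{\alpha}_\lambda$ (equivalently, the taut-string geometry) to exclude a large localized excursion at $j$ that is invisible to the averages. Getting all plateau and boundary sign configurations to align, and identifying which of the three regimes applies with the correct constants, is the delicate part; by contrast, Steps~1 and~2 are routine once the martingale maximal inequality under only $\nu$ finite moments is in place.
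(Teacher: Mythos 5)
Your proposal is correct and follows the same two-step architecture as the paper: (i) a high-probability uniform bound $|\sum_{i=a}^b u_i|\le \kappa_n\sqrt{b-a+1}$ on normalized partial sums of the noise, obtained exactly as in the paper's Proposition \ref{prop:Rosenthal-bound-error-partial-sum} (the four-term split of $u_j$, with $\Delta^J$ killed via \ref{C4}, $\Delta^\alpha$ trivial on constancy segments, $\Delta^\beta$ via \ref{C6}, and the martingale part via a $\nu$-th moment inequality, Markov, and a union bound over $O(n^2)$ intervals — your accounting of where $n^{2/\nu}$, $\sqrt{n}\rho_n$ and $c_n$ enter is precisely the paper's); and (ii) a deterministic lemma converting this into the elementwise bound. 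Where you differ is in how (ii) is argued: the paper reproduces the perturbation argument of Zhang (2019/2023), perturbing the solution by $\varepsilon$ on a plateau, classifying the resulting solution patterns (A/B/C) on each segment with correctly specified endpoints, and reading off the three terms of the maximum; you instead invoke the KKT/taut-string characterization $R_m=-\tfrac{n\lambda}{2}s_{m+1}$ and run the case analysis on the plateau containing $j$ directly. These are equivalent encodings of the same first-order optimality information — indeed the paper's perturbation inequalities such as $0\ge\sum_{j=r_1}^{r_2}(\hat a_j-z_j)+\gamma$ are exactly your sign conditions at the plateau boundaries — and your optimization over the plateau length recovers $\kappa_n^2/(4n\lambda)$ by the same calculation the paper performs via $\tfrac{a}{x}-\tfrac{b}{x^2}\le\tfrac{a^2}{4b}$. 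The KKT route is arguably more transparent about where each of the three terms originates, while the paper's pattern classification makes the exhaustiveness of the case analysis explicit; your own caveat about aligning all plateau and boundary sign configurations is exactly the content of the paper's Lemmas \ref{lemma:theo3.4-Zhang2019} and \ref{prop:theo3.4-Zhang2019}. Two minor points: the martingale moment bound needs the Rosenthal-type inequality for stochastic integrals (with both the $\langle\cdot\rangle^{\nu/2}$ term and the $\int|H|^\nu\widebar{\lambda}$ term, as in Proposition \ref{prop:Rosenthal}) rather than plain BDG with the optional quadratic variation; and restricting the partial-sum bound to intervals inside a single segment is sufficient, as the deterministic step is applied segment by segment — this matches how the paper actually uses its (nominally global) bound.
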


The elementwise bound of Theorem \ref{theo:elementwise-bound} allows us to derive the following bound on the $\ell_2$-error $\norm{\hat{\bs{\alpha}}_\lambda - \hazvec}_2^2/n$.
\begin{theorem}\label{theo:l2-bound}
Let \ref{C0}--\ref{C6} be satisfied and choose $\lambda = \kappa_n / \sqrt{\ngrid}$, where $\kappa_n$ is defined as in Theorem \ref{theo:elementwise-bound}. Then 
\[ \frac{1}{n} \norm{\hat{\bs{\alpha}}_\lambda - \hazvec}_2^2 = O_p \Big( c_n^2 \log(n) \max\big\{  n^{-1 + \frac{4}{\nu}}, \rho_n^2 \big\} \Big). \]
\end{theorem}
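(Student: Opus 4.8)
The plan is to obtain the $\ell_2$-bound directly by squaring and summing the elementwise bound of Theorem \ref{theo:elementwise-bound}. That bound holds on an event $\mathcal{A}_n$ with $\pr(\mathcal{A}_n) \to 1$ and, given the deterministic sequences $\kappa_n$ and $\rho_n$, is itself deterministic. Hence it suffices to work on $\mathcal{A}_n$ and produce a purely deterministic bound on $\frac1n \norm{\hat{\bs{\alpha}}_\lambda - \hazvec}_2^2$; this then holds with probability tending to one and in particular in the $O_p$-sense claimed. On $\mathcal{A}_n$, using $\max\{a,b,c\}^2 \le a^2 + b^2 + c^2$, I would start from
\[
\frac1n\sum_{j=1}^n |\hat{\alpha}_{\lambda,j} - \haz_j|^2 \le \frac1n\sum_{j=1}^n\left[\frac{\kappa_n^2}{d_j} + \frac{\kappa_n^4}{16n^2\lambda^2} + \Big(\frac{2n\lambda}{r_{k(j)}} + \frac{2\kappa_n}{\sqrt{r_{k(j)}}}\Big)^2\right],
\]
so the task reduces to estimating three deterministic sums.

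The key structural fact is that, because the change points $\tau_1 < \cdots < \tau_K$ are fixed, distinct and lie in the interior of $[\tau_{\min},\tau_{\max}]$, the jump indices $n_k = \lceil(\tau_k - \tau_{\min})n\rceil$ are spaced so that every block length $r_k = n_{k+1} - n_k$ is bounded below by $c\,n$ for all large $n$, where $c>0$ depends only on the minimal gap between consecutive change points and the distances to the endpoints. Since $K$ is fixed, this renders the second and third sums lower order. Concretely, within block $k$ the distances $d_j$ run through $1,2,\dots$ up to roughly $r_k/2$ and back, so $\sum_{j\text{ in block }k} 1/d_j = O(\log r_k) = O(\log n)$, whence $\sum_{j=1}^n 1/d_j = O(K\log n) = O(\log n)$ and the first sum is $\frac{\kappa_n^2}{n}\,O(\log n)$, the dominant contribution. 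For the remaining sums I would substitute $\lambda = \kappa_n/\sqrt{n}$, giving $n^2\lambda^2 = n\kappa_n^2$ and thus $\kappa_n^4/(16n^2\lambda^2) = \kappa_n^2/(16n)$, so the second sum is $O(\kappa_n^2/n)$. For the third, $(a+b)^2 \le 2a^2 + 2b^2$ together with $\sum_{j=1}^n r_{k(j)}^{-2} = \sum_k r_k^{-1}$, $\sum_{j=1}^n r_{k(j)}^{-1} = K+1$, and the lower bound $r_k \ge cn$ shows it too is $O(\kappa_n^2/n)$. Both are smaller than the first sum by a factor $\log n$.

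Collecting the three pieces yields $\frac1n\norm{\hat{\bs{\alpha}}_\lambda - \hazvec}_2^2 = O(\kappa_n^2 \log(n)/n)$ on $\mathcal{A}_n$. Inserting $\kappa_n = c_n\max\{n^{2/\nu}, \sqrt{n}\rho_n\}$, so that $\kappa_n^2 = c_n^2 \max\{n^{4/\nu}, n\rho_n^2\}$, and dividing by $n$ reproduces exactly the claimed rate $c_n^2\log(n)\max\{n^{-1+4/\nu}, \rho_n^2\}$. I expect the main obstacle to be the bookkeeping rather than any single estimate: one must verify that the dominant contribution comes solely from the $\kappa_n/\sqrt{d_j}$ branch of the maximum near the jumps, which produces the harmonic (logarithmic) factor, while the branches that are constant across each block contribute only $O(\kappa_n^2/n)$. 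This, in turn, rests entirely on the elementary but essential fact that the block lengths $r_k$ grow linearly in $n$, which has to be made rigorous from the fixed-change-point assumption before any of the three sums can be controlled.
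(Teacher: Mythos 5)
Your proposal is correct and follows essentially the same route as the paper's proof: square and sum the elementwise bound of Theorem \ref{theo:elementwise-bound} blockwise, use $\sum_{m=1}^{r} m^{-1} = O(\log r)$ for the $\kappa_n/\sqrt{d_j}$ branch, exploit that each block length satisfies $r_k \asymp n$ (which the paper likewise invokes via $0 < c_r \le r_k/n \le C_r$), and substitute $\lambda = \kappa_n/\sqrt{n}$ so that $n\lambda^2 = \kappa_n^2$. The only cosmetic difference is that the paper keeps all five squared terms inside a single per-block sum before optimizing in $\lambda$, whereas you separate the three branches and treat each sum individually; the estimates and the resulting rate are identical.
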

We briefly gives some remarks on Theorem \ref{theo:l2-bound}:
\begin{enumerate}[label=(\alph*),leftmargin=0.7cm]
\item As we will see in our analysis of Settings \ref{settingA}--\ref{settingD}, a leading case is that $\nu$ can be chosen as large as desired and $\rho_n = O(\log(n)/\sqrt{n})$. In this case, Theorem \ref{theo:l2-bound} yields that
\[ \frac{1}{n} \norm{\hat{\bs{\alpha}}_\lambda - \hazvec}_2^2 = O_p \Big( \frac{1}{n^{1-\xi}} \Big) \]
with $\xi > 0$ arbitrarily small but fixed. This is a fast convergence rate not far from optimal \citep[see the discussion in][for details on sharp rates for the fused lasso]{LinSharpnackRinaldoTibshirani2017}, which suggests that the general rate established in Theorem \ref{theo:l2-bound} is quite sharp. 
\item Theorem \ref{theo:l2-bound} can alternatively be formulated in terms of the hazard function $\haz: [\tau_{\min},\tau_{\max}] \to \reals$ rather than the discretized hazard rate $\hazvec$. To do so, let $\hat{\alpha}_\lambda: [\tau_{\min},\tau_{\max}] \to \reals$ be the interpolated version of $\hat{\bs{\alpha}}_\lambda$ defined in \eqref{eq:fused-lasso-interpolated}. Then Theorem \ref{theo:l2-bound} and some straightforward calculations imply that 
 \[ \int_{\tau_{\min}}^{\tau_{\max}} \{\hat{\alpha}_\lambda(t) - \haz(t)\}^2 dt = O_p \Big( c_n^2 \log(n) \max\big\{  n^{-1 + \frac{4}{\nu}}, \rho_n^2 \big\} \Big). \]
\end{enumerate}

We next turn to change point estimation. The change points of the estimator $\hat{\bs{\alpha}}_\lambda$ are given by the set 
\[ \hat{\mathcal{S}}_\lambda = \left\{ \tau_{\min} + \frac{j}{n} \in [\tau_{\min},\tau_{\max}]: \hat{\alpha}_{\lambda,j-1} \neq \hat{\alpha}_{\lambda,j} \text{ for } j \in \{2,\ldots,n\} \right\}. \]
Our goal is to understand how well the change points $\mathcal{S}^* = \{\tau_1,\ldots,\tau_K\}$ of the hazard rate $\haz$ are approximated by those in $\hat{\mathcal{S}}_\lambda$. Informally speaking, we show that the following two statements hold with high probability:
\begin{enumerate}[label=(CP\arabic*),leftmargin=1.25cm]
\item \label{eq:change-point-res-I} For each change point in $\mathcal{S}^*$, there exists at least one change point in $\hat{\mathcal{S}}_\lambda$ ``close'' to it. 
\item \label{eq:change-point-res-II} Let $\hat{\tau} \in \hat{\mathcal{S}}_\lambda$ be a change point which is ``far away'' from any change point in $\mathcal{S}^*$. Then the jump size of $\hat{\bs{\alpha}}_\lambda$ at the change point $\hat{\tau}$ is ``small''. 
\end{enumerate}
According to \ref{eq:change-point-res-I}, $\hat{\bs{\alpha}}_\lambda$ has a jump close to each change point of $\haz$. In general, however, we cannot guarantee that it has exactly one jump close to each change point. This means that our fused lasso estimator may reconstruct a change point in $\haz$ by multiple jumps. Figure \ref{fig:illustration} gives an illustration. The black line is the true hazard rate $\haz$, which has a single change point at $0.5$, and the red line is (the interpolated version $\hat{\alpha}_\lambda$ of) the estimator $\hat{\bs{\alpha}}_\lambda$. As can be seen, $\hat{\bs{\alpha}}_\lambda$ has two jumps close to the time point $0.5$ which together reconstruct the change point of $\haz$. According to \ref{eq:change-point-res-II}, $\hat{\bs{\alpha}}_\lambda$ may have additional jumps far away from any change point of $\haz$. However, these additional jumps are negligible in the sense of being small. Figure \ref{fig:illustration} once again gives an illustration. The estimator $\hat{\bs{\alpha}}_\lambda$ (in red) has a jump between $0.8$ and $0.9$. Nevertheless, it yields a good overall reconstruction of the step function $\haz$ as this additional jump is small. In principle, it is possible to get rid of additional jumps far away from the true change points by applying post-processing methods to the fused lasso \citep[see e.g.\ the Haar wavelet based method in][]{LinSharpnackRinaldoTibshirani2017}. However, these methods are rather of theoretical than practical value as they depend on additional tuning parameters which are extremely hard to choose in practice. We thus do not consider any such methods. 
In summary, \ref{eq:change-point-res-I} and \ref{eq:change-point-res-II} imply that the estimator $\hat{\bs{\alpha}}_\lambda$ produces an accurate reconstruction of the piecewise constant hazard $\haz$, even though it may have more change points than $\haz$ in general. Hence, our method is a suitable tool for (parsimonious) fitting of piecewise constant hazard curves, which is the main goal in many applications in time-to-event analysis (rather than precise estimation and interpretation of location and number of change points). 
%\mv{(Discuss this in more detail? Make a better point that in most applications, an accurate reconstruction of the piecewise constant structure is the main goal rather than precise estimation of the location and number of change points?)}

\begin{figure}
\centering
\includegraphics[width=0.35\textwidth]{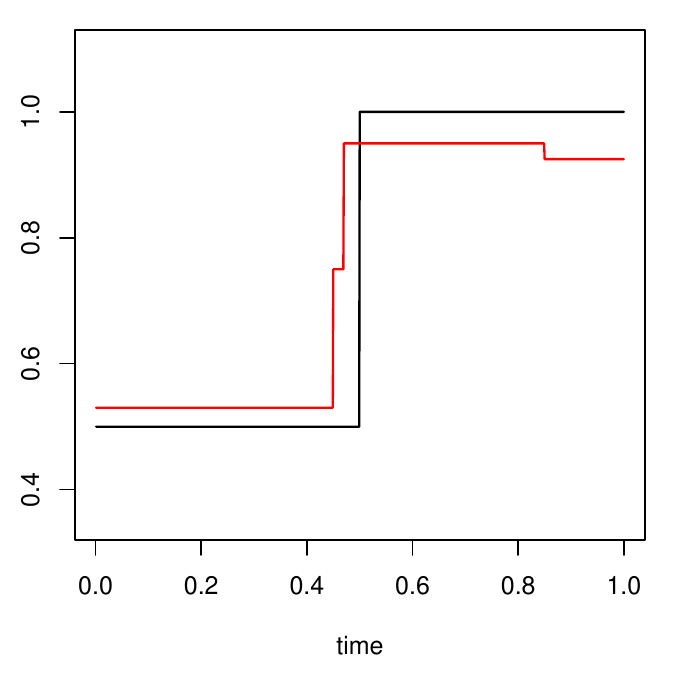}
\caption{Example of a piecewise constant hazard rate $\haz$ (in black) and the fused lasso estimator $\hat{\alpha}_\lambda$ (in red).}\label{fig:illustration}
\end{figure}

We now formalize the statements \ref{eq:change-point-res-I} and \ref{eq:change-point-res-II}. To do so, we measure the distance between two discrete sets $A$ and $B$ by 
\begin{equation*}
d(A \, | \, B) = \max_{b \in B} \min_{a \in A} |a-b|. 
\end{equation*}
The quantity $d(A \, | \, B)$ gives the maximal distance of an element in $B$ to its closest element in $A$. Hence, it is small if for any element in $B$, there is an element in $A$ close to it.
%However, not every element in $A$ must be close to some element in $B$ for $d(A \, | \, B)$ to be small. These considerations show that the distance measure $d(A \, | \, B)$ is not symmetric: in general, it holds that $d(A \, | \, B) \ne d(B \, | \, A)$.
In order to formalize \ref{eq:change-point-res-I}, we analyze the quantity $d(\hat{\mathcal{S}}_\lambda \, | \, \mathcal{S}^*)$.
%which measures the maximal distance of a change point in $\mathcal{S}^*$ to the closest change point in $\hat{\mathcal{S}}_\lambda$.
The following theorem shows that $d(\hat{\mathcal{S}}_\lambda \, | \, \mathcal{S}^*)$ is small for large $n$ in the following sense:
\begin{theorem}\label{theo:change-point-I}
Under the conditions of Theorem \ref{theo:l2-bound}, it holds that
\[ d(\hat{\mathcal{S}}_\lambda \, | \, \mathcal{S}^*) = O_p  \Big( c_n^2 \log(n) \max\big\{  n^{-1 + \frac{4}{\nu}}, \rho_n^2 \big\} \Big). \]
\end{theorem}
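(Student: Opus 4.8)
The statement is the rigorous form of the informal claim \ref{eq:change-point-res-I}: near every true change point there must sit an estimated one. The plan is to deduce it directly from the $\ell_2$-bound of Theorem \ref{theo:l2-bound} by a localization argument. The intuition is that if $\hat{\bs{\alpha}}_\lambda$ had \emph{no} jump in a neighbourhood of some true change point $\tau_{k^*}$, then the interpolated estimator $\hat{\alpha}_\lambda$ would be forced to stay constant across a genuine jump of $\haz$, and hence to miss the true hazard level by at least half the jump height on one side of $\tau_{k^*}$. Summing this discrepancy over the grid points of that neighbourhood produces a local contribution to the squared $\ell_2$-error proportional to the length of the neighbourhood, which Theorem \ref{theo:l2-bound} can only tolerate if the neighbourhood is short.

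To make this quantitative, I would write $\psi_n^2 := c_n^2 \log(n) \max\{ n^{-1+\frac{4}{\nu}}, \rho_n^2\}$ for the rate in Theorem \ref{theo:l2-bound} and let $\delta_n := d(\hat{\mathcal{S}}_\lambda \mid \mathcal{S}^*)$, so that by definition of $d(\cdot \mid \cdot)$ there is a true change point $\tau_{k^*}$ whose nearest neighbour in $\hat{\mathcal{S}}_\lambda$ lies at distance $\delta_n$. Since the true change points are fixed and finite in number, their minimal spacing $\min_k(\tau_{k+1}-\tau_k)$ (with the conventions $\tau_0=\tau_{\min}$, $\tau_{K+1}=\tau_{\max}$) is a fixed positive constant; I fix any $\epsilon_0$ below half of it and set $\tilde{\delta}_n := \min\{\delta_n,\epsilon_0\}$. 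This truncation is the device that avoids circularity: on the open interval $I_n := (\tau_{k^*}-\tilde{\delta}_n,\tau_{k^*}+\tilde{\delta}_n)$ the estimator $\hat{\alpha}_\lambda$ has no change point (since $\tilde{\delta}_n \le \delta_n$) and hence equals a single constant $c$, while $I_n$ is too short to reach the neighbouring true change points, so $\haz\equiv\mathfrak{a}_{k^*}$ on the left half of $I_n$ and $\haz\equiv\mathfrak{a}_{k^*+1}$ on the right half.

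Next I would put $\Delta_{\min}:=\min_{1\le k\le K}|\mathfrak{a}_{k+1}-\mathfrak{a}_k|>0$, again a fixed constant. The triangle inequality gives $|c-\mathfrak{a}_{k^*}|+|c-\mathfrak{a}_{k^*+1}|\ge\Delta_{\min}$, so on at least one half of $I_n$ — say the left, where $\haz_j=\mathfrak{a}_{k^*}$ — every grid point satisfies $|\hat{\alpha}_{\lambda,j}-\haz_j|=|c-\mathfrak{a}_{k^*}|\ge\Delta_{\min}/2$. That half contains of order $n\tilde{\delta}_n$ grid points (the discretized jump sits at the index $n_{k^*}=\lceil(\tau_{k^*}-\tau_{\min})n\rceil$, so the bookkeeping is just a matter of counting indices $j<n_{k^*}$ with $t_j\in I_n$), whence
\[
\frac{1}{n}\norm{\hat{\bs{\alpha}}_\lambda-\hazvec}_2^2
 \;\ge\; \frac{1}{n}\sum_{\{j:\,t_j\in I_n,\ j<n_{k^*}\}}\big(\hat{\alpha}_{\lambda,j}-\haz_j\big)^2
 \;\gtrsim\; \tilde{\delta}_n\,\frac{\Delta_{\min}^2}{4}.
\]
Combining this lower bound with the upper bound $\tfrac{1}{n}\norm{\hat{\bs{\alpha}}_\lambda-\hazvec}_2^2=O_p(\psi_n^2)$ from Theorem \ref{theo:l2-bound} and dividing by the fixed constant $\Delta_{\min}^2/4$ yields $\tilde{\delta}_n=O_p(\psi_n^2)$. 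Finally, since $\psi_n^2\to 0$, with probability tending to one $\tilde{\delta}_n<\epsilon_0$, which forces $\tilde{\delta}_n=\delta_n$; hence $\delta_n=O_p(\psi_n^2)$, exactly the asserted rate.

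The conceptual core — converting ``a missed jump'' into a guaranteed local $\ell_2$-error and reading the rate off Theorem \ref{theo:l2-bound} — is short. The two points that need care, and which I expect to be the main (though still routine) obstacles, are the truncation to $\tilde{\delta}_n$ that keeps $I_n$ clear of adjacent true change points while still bounding the untruncated $\delta_n$, and the index bookkeeping around the non-grid point $\tau_{k^*}$ needed to certify that $I_n$ really contains of order $n\tilde{\delta}_n$ grid points on the relevant side. It is worth noting that the resulting rate is the \emph{square} of the $\ell_2$-rate $\psi_n$: localizing a change point costs an error proportional to the interval length rather than to its square root, which is precisely why Theorem \ref{theo:change-point-I} inherits $\psi_n^2$ rather than $\psi_n$.
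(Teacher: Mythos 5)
Your proof is correct and is essentially the paper's argument: the paper routes Theorem \ref{theo:change-point-I} through Proposition \ref{theo:LinSharpnackRinaldoTibshirani2017} (Theorem 4 of Lin et al.\ 2017), whose proof in Appendix B is exactly your localization step --- if no estimated jump lies within $r_n$ of a true jump index, the estimator is constant there and the window contributes at least $r_n H_n^2/(2n)$ to the squared $\ell_2$-error, contradicting Theorem \ref{theo:l2-bound}. Your truncation to $\tilde{\delta}_n$ plays the same role as the paper's condition $nR_n/H_n^2 = o(D_n)$, and your observation that the rate is the square of the $\ell_2$-rate matches the $nR_n/H_n^2$ scaling there.
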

In the leading case where $\nu$ can be chosen as large as desired and $\rho_n = O(\log(n)/\sqrt{n})$, Theorem \ref{theo:change-point-I} in particular yields that
\[ d(\hat{\mathcal{S}}_\lambda \, | \, \mathcal{S}^*) = O_p\Big(\frac{1}{n^{1-\xi}}\Big) \]
with $\xi > 0$ arbitrarily small but fixed. Hence, there exists a small neighbourhood of order $1/n^{1-\xi}$ around each change point in $\mathcal{S}^*$ which contains an element of $\hat{\mathcal{S}}_\lambda$. The next theorem formalizes \ref{eq:change-point-res-II}.  
\begin{theorem}\label{theo:change-point-II}
Let the conditions of Theorem \ref{theo:l2-bound} be satisfied, let $\{\Delta_n\}$ be any sequence of positive real numbers and let $\hat{\mathcal{S}}_\lambda^{\textnormal{far}}$ be the set of all change points $\hat{\tau} \in \hat{\mathcal{S}}_\lambda$ with $\min_{1\le k \le K} |\hat{\tau} - \tau_k| \ge \Delta_n/n$. Then with probability tending to $1$, 
\[ |\hat{\alpha}_{\lambda,(\hat{\tau}-\tau_{\min})n} - \hat{\alpha}_{\lambda,(\hat{\tau}-\tau_{\min})n-1}| \le C \frac{\kappa_n}{\sqrt{\Delta_n}} \quad \text{for all } \hat{\tau} \in \hat{\mathcal{S}}_\lambda^{\textnormal{far}}, \]
where $C$ is a sufficiently large constant. 
\end{theorem}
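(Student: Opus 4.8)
The plan is to bound the jump size $|\hat{\alpha}_{\lambda,j} - \hat{\alpha}_{\lambda,j-1}|$ at $\hat{\tau} = \tau_{\min}+j/n$ by comparing both adjacent estimated values to the true discretized hazard and invoking the elementwise bound of Theorem \ref{theo:elementwise-bound}. Writing $j = (\hat{\tau}-\tau_{\min})n$, the triangle inequality gives
\[ |\hat{\alpha}_{\lambda,j} - \hat{\alpha}_{\lambda,j-1}| \le |\hat{\alpha}_{\lambda,j} - \haz_j| + |\haz_j - \haz_{j-1}| + |\haz_{j-1} - \hat{\alpha}_{\lambda,j-1}|. \]
The first step is to translate the defining property of $\hat{\mathcal{S}}_\lambda^{\textnormal{far}}$ into a statement about indices. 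Since $n_k = \lceil(\tau_k-\tau_{\min})n\rceil$, the condition $\min_{1\le k\le K}|\hat{\tau}-\tau_k| \ge \Delta_n/n$ forces both $j$ and $j-1$ to lie at index-distance at least $\Delta_n - 1$ from every jump index $n_1,\ldots,n_K$. In particular, once $\Delta_n$ exceeds a fixed constant, $j-1$ and $j$ sit in the same constant piece of the discretized signal $\hazvec$, so $\haz_j = \haz_{j-1}$ and the middle term vanishes. It thus remains to bound $|\hat{\alpha}_{\lambda,j} - \haz_j|$ and $|\hat{\alpha}_{\lambda,j-1} - \haz_{j-1}|$ separately.

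For each of the two indices I would apply Theorem \ref{theo:elementwise-bound}, which holds simultaneously over all indices on an event of probability tending to $1$. Two geometric facts feed into the three terms of the elementwise maximum. First, because $j$ (and likewise $j-1$) is at index-distance at least $\Delta_n-1$ from the nearest jump index, the quantity $d_j$ satisfies $d_j \gtrsim \Delta_n$, so that $\kappa_n/\sqrt{d_j} \lesssim \kappa_n/\sqrt{\Delta_n}$. Second, since $K$ and the change points $\tau_1,\ldots,\tau_K$ are fixed interior points of $[\tau_{\min},\tau_{\max}]$, every inter-jump gap has length $r_{k(j)} = n_{k(j)+1}-n_{k(j)} \ge \underline{c}\,n$ for some constant $\underline{c}>0$; this is the crucial ingredient for controlling the terms that involve $r_{k(j)}$.

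With the choice $\lambda = \kappa_n/\sqrt{n}$ from Theorem \ref{theo:l2-bound}, I would then verify that each of the three terms in the elementwise maximum is at most a constant multiple of $\kappa_n/\sqrt{\Delta_n}$, using $\Delta_n \le n$ throughout. The first term is $\kappa_n/\sqrt{d_j} \le \kappa_n/\sqrt{\Delta_n}$; the second term is $\kappa_n^2/(4n\lambda) = \kappa_n/(4\sqrt{n}) \le \kappa_n/(4\sqrt{\Delta_n})$; and the third term, $2n\lambda/r_{k(j)} + 2\kappa_n/\sqrt{r_{k(j)}} = 2\sqrt{n}\,\kappa_n/r_{k(j)} + 2\kappa_n/\sqrt{r_{k(j)}}$, is reduced by $r_{k(j)} \ge \underline{c}\,n$ to order $\kappa_n/\sqrt{n} \le \kappa_n/\sqrt{\Delta_n}$. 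Combining the bounds for indices $j$ and $j-1$ through the triangle inequality yields $|\hat{\alpha}_{\lambda,j}-\hat{\alpha}_{\lambda,j-1}| \le C\kappa_n/\sqrt{\Delta_n}$ on the high-probability event, as claimed.

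The main obstacle, and the reason the argument is not purely mechanical, lies in the third term of the elementwise bound: the summand $2n\lambda/r_{k(j)}$ equals $2\sqrt{n}\,\kappa_n/r_{k(j)}$, which is \emph{not} controlled by the far-from-jump condition alone, as that only yields $r_{k(j)} \gtrsim \Delta_n$ and would force the unwanted requirement $\Delta_n \gtrsim n$. It is essential instead to use that every inter-jump gap is of order $n$, an order-of-magnitude fact that follows from $K$ being fixed and the $\tau_k$ being fixed interior points. The remaining care is bookkeeping: handling the $\pm 1$ discretization discrepancies arising from the ceiling in the definition of $n_k$ and from passing between $j$ and $j-1$, all of which are absorbed into the constant $C$.
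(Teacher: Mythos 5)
Your proposal is correct and follows essentially the same route as the paper's proof: triangle inequality through $\haz_j=\haz_{j-1}$, the elementwise bound of Theorem \ref{theo:elementwise-bound} at the two indices $j$ and $j-1$, the lower bound $d_j\gtrsim\Delta_n$, the fact that $r_{k(j)}\asymp n$ (which the paper also invokes as $0<c_r\le r_k/n\le C_r$), and the choices $\lambda=\kappa_n/\sqrt{n}$ and $\Delta_n\le n$. You even make explicit two points the paper leaves implicit, namely that the middle term vanishes and that $r_{k(j)}\gtrsim\Delta_n$ alone would not suffice for the $2n\lambda/r_{k(j)}$ term.
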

Theorem \ref{theo:change-point-II} essentially says the following: if the distance of a change point $\hat{\tau} \in \hat{\mathcal{S}}_\lambda$ to any change point $\tau_k$ of the hazard rate $\haz$ is larger than $\Delta_n/n$, then the estimator $\hat{\bs{\alpha}}_\lambda$ cannot have a jump larger than $C \kappa_n/\sqrt{\Delta_n}$ at the change point $\hat{\tau}$, i.e., at the jump index $(\hat{\tau}-\tau_{\min})n$. In the leading case where $\nu$ can be chosen as large as desired and $\rho_n = O(\log(n)/\sqrt{n})$, Theorem \ref{theo:change-point-II} simplifies slightly to: with probability tending to $1$,
\[ |\hat{\alpha}_{\lambda,(\hat{\tau}-\tau_{\min})n} - \hat{\alpha}_{\lambda,(\hat{\tau}-\tau_{\min})n-1}| \le C \frac{n^\xi}{\sqrt{\Delta_n}} \quad \text{for all } \hat{\tau} \in \hat{\mathcal{S}}_\lambda^{\textnormal{far}} \]
with $\xi > 0$ arbitrarily small but fixed. This in particular implies the following: if $\hat{\tau}$ remains bounded away from any change point $\tau_k$ of $\haz$, that is, if $\min_{1\le k \le K} |\hat{\tau} - \tau_k| > c > 0$ for some $c > 0$ (with probability tending to $1$), then the corresponding jump size $|\hat{\alpha}_{\lambda,(\hat{\tau}-\tau_{\min})n} - \hat{\alpha}_{\lambda,(\hat{\tau}-\tau_{\min})n-1}|$ is smaller than $C n^\xi/\sqrt{n}$ (with probability tending to $1$).

\subsection{Results for Settings \ref{settingA}--\ref{settingD}}\label{subsec:theory-settings}

We now show that the general theory from the previous Section \ref{subsec:theory-general} applies to Settings \ref{settingA}--\ref{settingD}. To do so, we verify that the high-order conditions \ref{C0}--\ref{C6} are fulfilled in these four settings under standard assumptions. As Setting \ref{settingA} is nested as a special case in the Cox model of Setting \ref{settingB}, we directly start with Setting \ref{settingB} and do not treat Setting \ref{settingA} separately. A formal result for Setting \ref{settingA} follows as a simple corollary to Proposition \ref{prop:settingB} below.

The Cox model of Setting \ref{settingB} incorporates a vector of $d$ covariates. Rather than restricting attention to the standard low-dimensional case where the dimension $d$ is small and fixed, we consider the following high-dimensional setup: (i) The dimension $d$ may be much larger than the sample size $n$. In particular, $d$ may grow as any polynomial of $n$, that is, $d = O(n^a)$ with $a$ arbitrarily large but fixed. (ii) The parameter vector $\bs{\beta}$ is such that $\|\bs{\beta}\|_1$ remains bounded as $n \to \infty$.  In principle, it is possible to allow $\| \bs{\beta} \|_1$ to grow (sufficiently slowly) with $n$. However, to avoid certain technical complications, we make the stronger assumption that it is bounded. Condition (ii) can be regarded as a sparsity constraint. It is in particular fulfilled if the number of non-zero coefficients of $\bs{\beta}$ does not grow with the sample size and the entries of $\bs{\beta}$ are all bounded by a fixed constant $C < \infty$ in absolute value.
\begin{prop}\label{prop:settingB}
Consider the Cox model from Setting \ref{settingB} and assume the following:   
\begin{enumerate}[label=\textnormal{(C$_B$\arabic*)},leftmargin=1.25cm]
\item \label{C1:settingB} The data points $(T_i,\delta_i,\bs{W}_i)$ are i.i.d.\ across $i$.  
\item \label{C2:settingB} $\max_{1 \le k \le n} \pr(t_{k-1} < T_i^* \le t_k) \le C/n$ with some sufficiently large constant $C$.
\item \label{C3:settingB} The components $N_{i}$ of the multivariate counting process $N_{1:n} = (N_{1},\ldots,N_{n})$ have $\{\mathcal{F}_t\}$-intensities $\lambda_{i}(t) = \ind(t \le T_i) \exp(\bs{\beta}^\top \bs{W}_i) \haz(t)$. 
\item \label{C4:settingB}$\pr(T_i \ge \tau_{\max}) = p$ with some constant $p \in (0,1)$.
\item \label{C5:settingB} The dimension $d$ of the covariate vector $\bs{W}_i$ grows at most polynomially in the sample size $n$, that is, $d = O(n^a)$ with some arbitrarily large but fixed constant $a$. Moreover, the covariates $\bs{W}_i$ have bounded support, that is, $\| \bs{W}_i \|_\infty \le C_W$ for all $i$ and some constant $C_W < \infty$. 
\item \label{C6:settingB} It holds that $\| \bs{\beta} \|_1 \le C_\beta < \infty$ for some constant $C_\beta < \infty$ and $\hat{\bs{\beta}}$ is any estimator of $\bs{\beta}$ with the property that $\|\hat{\bs{\beta}} - \bs{\beta}\|_1 = O_p(\sqrt{\log(n) / n})$. 
\end{enumerate}
Then \ref{C0}--\ref{C6} are satisfied with $\nu$ as large as desired, $\tau_{\min} = 0$ and $\rho_n = \log(n) /\sqrt{n}$.   
\end{prop}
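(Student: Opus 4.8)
The plan is to verify the abstract conditions \ref{C0}--\ref{C6} one at a time for the explicit Cox structure $Z_i(t,\bs{\beta}) = \ind(T_i \ge t)\exp(\bs{\beta}^\top\bs{W}_i)$, exploiting two structural facts throughout: the aggregated risk process $\widebar{Z}(\cdot,\bs{\beta})$ is nonincreasing in $t$, and by \ref{C5:settingB}--\ref{C6:settingB} the covariate factor satisfies $\exp(-C_\beta C_W) \le \exp(\bs{\beta}^\top\bs{W}_i) \le \exp(C_\beta C_W)$. Conditions \ref{C0}--\ref{C2} are essentially definitional: \ref{C0} holds for the usual augmentation of the data filtration, \ref{C1} follows from \ref{C1:settingB} since each $N_i$ is a fixed functional of $(T_i,\delta_i,\bs{W}_i)$, and \ref{C2} is \ref{C3:settingB}. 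For \ref{C3} I note $\widebar{Z}(t,\bs{\beta}) \le n\exp(C_\beta C_W)$, while on $\{J(t,\bs{\beta})=1\}$ at least one at-risk summand is present so $\widebar{Z}(t,\bs{\beta}) \ge \exp(-C_\beta C_W)$ and hence $J(\cdot,\bs{\beta})/\widebar{Z}(\cdot,\bs{\beta}) \le \exp(C_\beta C_W)$ pathwise. For \ref{C4}, monotonicity collapses $\inf_t J(t,\bs{\beta})$ to its value at $\tau_{\max}$, so $\pr(\inf_t J(t,\bs{\beta}) = 0) = \pr(\text{no } T_i \ge \tau_{\max}) = (1-p)^n = o(1)$ by \ref{C4:settingB}.

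The first quantitative step is \ref{C5}. Writing $B_n = \sum_{i=1}^n \ind(T_i \ge \tau_{\max}) \sim \mathrm{Bin}(n,p)$, monotonicity gives $\sup_{t}J(t,\bs{\beta})/\widebar{Z}(t,\bs{\beta}) = 1/\widebar{Z}(\tau_{\max},\bs{\beta}) \le \exp(C_\beta C_W)/B_n$ on $\{B_n \ge 1\}$, so it suffices to bound the negative moment $\ex[B_n^{-\nu}\ind(B_n\ge 1)]$ by $O(n^{-\nu})$. I would split on the Chernoff event $\{B_n \ge np/2\}$, where $B_n^{-\nu} \le (2/(np))^\nu$, and on its complement, which has probability $\le e^{-cn}$ while the pathwise bound from \ref{C3} keeps the integrand to the power $\nu$ below $\exp(\nu C_\beta C_W)$; the event $\{B_n=0\}$ is absorbed the same way. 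This yields $O(n^{-\nu})$ for every fixed $\nu$, explaining why $\nu$ may be chosen arbitrarily large.

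The substantive step is \ref{C6}. I would start from
\[ n\{\Delta^\beta(t_j)-\Delta^\beta(t_{j-1})\} = n\int_{t_{j-1}}^{t_j}\Big(\frac{J(s,\hat{\bs{\beta}})}{\widebar{Z}(s,\hat{\bs{\beta}})} - \frac{J(s,\bs{\beta})}{\widebar{Z}(s,\bs{\beta})}\Big)\,d\widebar{N}(s) \]
and work on the high-probability event on which both $J(\cdot,\bs{\beta})$ and $J(\cdot,\hat{\bs{\beta}})$ equal $1$ on $[\tau_{\min},\tau_{\max}]$, the risk set $R(s)=\sum_i\ind(T_i\ge s)$ satisfies $\inf_{s\le\tau_{\max}}R(s)\ge c_1 n$ (again by \ref{C4:settingB} and a Chernoff bound), and $\|\hat{\bs{\beta}}\|_1$ is bounded. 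Lipschitz continuity of $x\mapsto e^x$ on the resulting compact range of the linear predictors gives $|\exp(\bs{\beta}^\top\bs{W}_i)-\exp(\hat{\bs{\beta}}^\top\bs{W}_i)|\le C C_W\|\hat{\bs{\beta}}-\bs{\beta}\|_1$, whence the integrand is bounded in absolute value by $C\|\hat{\bs{\beta}}-\bs{\beta}\|_1/R(s)$ after using $\widebar{Z}(s,\cdot)\asymp R(s)$. To obtain the sharp rate I would split $d\widebar{N}$ via the decomposition $d\widebar{N}(s)=\widebar{Z}(s,\bs{\beta})\haz(s)\,ds+d\widebar{M}(s)$ from \eqref{eq:model-cum}. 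The compensator part is smooth: the integrand times $\widebar{Z}(s,\bs{\beta})\haz(s)$ is $O(\|\hat{\bs{\beta}}-\bs{\beta}\|_1)$, so after multiplying by $n$ and integrating over a cell of length $1/n$ it contributes $O_p(\|\hat{\bs{\beta}}-\bs{\beta}\|_1)=O_p(\sqrt{\log(n)/n})$ uniformly in $j$. For the martingale part, the $n$ rescaled cell-wise stochastic integrals each have conditional standard deviation of order $\|\hat{\bs{\beta}}-\bs{\beta}\|_1$, so a maximal inequality over the $n$ cells produces an extra factor $\sqrt{\log n}$, giving $O_p(\sqrt{\log n}\cdot\|\hat{\bs{\beta}}-\bs{\beta}\|_1)=O_p(\log(n)/\sqrt{n})$; combining the two parts yields \ref{C6} with $\rho_n=\log(n)/\sqrt{n}$.

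The main obstacle is the martingale part of \ref{C6}. Because $\hat{\bs{\beta}}$ is estimated from the entire sample, the integrand $J(\cdot,\hat{\bs{\beta}})/\widebar{Z}(\cdot,\hat{\bs{\beta}})$ is \emph{not} predictable, so $\int(\cdots)\,d\widebar{M}$ is not a martingale and the clean martingale maximal inequalities do not apply directly. I would circumvent this by restricting to the event $\{\|\hat{\bs{\beta}}-\bs{\beta}\|_1 \le M\sqrt{\log(n)/n}\}$ (which has probability tending to one by \ref{C6:settingB}) and taking the supremum of the relevant stochastic integral over the deterministic shrinking $\ell_1$-ball $\{\bs{\beta}': \|\bs{\beta}'-\bs{\beta}\|_1 \le M\sqrt{\log(n)/n}\}$; for deterministic $\bs{\beta}'$ the integrand is predictable, and a covering/chaining argument over this ball together with an exponential martingale inequality delivers the uniform control. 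Carrying out this empirical-process step uniformly over the $n$ grid cells and the ball, while keeping track of the dependence on the (possibly high) dimension $d$ through \ref{C5:settingB}, is the technically delicate part; by contrast \ref{C5} reduces to a textbook negative-moment estimate once monotonicity collapses the supremum to $\tau_{\max}$.
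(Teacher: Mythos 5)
Your verification of \ref{C0}--\ref{C5} is sound and essentially matches the paper's argument (the paper proves the left-truncated competing-risks version, Proposition \ref{prop:settingC}, and obtains Proposition \ref{prop:settingB} as a simplification): for \ref{C5} the paper replaces your Chernoff splitting of $\ex[B_n^{-\nu}\ind(B_n\ge 1)]$ by an exact combinatorial estimate $\ex[\ind(Z>0)Z^{-\nu}]\le(\nu+1)!/(nq)^{\nu}$ for $Z\sim\mathrm{Bin}(n,q)$ (Lemma \ref{lemma:binomial}), but both routes give $O(n^{-\nu})$ for every fixed $\nu$, which is all that is needed.

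The genuine gap is in \ref{C6}. Your decomposition $d\widebar{N}=\widebar{Z}\haz\,ds+d\widebar{M}$ is what creates the predictability problem you then have to solve, and your proposed solution --- chaining an exponential martingale inequality over the $\ell_1$-ball $\{\bs{\beta}':\|\bs{\beta}'-\bs{\beta}\|_1\le M\sqrt{\log(n)/n}\}$ --- is left as a sketch exactly where it is most doubtful: with $d=O(n^a)$ permitted by \ref{C5:settingB}, naive covering numbers of that ball have logarithm growing with $d$, which destroys your $\sqrt{\log n}$ budget, and you give no argument for a dimension-free entropy or Lipschitz reduction. The paper never invokes the martingale structure of $\Delta^\beta$ at all. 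It writes $\Delta^\beta(t_k)-\Delta^\beta(t_{k-1})$ as the finite sum, over the jump times $T_i\in(t_{k-1},t_k]$ with $\delta_i=1$, of $J(T_i,\hat{\bs{\beta}})/\widebar{Z}(T_i,\hat{\bs{\beta}})-J(T_i,\bs{\beta})/\widebar{Z}(T_i,\bs{\beta})$; it bounds each summand by $\sup_t|\widebar{Z}(t,\bs{\beta})-\widebar{Z}(t,\hat{\bs{\beta}})|\,/\,\{\inf_t\widebar{Z}(t,\hat{\bs{\beta}})\inf_t\widebar{Z}(t,\bs{\beta})\}$, which is $O_p(\log(n)\sqrt{n})/O_p(n^2)$ by a Taylor expansion of $\exp(\cdot)$ together with the lower bound $\inf_t\widebar{Z}\ge cn$; and it controls the number of summands per cell uniformly over the $n$ cells using \ref{C2:settingB}. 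Multiplying by the rescaling factor $n$ gives $O_p(\log(n)/\sqrt{n})$ directly and pathwise. Note that \ref{C2:settingB} appears nowhere in your argument --- it is in the proposition precisely to bound the number of events per grid cell, which is the mechanism your route bypasses at the cost of an unresolved empirical-process step.
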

We briefly comment on the assumptions of Proposition \ref{prop:settingB}. The i.i.d.\ assumption in \ref{C1:settingB} is standard in the literature. \ref{C2:settingB} is rather mild as well. It is in particular fulfilled if the survival times $T_i^*$ have a bounded density $f^*$. \ref{C3:settingB} is merely the definition of the Cox model. \ref{C4:settingB} requires that the probability of observing a right-censored survival time $T_i$ (weakly) larger than $\tau_{\max}$ is strictly positive. This ensures that the data do not become too sparse towards the right end-point of the interval $[0,\tau_{\max}]$. Hence, \ref{C4:settingB} essentially encapsulates that the data points $T_i$ lie sufficiently dense in the interval $[0,\tau_{\max}]$ to perform reasonable estimation. The first part of \ref{C5:settingB} restricts the dimension $d$ of the covariate vector as already discussed above. The second part imposes boundedness on the covariates, which is a very common assumption in the literature and not very problematic in practice as most covariates can be regarded as having bounded support. As shown in \cite{Huang2013}, \ref{C6:settingB} is satisfied (under appropriate regularity conditions not spelt out here) by an $\ell_1$-penalized version (i.e., a lasso version) of the partial likelihood estimator of $\bs{\beta}$. In the low-dimensional case where $d$ does not grow with $n$, one may of course use the standard (unpenalized) partial likelihood estimator, which has the property that $\|\hat{\bs{\beta}} - \bs{\beta}\|_1 = O_p(1/\sqrt{n})$ (once again under certain regularity conditions not spelt out here).

We next turn to the competing risks model of Setting \ref{settingC}, which allows for both right-censoring and left-truncation. In the presence of left-truncation, or put differently, delayed study entry, the data points $(T_i, \delta_i \cdot \varepsilon_i, L_i, \bs{W}_i)$ are drawn from the conditional probability distribution given study entry. Technically speaking, this means that the underlying probability space is not $(\Omega, \mathcal{F}, \pr)$ as in Settings \ref{settingA} and \ref{settingB} but rather the space $(\Omega,\mathcal{F},\pr^B)$, where $\pr^B$ is the conditional measure given study entry defined by $\pr^{B}(A) = \pr(A \cap B)/\pr(B)$ for all $A \in \mathcal{F}$ with $B = \bigcap_{i=1}^n \{T_i^* > L_i\}$. We thus replace the unconditional measure $\pr$ by the conditional one $\pr^B$. In particular, all probabilistic statements in the following proposition are with respect to $\pr^B$.
\begin{prop}\label{prop:settingC}
Consider a specific event $r \in \{1,\ldots,R\}$ in the competing risks model from Setting \ref{settingC} and assume the following:   
\begin{enumerate}[label=\textnormal{(C$_C$\arabic*)},leftmargin=1.25cm]
\item \label{C1:settingC} The data points $(T_i, \delta_i \cdot \varepsilon_i, L_i, \bs{W}_i)$ are i.i.d.\ across $i$. 
\item \label{C2:settingC} $\max_{1 \le k \le n} \pr^B(t_{k-1} < T_i^* \le t_k) \le C/n$ with some sufficiently large constant $C$.
\item \label{C3:settingC} The components $N_{ir}$ of the multivariate counting process $N_{1:n,r} = (N_{1r},\ldots,N_{nr})$ have $\{\mathcal{F}_t\}$-intensities $\lambda_{ir}(t) = \ind(L_i< t\le T_i) \exp(\bs{\beta}_r^\top \bs{W}_i) \haz_r(t)$. 
\item \label{C4:settingC} $\pr^B(L_i < \tau_{\min} \land T_i \ge \tau_{\max}) = p$ with some constant $p \in (0,1)$.
\item \label{C5:settingC} The dimension $d$ of the covariate vector $\bs{W}_i$ grows at most polynomially in the sample size $n$, that is, $d = O(n^a)$ with some arbitrarily large but fixed constant $a$. Moreover, the covariates $\bs{W}_i$ have bounded support, that is, $\| \bs{W}_i \|_\infty \le C_W$ for all $i$ and some constant $C_W < \infty$. 
\item \label{C6:settingC} It holds that $\| \bs{\beta}_r \|_1 \le C_\beta < \infty$ for some constant $C_\beta < \infty$ and $\hat{\bs{\beta}}_r$ is any estimator of $\bs{\beta}_r$ with the property that $\|\hat{\bs{\beta}}_r - \bs{\beta}_r\|_1 = O_p(\sqrt{\log(n) / n})$. 
\end{enumerate}
Then \ref{C0}--\ref{C6} are satisfied for the event $r$ with $\nu$ as large as desired and $\rho_n = \log(n) /\sqrt{n}$.
\end{prop}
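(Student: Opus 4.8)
The plan is to verify the six abstract conditions \ref{C0}--\ref{C6} one at a time, following closely the argument underlying Proposition \ref{prop:settingB} but accounting for the two features specific to this setting: the left-truncation, which forces us to work throughout under the conditional measure $\pr^B$, and the competing-risks structure, which is handled by simply fixing the event $r$ and working with the single counting process built from the $N_{ir}$. Since $B = \bigcap_{i=1}^n \{T_i^* > L_i\}$ factorises across $i$, the data points remain i.i.d.\ under $\pr^B$, so \ref{C0}--\ref{C2} are immediate: \ref{C0} is structural, \ref{C1} follows from \ref{C1:settingC}, and \ref{C2} is exactly the multiplicative intensity in \ref{C3:settingC} with $Z_i(s,\bs{\beta}_r) = \ind(L_i < s \le T_i)\exp(\bs{\beta}_r^\top \bs{W}_i)$ (its validity under $\pr^B$ being the standard consequence of independent left-truncation). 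Before turning to the remaining conditions I would record two elementary facts. First, because $\exp(\bs{b}^\top \bs{W}_i) > 0$ for every $\bs{b}$, the indicator $J(s,\bs{b}) = \ind(\widebar{Z}(s,\bs{b}) > 0)$ does \emph{not} depend on $\bs{b}$; it is just the indicator that some subject is under observation at time $s$. Second, on $\{\widebar{Z}(s,\bs{b}) > 0\}$ and using $\|\bs{W}_i\|_\infty \le C_W$ with a bound on $\|\bs{b}\|_1$, each surviving summand of $\widebar{Z}(s,\bs{b})$ lies in $[\exp(-C_\beta C_W),\exp(C_\beta C_W)]$, so $\widebar{Z}(s,\bs{b})$ is comparable to the number of subjects at risk.

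These two facts dispatch \ref{C3} and \ref{C4} quickly. Local boundedness of $\widebar{Z}$ is clear, and $J/\widebar{Z} \le \exp(C_\beta C_W)$ always holds (whenever $\widebar{Z} > 0$ at least one summand is present), giving \ref{C3}. For \ref{C4}, any subject with $L_i < \tau_{\min}$ and $T_i \ge \tau_{\max}$ is at risk throughout $[\tau_{\min},\tau_{\max}]$, so $\inf_t J(t,\bs{\beta}_r) = 0$ can occur only if no such subject exists; by \ref{C4:settingC} the number $R^*$ of such subjects is $\mathrm{Binomial}(n,p)$ under $\pr^B$, whence $\pr^B(R^* = 0) = (1-p)^n = o(1)$.

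For \ref{C5}, the key is that on $\{R^* \ge 1\}$ the lower bound $\widebar{Z}(s,\bs{\beta}_r) \ge \exp(-C_\beta C_W)\,R^*$ holds uniformly in $s \in [\tau_{\min},\tau_{\max}]$, so $\sup_t (J/\widebar{Z})^\nu \le \exp(\nu C_\beta C_W)/(R^*)^\nu$; combined with the always-valid bound $J/\widebar{Z} \le \exp(C_\beta C_W)$ on the exponentially unlikely event $\{R^* = 0\}$, one gets $\ex[\sup_t (J/\widebar{Z})^\nu] \le \exp(\nu C_\beta C_W)\{\ex[(R^*)^{-\nu}\ind(R^*\ge 1)] + (1-p)^n\}$. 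Since $R^*$ is binomial with fixed success probability $p$, its lower tail below $np/2$ is exponentially small and its negative moments satisfy $\ex[(R^*)^{-\nu}\ind(R^*\ge 1)] = O(n^{-\nu})$ for every fixed $\nu$, delivering \ref{C5} with $\nu$ arbitrarily large.

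The genuinely hard part is \ref{C6}, the control of the estimation error $\Delta^\beta$. Here I would exploit the two facts above: since $J$ is $\bs{b}$-free, the cell increment of $\Delta^\beta$ equals $\int_{t_{j-1}}^{t_j} J(s)\{\widebar{Z}(s,\hat{\bs{\beta}}_r)^{-1} - \widebar{Z}(s,\bs{\beta}_r)^{-1}\}\,d\widebar{N}(s)$, and a mean-value expansion of $\bs{b} \mapsto \widebar{Z}(s,\bs{b})^{-1}$ gives $|\widebar{Z}(s,\hat{\bs{\beta}}_r)^{-1} - \widebar{Z}(s,\bs{\beta}_r)^{-1}| \le C_W \norm{\hat{\bs{\beta}}_r - \bs{\beta}_r}_1 \sup_{\bs{b}} \widebar{Z}(s,\bs{b})^{-1}$, which on $\{R^* \ge 1\}$ is of order $\norm{\hat{\bs{\beta}}_r - \bs{\beta}_r}_1/n$. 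Multiplying the cell increment by $n$ and summing over the jumps of $\widebar{N}$ in the cell, the maximum over $j$ is then controlled by $C_W \norm{\hat{\bs{\beta}}_r - \bs{\beta}_r}_1 \cdot \max_j \{\widebar{N}(t_j) - \widebar{N}(t_{j-1})\}$. The first factor is $O_p(\sqrt{\log(n)/n})$ by \ref{C6:settingC} (and $\|\hat{\bs{\beta}}_r\|_1$ is bounded with probability tending to one, justifying the segment bound used above). The obstacle, and the step I expect to require the most care, is the uniform control of the maximal per-cell event count over all $n$ cells: by \ref{C2:settingC} each cell has expected event count $O(1)$, and a maximal inequality (union bound over the $n$ cells for these near-Poisson counts) bounds the maximum by a polylogarithmic factor. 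Combining these yields $\max_j |n\{\Delta^\beta(t_j) - \Delta^\beta(t_{j-1})\}| = O_p(\log(n)/\sqrt{n})$ up to the exact power of the logarithm, which is immaterial since \ref{C6} only requires $\rho_n = o(n^{-\xi})$; this gives \ref{C6} with $\rho_n = \log(n)/\sqrt{n}$ and completes the verification.
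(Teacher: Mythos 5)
Your proposal is correct and follows essentially the same route as the paper's proof: the same reduction to the at-risk count $R^* = \sum_{i=1}^n \ind(L_i < \tau_{\min} \land T_i \ge \tau_{\max}) \sim \mathrm{Bin}(n,p)$ for \ref{C4} and \ref{C5}, and the same decomposition of the cell increments of $\Delta^\beta$ for \ref{C6}, resting on the observation that $J(\cdot,\bs{b})$ does not depend on $\bs{b}$. The only cosmetic differences are that the paper bounds the negative binomial moments by an exact combinatorial lemma rather than your Chernoff tail split, and in \ref{C6} it writes the difference of reciprocals as $(\widebar{Z}(\cdot,\bs{\beta}_r)-\widebar{Z}(\cdot,\hat{\bs{\beta}}_r))/(\widebar{Z}(\cdot,\hat{\bs{\beta}}_r)\widebar{Z}(\cdot,\bs{\beta}_r))$ and bounds numerator and denominator separately instead of using your direct mean-value expansion of $\bs{b}\mapsto\widebar{Z}(s,\bs{b})^{-1}$.
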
 
The assumptions of Proposition \ref{prop:settingC} closely parallel those required for the analysis of Setting \ref{settingB} in Proposition \ref{prop:settingB}. We thus only give a brief comment on \ref{C4:settingC}. Compared to \ref{C4:settingB}, the statement in \ref{C4:settingC} is a bit more complicated. This stems from the fact that we now also need to deal with left-truncation. However, the underlying idea is the same: \ref{C4:settingC} ensures that the data on the interval $[\tau_{\min},\tau_{\max}]$ are sufficiently dense to perform reasonable estimation.

We finally analyze the Markov model with state space $\mathcal{R} = \{0,\ldots,R\}$ from Setting \ref{settingD}.  
\begin{prop}\label{prop:settingD}
Consider a specific $\ell \to m$ transition with $\ell,m \in \mathcal{R}$, $\ell \ne m$, in the multi-state model from Setting \ref{settingD} and assume the following:  
\begin{enumerate}[label=\textnormal{(C$_D$\arabic*)},leftmargin=1.25cm]
\item \label{C1:settingD} For each $i$, $X_i$ is a time-continuous Markov process with absolutely continuous intensity measure, right-continuous sample paths and $\pr(X_i(0) = 0) = 1$. The processes $X_i$ are i.i.d.\ across $i$. 
\item \label{C2:settingD} It holds that $\pr(X_i(t-) = \ell \text{ for all } t \in [\tau_{\min},\tau_{\max}]) = p_\ell$ with some constant $p_\ell \in (0,1)$.
\end{enumerate}
Then \ref{C0}--\ref{C5} are satisfied for the considered $\ell \to m$ transition with $\nu$ as large as desired. Moreover, \ref{C6} is trivially satisfied with $\rho_n = 0$. 
\end{prop}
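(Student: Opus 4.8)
The plan is to verify the high-order conditions \ref{C0}--\ref{C6} one by one, exploiting that in this setting the intensity carries no unknown finite-dimensional parameter. Condition \ref{C0} holds by passing, if necessary, to the usual augmentation of the filtration generated by the i.i.d.\ processes $X_i$. Condition \ref{C1} is immediate: each $N_{i,\ell \to m}$ is a fixed measurable functional of the path of $X_i$, so the i.i.d.\ assumption \ref{C1:settingD} transfers to the counting processes. Condition \ref{C2} is exactly the content of Theorem II.6.8 in \cite{AndersenGill1993} invoked in Setting \ref{settingD}, with deterministic factor $\haz = \haz_{\ell \to m}$ and predictable at-risk indicator $Z_{i,\ell}(t) = \ind(X_i(t-)=\ell)$; absolute continuity of the intensity measure in \ref{C1:settingD} guarantees that the compensator is absolutely continuous. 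For \ref{C3}, note that $\widebar{Z}(t) = \sum_{i=1}^n Z_{i,\ell}(t)$ is a sum of indicators and hence bounded by $n$, while $J(t)/\widebar{Z}(t) \le 1$ everywhere (when $\widebar{Z}(t) > 0$ it is at least $1$, being integer-valued, and $0/0 := 0$ otherwise); both processes are therefore locally bounded. Finally, \ref{C6} is trivial: there is no parameter $\bs{\beta}$ to estimate, so $\Delta^\beta \equiv 0$ and we may take $\rho_n = 0 = o(n^{-\xi})$.

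The core of the argument is a single device based on \ref{C2:settingD}. Let $B_i = \{X_i(t-) = \ell \text{ for all } t \in [\tau_{\min},\tau_{\max}]\}$, so that $\pr(B_i) = p_\ell \in (0,1)$, and set $V = \sum_{i=1}^n \ind(B_i) \sim \text{Binomial}(n,p_\ell)$. If subject $i$ never leaves state $\ell$ on $[\tau_{\min},\tau_{\max}]$, then $Z_{i,\ell}(t) = 1$ throughout, so summing over such subjects yields the deterministic-in-$t$ lower bound $\inf_{t \in [\tau_{\min},\tau_{\max}]} \widebar{Z}(t) \ge V$. The value of this bound is that it replaces any uniform-in-$t$ empirical-process control of $\widebar{Z}$ by a lower bound on one binomial variable. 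Condition \ref{C4} then follows at once: if $V \ge 1$ then $\widebar{Z}(t) \ge 1$ for all $t$ and hence $J(t) = 1$ for all $t$, so $\{\inf_t J(t) = 0\} \subseteq \{V = 0\}$ and $\pr(\inf_t J(t) = 0) \le (1-p_\ell)^n = o(1)$.

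The remaining \emph{main obstacle} is the moment bound \ref{C5}, and it is here that the device does the real work. Because $\|J(\cdot)/\widebar{Z}(\cdot)\|_\infty \le 1$ always and $\|J(\cdot)/\widebar{Z}(\cdot)\|_\infty \le 1/V$ on $\{V \ge 1\}$, I would split the expectation over $\{V > np_\ell/2\}$ and its complement. A Chernoff (or Bernstein) bound for the binomial gives $\pr(V \le np_\ell/2) \le e^{-cn}$ for some $c > 0$. On $\{V > np_\ell/2\}$ we have $\|J/\widebar{Z}\|_\infty^\nu \le (2/p_\ell)^\nu n^{-\nu}$, while on the complement the crude bound $\|J/\widebar{Z}\|_\infty^\nu \le 1$ contributes at most $\pr(V \le np_\ell/2) \le e^{-cn}$. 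Since $e^{-cn} = o(n^{-\nu})$ for every fixed $\nu$, the two terms combine to $\ex[\|J/\widebar{Z}\|_\infty^\nu] \le C_{\infty,\nu} n^{-\nu}$, and as the exponential tail beats any polynomial this holds for $\nu$ as large as desired. Collecting the pieces yields \ref{C0}--\ref{C5} with arbitrary $\nu$ and \ref{C6} with $\rho_n = 0$, as claimed.
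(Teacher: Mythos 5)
Your verification of \ref{C0}--\ref{C4} and \ref{C6} is correct and essentially identical to the paper's: the same reduction of the uniform-in-$t$ control of $\widebar{Z}_\ell$ to the single binomial variable $V=\sum_{i=1}^n \ind(X_i(t-)=\ell \text{ for all } t\in[\tau_{\min},\tau_{\max}])$ (the paper calls it $\widebar{Z}_{\ell,\mathcal{T}}$) drives both arguments, and \ref{C4} follows in both cases from $\{\inf_t J_\ell(t)=0\}\subseteq\{V=0\}$ and \ref{C2:settingD}. Where you genuinely diverge is the moment bound \ref{C5}. The paper decomposes the expectation according to whether $\inf_t J_\ell(t)$ equals $J_{\ell,\mathcal{T}}$ and then invokes an exact inverse-moment inequality for the binomial (Lemma \ref{lemma:binomial}), which gives $\ex[\ind(V>0)V^{-\nu}]\le (\nu+1)!/(np_\ell)^\nu$ by a combinatorial index-shift in the binomial sum. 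You instead truncate at the event $\{V>np_\ell/2\}$ and use a Chernoff lower-tail bound, paying $e^{-cn}=o(n^{-\nu})$ on the complement and $(2/p_\ell)^\nu n^{-\nu}$ on the main event. Both yield $\ex\|J_\ell/\widebar{Z}_\ell\|_\infty^\nu \le C_{\infty,\nu}n^{-\nu}$ for every fixed $\nu$, which is all that \ref{C5} requires. Your route is more elementary (no combinatorial identity needed) and generalizes immediately to any lower bound on $\widebar{Z}$ that concentrates exponentially; the paper's route gives an explicit, clean constant $(\nu+1)!p_\ell^{-\nu}$ and reuses a lemma that is also needed for Settings \ref{settingB} and \ref{settingC}, which is presumably why the authors prefer it. Either argument is complete.
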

As the assumptions in \ref{C1:settingD} are standard in the literature, we only comment on \ref{C2:settingD}. According to \ref{C2:settingD}, with positive probability, subject $i$ is in state $\ell$ at (or more precisely, just before) time point $t$ for any $t \in [\tau_{\min},\tau_{\max}]$. This ensures that there are sufficiently many subjects in the risk set at any $t \in [\tau_{\min},\tau_{\max}]$ to perform estimation. \ref{C2:settingD} thus plays the same role as \ref{C4:settingB} and \ref{C4:settingC} in the results for Settings \ref{settingB} and \ref{settingC}.

\section{Implementation}\label{sec:impl}

The estimator $\hat{\bs{\alpha}}_\lambda$ depends on the tuning parameter $\lambda$. In what follows, we derive a data-driven rule for selecting $\lambda$ in practice. To do so, we exploit the fact that the fused lasso can be expressed in standard lasso form. This allows us to adapt tuning parameter calibration techniques developed for the standard lasso.

Cross-validation is presumably the most prominent technique for selecting the tuning parameters of lasso-type estimators. In our context, however, it should be treated with caution. The main reason is that our data (i.e., the constructed variables $Y_j$) are not independent but have a quite complicated dependence structure. As is well-known \citep[see e.g.][]{Altman1990, OpsomerWangYang2001, RabinowiczRosset2022}, cross-validation may perform very poorly when the data are dependent, and adjustments are needed to deal with the specific dependence structure at hand. Hence, simply using a cross-validation algorithm off-the-shelf will most likely lead to unsatisfactory results in our case. Indeed, we have run different versions of cross-validation in our simulation exercises and have persistently obtained extremely bad results.
%In particular, cross-validation produced extremely small $\lambda$-values, resulting in very strong overfitting of the data.

We thus pursue a different strategy for tuning parameter calibration. Specifically, we work with techniques based on the lasso's effective noise. In a standard linear regression model, the lasso's effective noise is given by $2 \| X^\top u \|_\infty / n$, where $X$ denotes the design matrix and $u$ the error vector of the linear model. Theory for the lasso suggests to choose the tuning parameter $\lambda$ as a high quantile of the effective noise; see e.g. \cite{BelloniChernozhukov2013} and \cite{VogtLederer2021}. These quantiles, however, are usually not known in practice (as the distribution of the error vector $u$ is unknown). Nevertheless, they can be approximated in many cases. Hence, a general strategy for selecting $\lambda$ in our framework is this: Reformulate the fused-lasso-based hazard estimator in standard lasso form, approximate a high quantile (say the $90\%$-quantile) of the effective noise of the reformulated estimator and set the tuning parameter $\lambda$ equal to the computed quantile.
The technical details to implement this strategy are as follows: 
\begin{enumerate}[label=(\roman*),leftmargin=0.85cm]

\item We first repara\-metrize the model $\bs{Y} = \hazvec + \bs{u}$, where $\bs{Y} = (Y_1,\ldots,Y_n)^\top$, $\hazvec = (\haz_1,\ldots,\haz_n)^\top$ and $\bs{u} = (u_1,\ldots,u_n)^ \top$. In particular, we set $\theta_1 = \haz_1$ and $\theta_j = \haz_j - \haz_{j-1}$ for $2 \le j \le n$. This implies that $\hazvec = X \bs{\theta}$, where $\bs{\theta} = (\theta_1,\ldots,\theta_n)^\top$ and the design matrix $X = (X_{ij}: 1 \le i,j \le n)$ has the entries $X_{ij} = \ind(i \ge j)$. With this reparametrization, we can reformulate the model as $\bs{Y} = X\bs{\theta} + \bs{u}$. The reparametrized fused lasso is
\begin{equation}\label{eq:min-theta-1}
\hat{\bs{\theta}}_\lambda = (\hat{\theta}_{\lambda,1},\ldots,\hat{\theta}_{\lambda,n})^\top \in \underset{\bs{\vartheta} \in \reals^n}{\textnormal{argmin}} \Big\{ \frac{1}{n} \| \bs{Y} - X \bs{\vartheta} \|_2^2 + \lambda \| \bs{\vartheta}_{-1} \|_1 \Big\},
\end{equation}
where $\bs{\vartheta}_{-1}$ denotes the vector $\bs{\vartheta}$ without the first component. The estimator $\hat{\bs{\theta}}_\lambda$ differs from the standard lasso only in that the first coefficient is not penalized (that is, the penalty is $\| \bs{\vartheta}_{-1}\|_1$ rather than $\|\bs{\vartheta}\|_1$). Next let $\bs{Y}^{\text{c}} = (Y_1^{\text{c}},\ldots,Y_n^{\text{c}})^\top$ be defined by $Y_i^{\text{c}} = Y_i - \widebar{Y}$ with $\widebar{Y} = n^{-1} \sum_{i=1}^n Y_i$ and let $X^{\text{c}} = (\bs{X}_2^{\text{c}},\ldots,\bs{X}_n^{\text{c}})$ be the $n \times (n-1)$ matrix whose columns $\bs{X}_j^{\text{c}} = (X_{1j}^{\text{c}},\ldots,X_{nj}^{\text{c}})^\top$ have the entries $X_{ij}^{\text{c}} = X_{ij} - \widebar{X}_j$ with $\widebar{X}_j = n^{-1} \sum_{i=1}^n X_{ij}$. Straightforward arguments show that the minimization problem \eqref{eq:min-theta-1} is equivalent to
\begin{equation}\label{eq:min-theta-2}
\hat{\bs{\theta}}_{\lambda,-1} = (\hat{\theta}_{\lambda,2},\ldots,\hat{\theta}_{\lambda,n})^\top
 \in \underset{\bs{\vartheta} \in \reals^{n-1}}{\textnormal{argmin}} \Big\{ \frac{1}{n} \| \bs{Y}^{\text{c}} - X^{\text{c}} \bs{\vartheta} \|_2^2 + \lambda \|\vartheta\|_1 \Big\}
\end{equation}
and $\hat{\theta}_{\lambda,1} = \widebar{Y} - \sum_{j=2}^{n} \widebar{X}_j \hat{\theta}_{\lambda,j}$. Importantly, \eqref{eq:min-theta-2} is a standard lasso problem. Put differently, $\hat{\bs{\theta}}_{\lambda,-1}$ is the standard lasso in the model $\bs{Y}^{\text{c}} = X^{\text{c}} \bs{\theta}_{-1} + \bs{u}^{\text{c}}$, where $\bs{u}^c$ is defined analogously as $\bs{Y}^{\text{c}}$.

\item We now consider the effective noise of the standard lasso $\hat{\bs{\theta}}_{\lambda,-1}$, which is given by $2 \| (X^{\text{c}})^\top \bs{u}^{\text{c}} \|_\infty / n$. Simple calculations show that the effective noise is equal to the maximum statistic
\[ U_n := 2 \max_{2 \le j \le n} \bigg| -\frac{1}{n} \sum_{i < j} u_i + \frac{j-1}{n^2} \sum_{i=1}^n u_i \bigg|. \]
In order to approximate the quantiles of $U_n$, we run the following multiplier (or wild) bootstrap: We first compute the lasso residuals $\hat{u}_{\lambda_0,i} = Y_i - \hat{\alpha}_{\lambda_0,i}$ for $i=1,\ldots,n$, where $\lambda_0$ is a pilot tuning parameter whose choice is explained below. We then draw standard normal random vectors $\epsilon^{(\ell)} = (\epsilon^{(\ell)}_1,\ldots,\epsilon^{(\ell)}_n)$ for $\ell=1,\ldots,L$, where $L$ is a large number (say $L=1000$). The values $\{ \hat{u}_{\lambda_0,i} \cdot \epsilon_i^{(\ell)}: i=1,\ldots,n\}$ serve as the $\ell$-th bootstrap sample of the error terms $\{u_i: i=1,\ldots,n\}$. We next compute 
\begin{equation}\label{eq:impl-bootstrap}
U_n^{(\ell)} := 2 \max_{2 \le j \le n} \bigg| -\frac{1}{n} \sum_{i < j} \hat{u}_{\lambda_0,i} \epsilon^{(\ell)}_i + \frac{j-1}{n^2} \sum_{i=1}^n \hat{u}_{\lambda_0,i} \epsilon^{(\ell)}_i \bigg| 
\end{equation}  
for each $\ell$ and determine the empirical $q$-quantile $z_n(q)$ of the bootstrap sample $\{U_n^{(\ell)}: \ell =1,\ldots,L\}$. Regarding $z_n(q)$ as an approximation of the theoretical $q$-quantile of $U_n$, we finally set $\lambda = z_n(q)$.  

\item It remains to choose the starting value $\lambda_0$ in the above bootstrap procedure. The chosen value needs to be such that the lasso residuals $\hat{u}_{\lambda_0,i}$ give a reasonable approximation to the error terms $u_i$. This requires that the estimator $\hat{\bs{\alpha}}_{\lambda_0}$ is reasonably close to $\hazvec$. However, it is not necessary that the number of jumps in $\hat{\bs{\alpha}}_{\lambda_0}$ is close to the number of jumps in $\hazvec$. Indeed, even if $\hat{\bs{\alpha}}_{\lambda_0}$ has many spurious small jumps which do not correspond to any real jump in $\hazvec$, the discretized curve $\hat{\bs{\alpha}}_{\lambda_0}$ may still be close in shape to $\hazvec$. Hence, somewhat overestimating the number of change points in $\hazvec$ should not do any harm. In most applications, it is natural to assume that the discretized hazard rate $\hazvec$ has a rather small number of change points $K$. We may thus choose $\lambda_0$ such that $\hat{\bs{\alpha}}_{\lambda_0}$ has a moderately large number of jumps $K_{\max}$ (say $K_{\max}$ between $10$ and $20$) which most likely exceeds the true number of jumps $K$ a bit. This should equip us with an acceptable initial estimator $\hat{\bs{\alpha}}_{\lambda_0}$ of the discretized hazard $\hazvec$. 
\end{enumerate}

In summary, our algorithm for selecting $\lambda$ is as follows:
\begin{enumerate}[label=Step \arabic*., leftmargin=1.5cm]
\item Set $\lambda_0$ to be the largest value such that the resulting estimator $\hat{\bs{\alpha}}_{\lambda_0}$ has $K_{\max}$ change points and compute the lasso residuals $\hat{u}_{\lambda_0,i} = Y_i - \hat{\alpha}_{\lambda_0,i}$ for $i=1,\ldots,n$.
\item For $\ell=1,\ldots,L$, draw standard normal random vectors $\epsilon^{(\ell)} = (\epsilon^{(\ell)}_1,\ldots,\epsilon^{(\ell)}_n)$, compute $U_n^{(\ell)}$ as defined in \eqref{eq:impl-bootstrap} and determine the empirical $q$-quantile $z_n(q)$ of the computed values.
\item Set $\lambda = z_n(q)$ with $q$ close to $1$. 
\end{enumerate}

\section{Simulation study}\label{sec:sim}

% Simulating a Cox model: Austin, Generating survival times to simulate Cox proportional hazards models with time-varying covariates, Statistics in Medicine.

The simulation study splits up into two parts. In the first, we investigate the performance of our fused lasso approach by a series of Monte Carlo experiments. In the second, we compare it to alternative methods.

\subsection{Finite sample performance of the fused lasso approach}

%To start with, we investigate the performance of our methods by Monte Carlo experiments. 
We simulate data from different designs that fit into Settings \ref{settingA} and \ref{settingB}:
\begin{itemize}[leftmargin=0.5cm]
\item We consider two different piecewise constant hazard functions whose form is loosely inspired by the estimated hazard functions in the application of Section \ref{sec:app}: 
\begin{align*}
\haz_{[1]}(t) & = 4 \cdot \ind(t < 0.25) + 1 \cdot \ind(t \ge 0.25) \\
\haz_{[2]}(t) & = 4 \cdot \ind(t < 0.2) + 1.5 \cdot \ind(0.2 \le t < 0.6) + 0.5 \cdot \ind(t \ge 0.6). 
\end{align*} 
\item We consider the model from Setting \ref{settingA} with no covariates at all as well as the Cox model from Setting \ref{settingB} with different covariate vectors, in particular, a low-dimensional version of the Cox model with only $p=2$ covariates and a higher-dimensional version with $p=100$ covariates. The covariate vector $\bs{W}_i = (W_{i1},\ldots$ $\ldots,W_{ip})^\top$ is defined as follows: the covariates $W_{ij}$ with odd indices $j=1,3,\ldots$  are binary variables which take the values $-1$ and $1$ with probability $0.5$ each, whereas the covariates $W_{ij}$ with even indices $j=2,4,\ldots$ are uniformly distributed on $[-1,1]$. For simplicity, we assume that the covariates are independent from each other. The parameter vector is chosen as $\bs{\beta} = (\beta_1,\beta_2)^\top = (0.25,1)^\top$ in the low-dimensional case and as $\bs{\beta} = (\beta_1,\beta_2,\beta_3,\ldots,\beta_{100})^\top = (0.25,1,0,\ldots,0)^\top$ in the higher-dimensional case. 
\item The right-censoring variables $C_i$ are assumed to be independent from $(T_i^*,\bs{W}_i)$ and are drawn from an exponential distribution with parameter $0.5$, resulting in around 20--25\% censored data points in all considered simulation scenarios.
% percentage censoring: 0.18 (A1), 0.21 (B1), 0.21 (A2), 0.24 (B2)  
\item For estimation, we consider the unit interval $[\tau_{\min},\tau_{\max}] = [0,1]$ and three different sample sizes $n \in \{500,1000,2000\}$.
\end{itemize}

\begin{table}[t]
\centering
\small{\begin{tabular}{rccc}
\hline\hline
Scenario & hazard  & sample size & covariates \\
\hline   
S$_0$[1] & $\haz_{[1]}$ & $n \in \{500,1000,2000\}$ & 0 \\
S$_2$[1] & $\haz_{[1]}$ & $n \in \{500,1000,2000\}$ & 2 \\
S$_{100}$[1] & $\haz_{[1]}$ & $n \in \{500,1000,2000\}$ & 100 \\
S$_0$[2] & $\haz_{[2]}$ & $n \in \{500,1000,2000\}$ & 0 \\
S$_2$[2] & $\haz_{[2]}$ & $n \in \{500,1000,2000\}$ & 2 \\
S$_{100}$[2] & $\haz_{[2]}$ & $n \in \{500,1000,2000\}$ & 100 \\
\hline\hline
\end{tabular}}
\caption{List of simulation scenarios.} \label{table:sim-scenarios}
\end{table}

The resulting simulation scenarios are listed in Table \ref{table:sim-scenarios}. Estimating the piecewise constant hazard in these scenarios is a non-trivial problem. This is reflected by fairly low signal-to-noise ratios in the regression model \eqref{eq:signal-plus-noise-model} which underlies our fused lasso method. The signal-to-noise ratio in model \eqref{eq:signal-plus-noise-model} is defined as the empirical variance of the signal vector $\hazvec = (\haz_1,\ldots,\haz_n)^\top$ divided by that of the noise vector $\bs{u} = (u_1,\ldots,u_n)^\top$. Formally, $\text{SNR} = \hat{V}(\hazvec)/\hat{V}(\bs{u})$, where $\hat{V}(\bs{w})$ is the empirical variance of a generic vector $\bs{w} = (w_1,\ldots,w_n)^\top$. In the considered simulation scenarios, the SNR (averaged over $1000$ simulation runs) ranges between $0.23$ and $0.33$. This shows that there is much more variation in the noise term than in the signal, implying that it is quite difficult to estimate the signal in this model. 
% SNR: 0.27 (A1), 0.23 (B1), 0.33 (A2), 0.28 (B2)

Throughout the simulation study, we implement our estimator as described in Section \ref{sec:impl} with $q=0.9$, $K_{\max} = 20$ and $L=100$. In order to estimate the parameter vector $\bs{\beta}$ in the simulation scenarios with a Cox model in the background, we proceed as follows: in the low-dimensional case with $p=2$, we employ a standard partial likelihood estimator as implemented in the \texttt{R}-package \rpack{survival}, whereas in the higher-dimensional case with $p=100$, we make use of an $\ell_1$-penalized (i.e.\ a lasso) version of the partial likelihood estimator as implemented in the \texttt{R}-package \rpack{glmnet} with default 10-fold cross-validated penalty parameter. The results of our simulation experiments are reported in Figures \ref{fig:settingS01}--\ref{fig:settingS1002} and Tables \ref{table:L2error}--\ref{table:asymdist}. As they are qualitatively very similar across the considered scenarios, we proceed as follows: we discuss the results for one of the scenarios -- in particular, for Scenario S$_0$[2] -- in detail while keeping the discussion of the other scenarios rather short.

\begin{figure}[t]
\centering
\includegraphics[width=\textwidth]{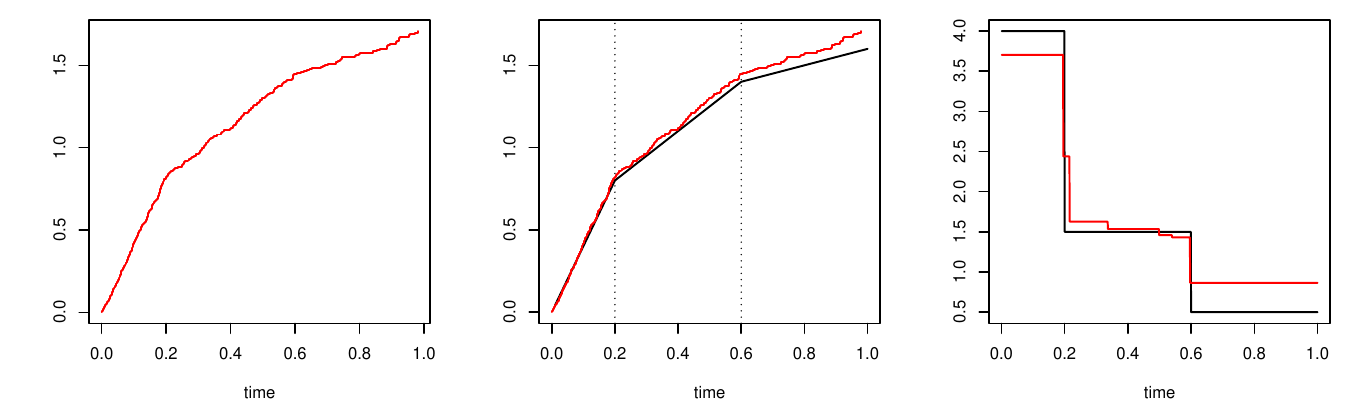}
\caption{Results of a representative simulation run in Scenario S$_0$[2] with $n=1000$. Left: Breslow (i.e., Nelson-Aalen) estimator. Middle: cumulative hazard in black and Breslow estimator in red. Right: hazard rate in black and its fused lasso estimator in red.}\label{fig:single-run} 
\end{figure}

To start with, we have a closer look at a representative simulation run in Scenario S$_0$[2] (with $n=1000$). Figure \ref{fig:single-run} summarizes the results of this run. The red curve in the left-hand panel is the Breslow (i.e., Nelson-Aalen) estimator $\hat{A}_{[2]}(t)$ of the cumulative hazard $A_{[2]}^*(t) = \int_0^t \haz_{[2]}(s) ds$. The middle panel shows the estimator $\hat{A}_{[2]}$ once again (in red) together with the piecewise linear cumulative hazard $A_{[2]}^*$ (in black), where the kinks of $A_{[2]}^*$ at $t=0.2$ and $t=0.6$ are indicated by vertical dotted lines. The right-hand panel depicts the hazard $\haz_{[2]}$ in black and its fused lasso estimator in red. The left-hand (and middle) panel of Figure \ref{fig:single-run} demonstrate that it is not so easy to spot the piecewise linear structure of the function $A_{[2]}^*$ in the estimator $\hat{A}_{[2]}$. This illustrates that estimating the kinks and slope parameters of $A_{[2]}^*$ -- or equivalently, the change points and function values of $\haz_{[2]}$ -- is not trivial at all. Nevertheless, the fused lasso estimator (depicted in the right-hand panel in red) does a pretty good job in reconstructing the piecewise constant hazard $\haz_{[2]}$ (depicted in black). Several typical features of the estimator become visible here:
\begin{enumerate}[label=(\roman*),leftmargin=0.85cm]

\item The fused lasso has a bias. This has two reasons: First, as any other member of the lasso family, it is a shrinkage estimator and thus is biased by construction. Second, our fused lasso method is based on the Breslow estimator. Hence, any imprecisions of this estimator are ``inherited'' by our method. In the case at hand, the Breslow estimator is sloping upwards more strongly than the cumulative hazard after $t=0.6$. This induces a substantial upward bias in our fused lasso estimator after $t=0.6$.
  
\item The fused lasso often reconstructs jumps in the underlying hazard by a sequence of consecutive jumps rather than one big jump. In the case at hand, the fused lasso does not mimic the true change point at $t=0.2$ by one big jump but rather by two jumps in quick succession.
  
\item The fused lasso may have additional jumps far away from the true change points. However, these jumps tend to be small and thus negligible. In the case at hand, the fused lasso has additional small jumps between the two true change points at $t=0.2$ and $t=0.6$.

\end{enumerate}
We will re-encounter these features of the fused lasso when discussing the results of our simulation experiments below.

Figure \ref{fig:settingS02} depicts the results of $100$ simulation runs in Scenario S$_0$[2] for different sample sizes. The upper three panels show the $100$ estimated hazard rates $\hat{\bs{\alpha}}_\lambda$ (in grey) together with the true underlying hazard (in black). The lower three panels give the locations and jump heights of the change points estimated over the $100$ simulation runs. Each grey circle represents one estimated change point. The $x$-axis specifies the location of the change point, whereas the $y$-axis gives the jump height. The larger red circles correspond to the location and jump height of the two true change points of the underlying hazard. As can be seen from the plots, the fused lasso estimates capture the piecewise constant step structure of the hazard pretty well, the approximation getting better as the sample size increases. The three features discussed in the context of Figure \ref{fig:single-run} also become visible here: 
\begin{enumerate}[label=(\roman*),leftmargin=0.85cm]

\item The fused lasso estimates have a bias which is most pronounced for $n=500$ and gets smaller as the sample size increases. As already discussed, this bias has two sources: the shrinkage done by the fused lasso estimator and bias inherited from the Breslow estimator.

\item The lower three panels show that many estimated jumps are located fairly close to the two true change points at $t=0.2$ and $t=0.6$. However, most of these jumps are substantially smaller in size than the true jumps (indicated by the two red dots). The reason is that the lasso often uses several smaller jumps in quick succession to reconstruct the true jumps at $t=0.2$ and $t=0.6$. 

\item Some of the fused lasso estimates have additional spurious jumps quite distant from the true change points $t=0.2$ and $t=0.6$. These jumps, however, tend to be comparatively small. (See also the plots for the scenarios with the hazard $\haz_{[1]}$ which give an even better illustration of this point.)
 
\end{enumerate}
Having summarized the main simulation results for Scenario S$_0$[2], we now briefly comment on the findings for the other two scenarios with the hazard $\haz_{[2]}$ (i.e., Scenarios S$_2$[2] and S$_{100}$[2]), which differ from S$_0$[2] only by incorporating covariates. The results for S$_2$[2] and S$_{100}$[2] in Figures \ref{fig:settingS22} and \ref{fig:settingS1002} are very similar to those for S$_0$[2] even though somewhat less precise. This is not surprising as the estimation of the parameter vector $\bs{\beta}$ produces additional approximation error, which gets reflected in somewhat less precise estimation results. It is worth noting that the results for the higher-dimensional Scenario S$_{100}$[2] are comparable to those for the low-dimensional Scenario S$_2$[2], despite the substantially larger number of covariates. This indicates that, in the setting at hand, the penalized partial likelihood estimator is able to recover the parameter vector $\bs{\beta}$ with sufficient accuracy, so that the increased dimensionality has only a minor impact on the estimation of the hazard function.
%\textcolor{red}{It is quite surprising, that the results for the high-dimensional Scenarios S$_{100}$[1] and S$_{100}$[2] yield competitive results compared to their counter parts with only two covariates.
%However, this follows from the very precise parameter selection properties of the lasso-type partial likelihood estimation and accurate parameter estimation, even if there is a little shrinkage bias. }
%\textcolor{red}{It is also not surprising that the results for S$_{100}$[2] are a bit less precise than those for S$_2$[2], as parameter estimation in the higher-dimensional setting with $p=100$ covariates is more difficult than in the low-dimensional setting with only $p=2$ covariates.} \mv{[MV: I am quite astonished that the plots for $p=2$ and $p=100$ look virtually the same. I think this is a bit suspicious. Double-check whether lasso code is correct!]} 
The results for the scenarios with the hazard $\haz_{[1]}$ (i.e., Scenarios S$_0$[1], S$_2$[1] and S$_{100}$[1] depicted in Figures \ref{fig:settingS01}, \ref{fig:settingS21} and \ref{fig:settingS1001}) are qualitatively very similar to those for the scenarios with the hazard $\haz_{[2]}$ (i.e., Scenarios S$_0$[2], S$_2$[2] and S$_{100}$[2]). In particular, the main points of our discussion apply to the scenarios with the hazard $\haz_{[1]}$ as well. 
%equally well as to those with the hazard $\haz_{[2]}$.

\begin{figure}[p]
\centering
\includegraphics[width=\textwidth]{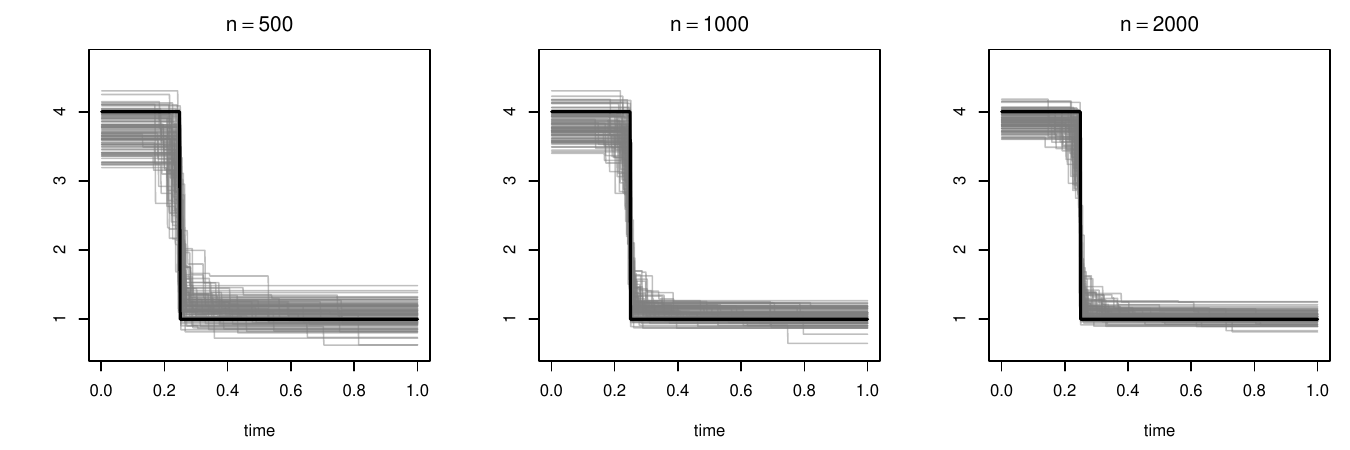}
\includegraphics[width=\textwidth]{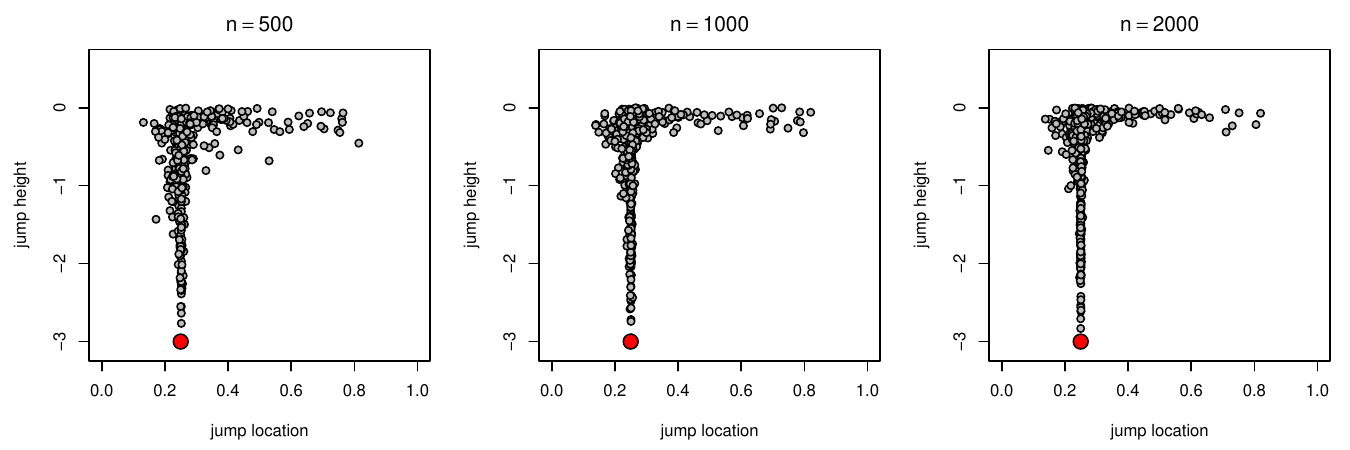}
\caption{Simulation results for Scenario S$_0$[1].}\label{fig:settingS01}
\vspace{1cm}

\includegraphics[width=\textwidth]{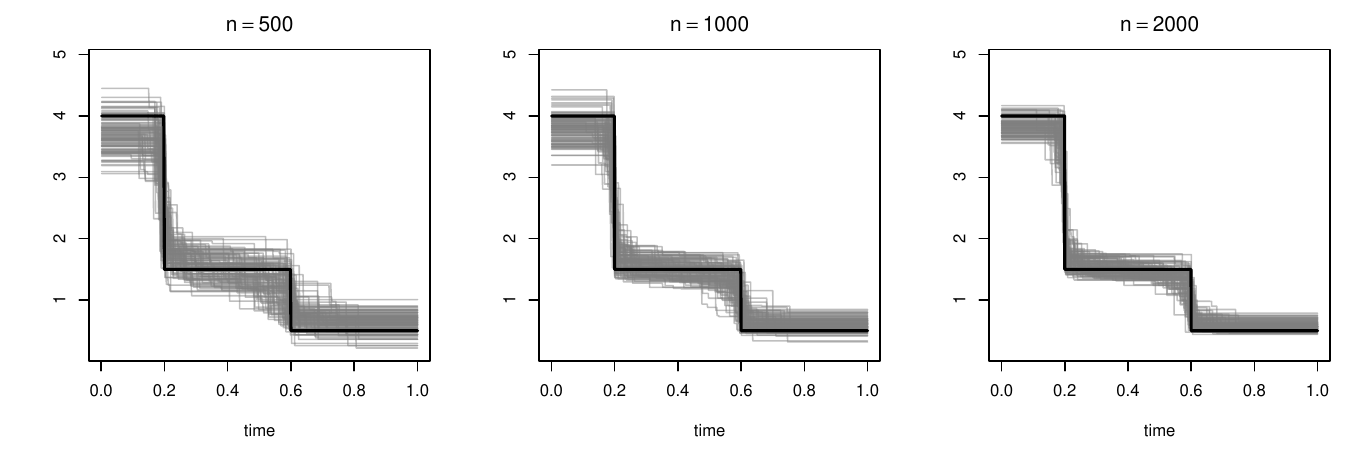}
\includegraphics[width=\textwidth]{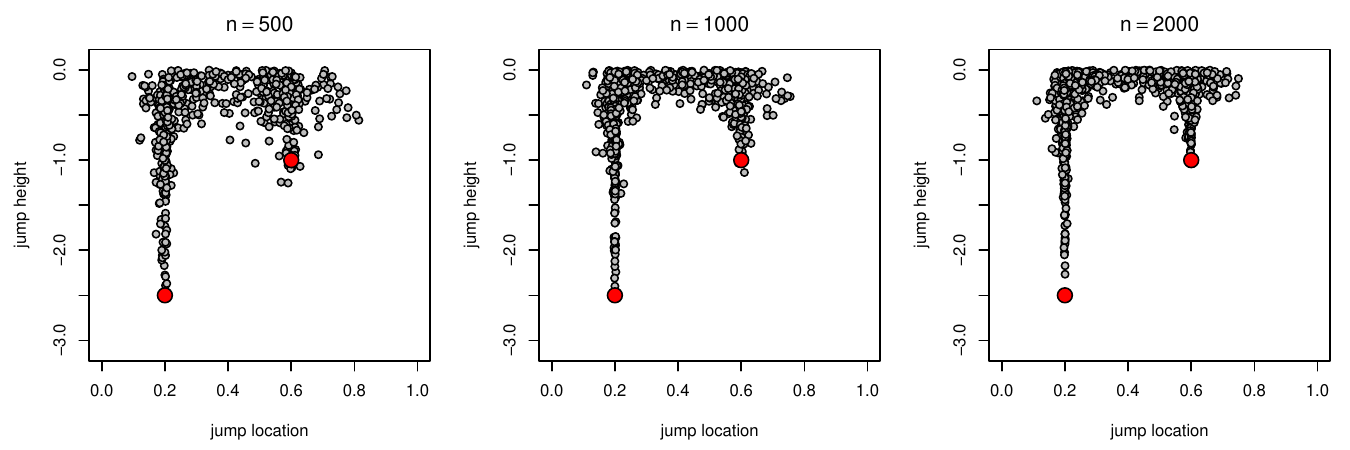}
\caption{Simulation results for Scenario S$_0$[2].}\label{fig:settingS02}
\end{figure}

\begin{figure}[p]
\includegraphics[width=\textwidth]{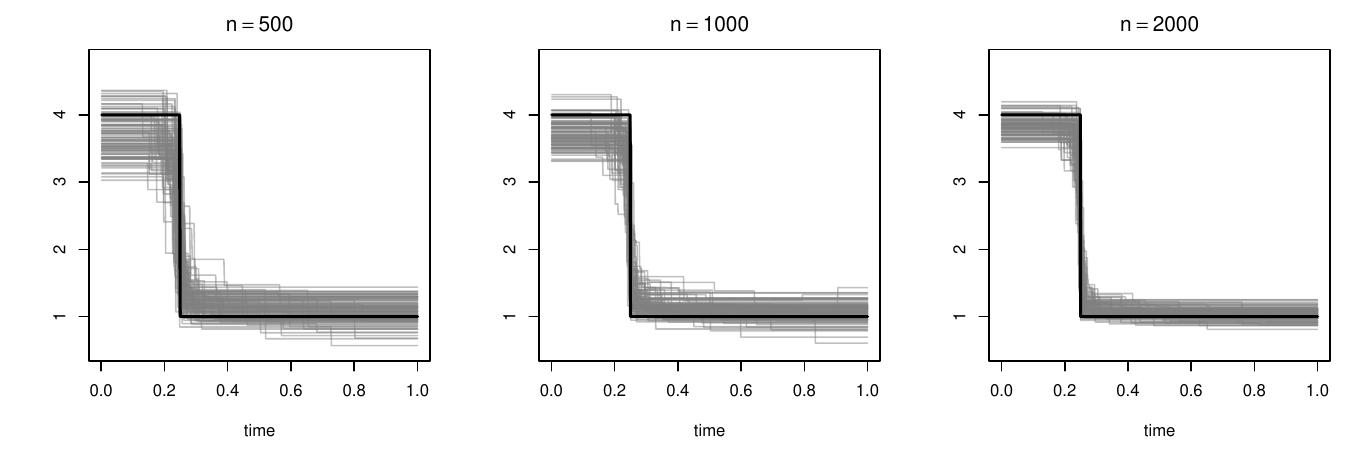}
\includegraphics[width=\textwidth]{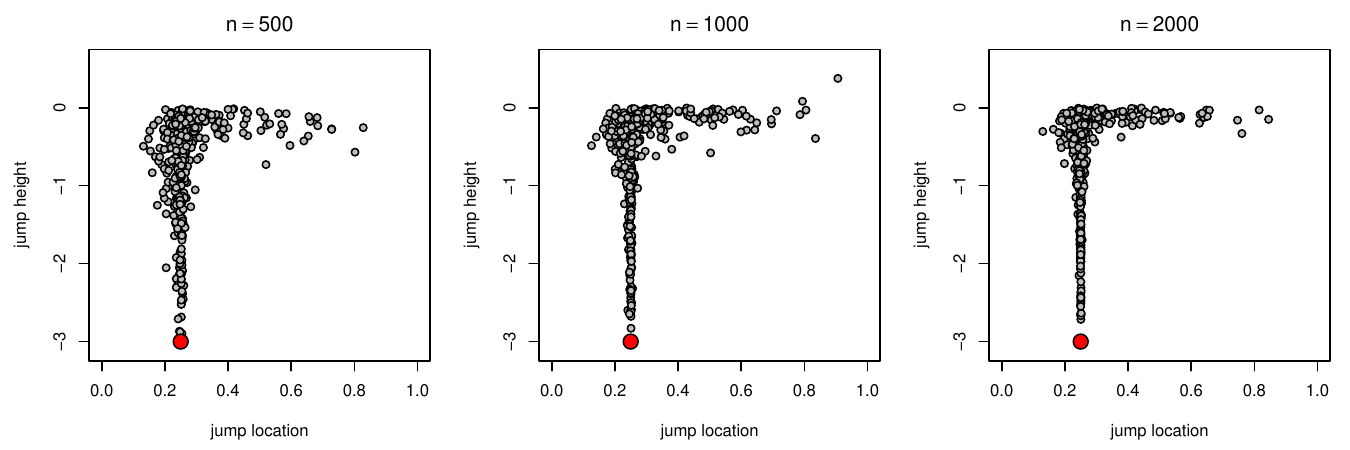}
\caption{Simulation results for Scenario S$_2$[1].}\label{fig:settingS21}
\vspace{1cm}

\includegraphics[width=\textwidth]{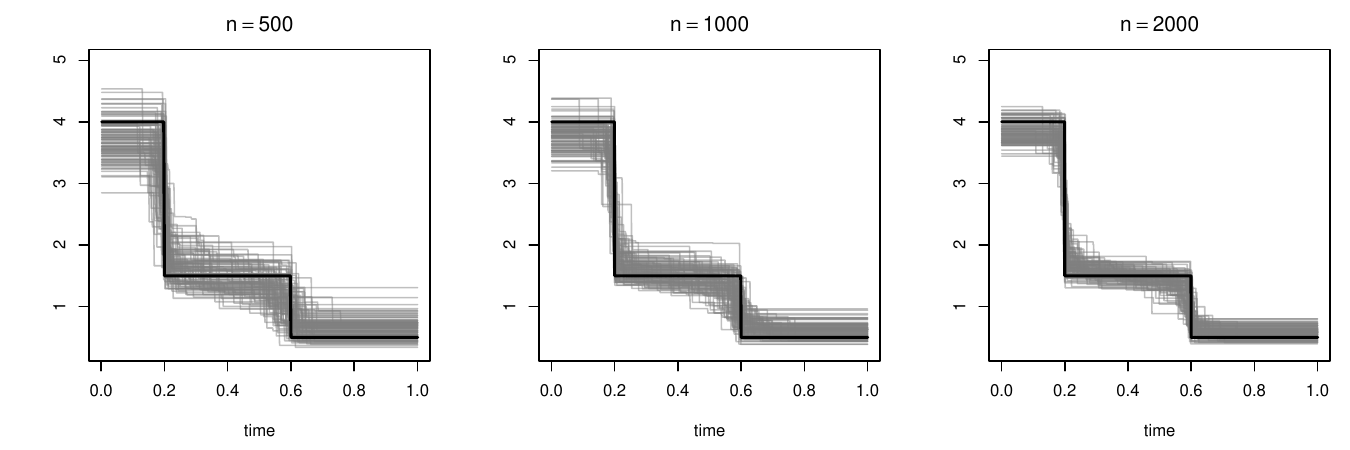}
\includegraphics[width=\textwidth]{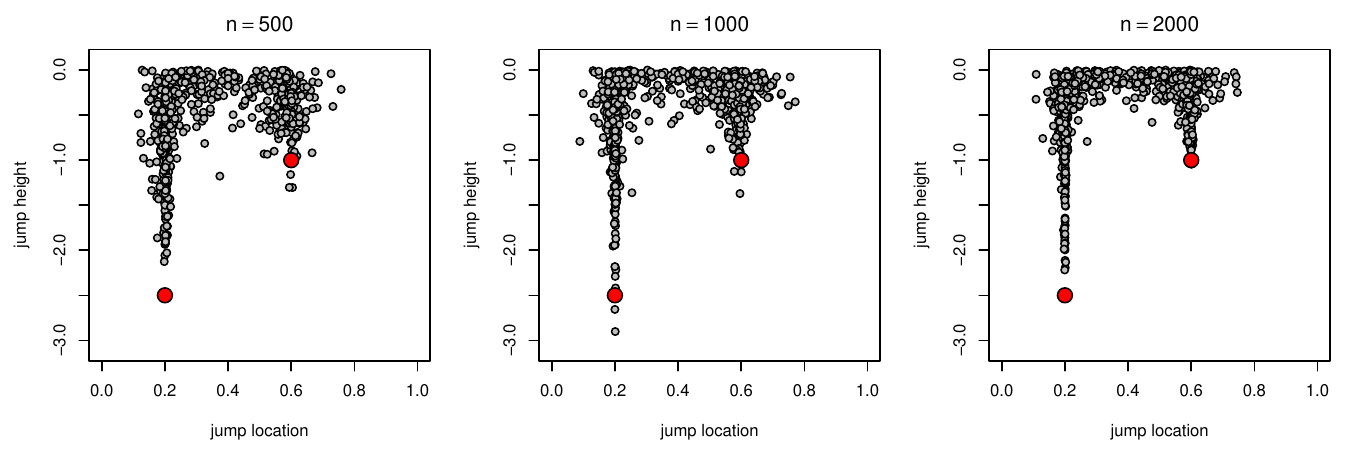}
\caption{Simulation results for Scenario S$_2$[2].}\label{fig:settingS22}
\end{figure}

\begin{figure}[p]
\centering
\includegraphics[width=\textwidth]{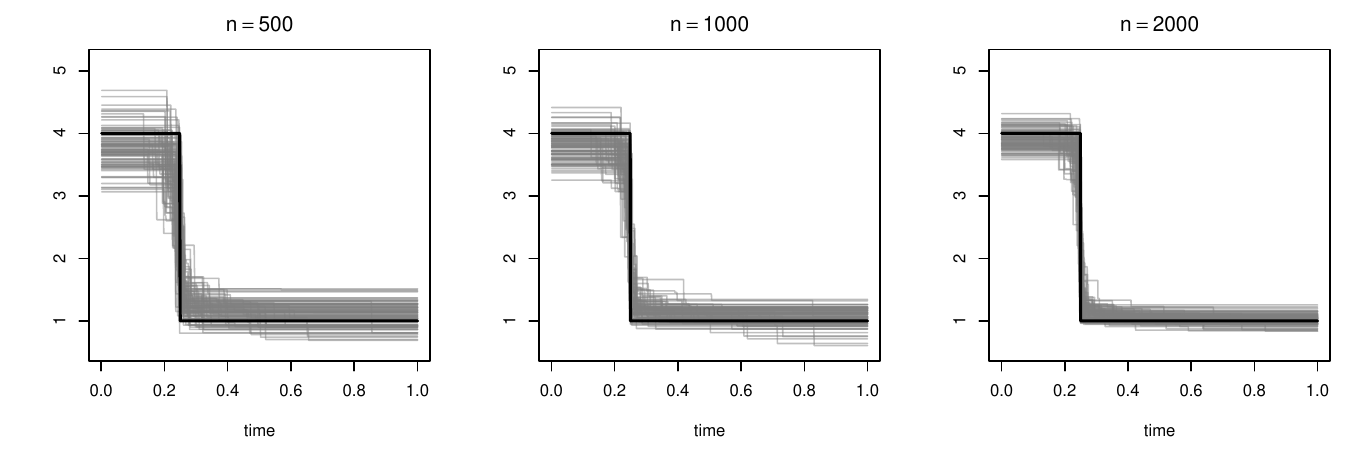}
\includegraphics[width=\textwidth]{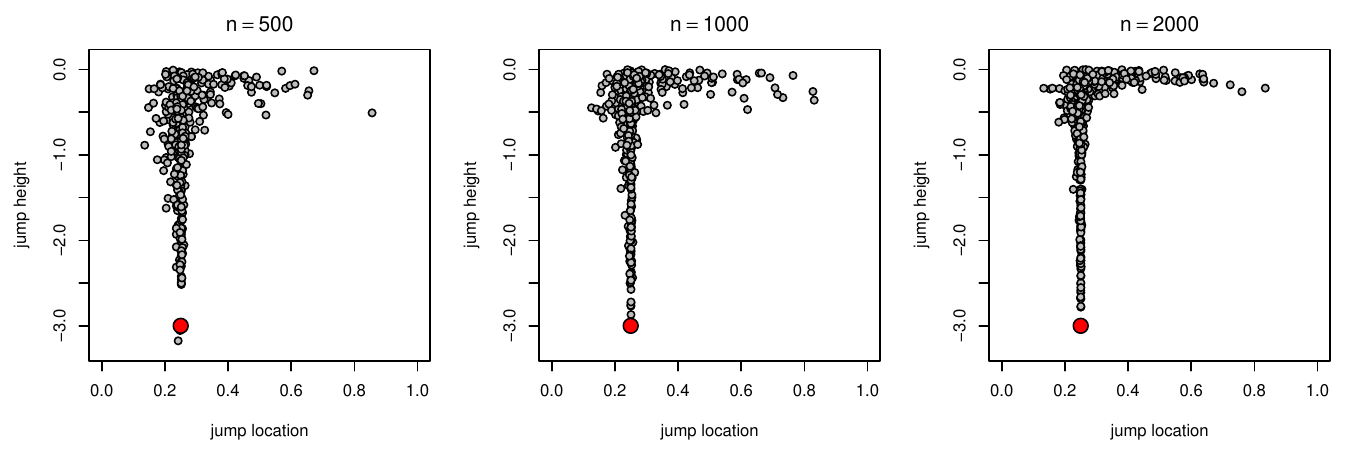}
\caption{Simulation results for Scenario S$_{100}$[1]. 
%\textcolor{red}{Delete fat bold heading in plots, i.e., use same format as in previous plots.}
}\label{fig:settingS1001}
\vspace{1cm}

\includegraphics[width=\textwidth]{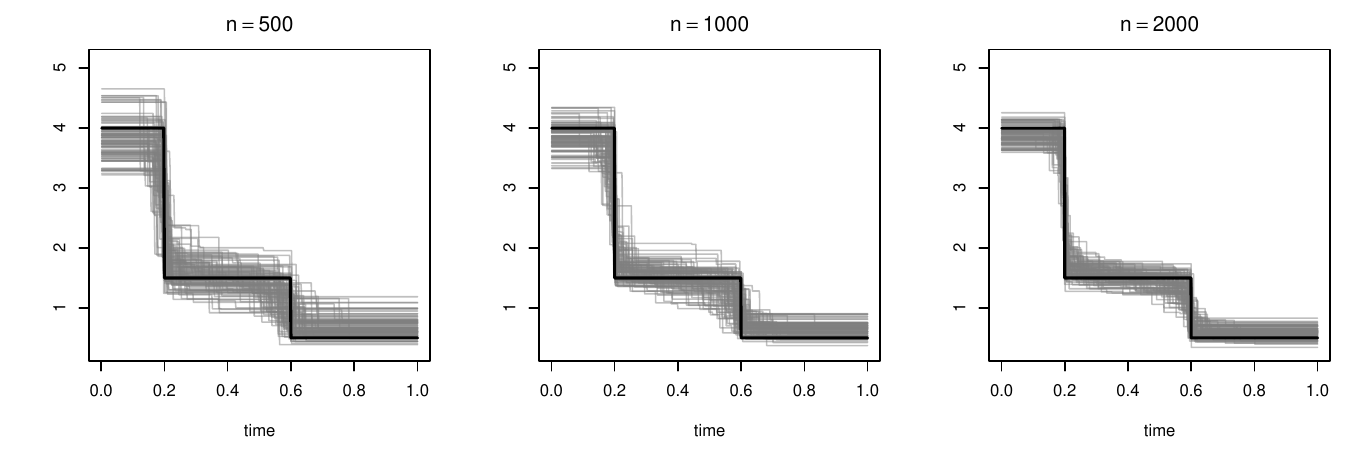}
\includegraphics[width=\textwidth]{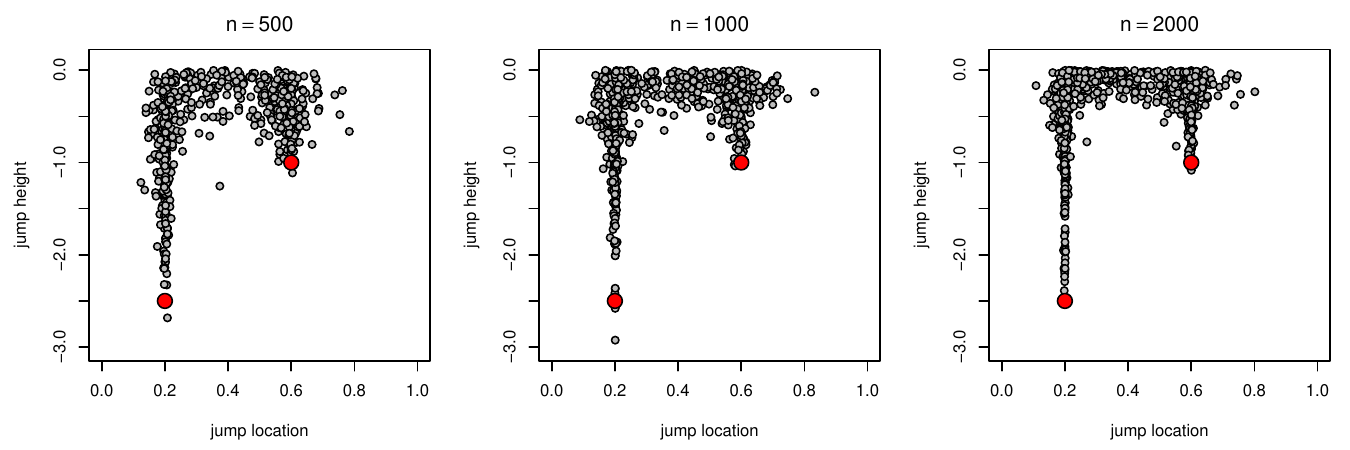}
\caption{Simulation results for Scenario S$_{100}$[2]. 
%\textcolor{red}{Delete fat bold heading in plots, i.e., use same format as in previous plots.}
}\label{fig:settingS1002}
\end{figure}

Tables \ref{table:L2error} and \ref{table:asymdist} complement the visual results from Figures \ref{fig:settingS01}--\ref{fig:settingS1002} by summary statistics. All reported numbers are based on $1000$ simulation runs.
Table \ref{table:L2error} gives information on how accurately the estimator $\hat{\bs{\alpha}}_\lambda$ reconstructs the true underlying hazard curve: it reports the average of the relative $\ell_2$-error $\|\hat{\bs{\alpha}}_\lambda - \hazvec\|_2^2/\norm{\hazvec}_2^2$ over the $1000$ simulation runs along with the standard deviation in brackets. As expected, the average error (as well as the standard deviation) gets smaller as the sample size increases, implying that the quality of the reconstruction gets better with growing sample size. 
Table \ref{table:asymdist} is concerned with the accuracy of change-point estimation. 
Assessing change-point detection accuracy is somewhat challenging in our general framework. To illustrate this, consider a simplified version of our piecewise constant hazard model \eqref{eq:pc-hazard} where the number of change points $K$ is known. In this setting, we can tune our estimator (by appropriate choice of $\lambda$) to produce exactly $K$ estimated change points. Let $\tau_{\mathrm{vec}}=(\tau_1,\ldots,\tau_K)^\top$ with $\tau_1<\cdots<\tau_K$ denote the vector of true change points and $\hat{\tau}_{\mathrm{vec}}=(\hat{\tau}_1,\ldots,\hat{\tau}_K)^\top$ with $\hat{\tau}_1<\cdots<\hat{\tau}_K$ the vector of estimated change points. The change-point detection accuracy can be assessed very easily in this case, e.g., by computing the $\ell_1$- or $\ell_2$-distance between $\tau_{\mathrm{vec}}$ and $\hat{\tau}_{\mathrm{vec}}$. In our setting, such a simple approach is no longer applicable because the numbers of true and estimated change points generally differ. Consequently, the vectors $\tau_{\mathrm{vec}}$ and $\hat{\tau}_{\mathrm{vec}}$ need not have the same length, so that no natural one-to-one correspondence between true and estimated change points exists. We therefore require a more general measure of change-point detection accuracy. We in particular make use of the asymmetric distance measure $d(\hat{\mathcal{S}}_\lambda,\mathcal{S}^*)$, which was studied theoretically in Section~\ref{sec:theory} and has also been employed in previous work \citep[see e.g.][]{LinSharpnackRinaldoTibshirani2017}.
Table \ref{table:asymdist} reports the average value of $d(\hat{\mathcal{S}}_\lambda,\mathcal{S}^*)$ over the $1000$ simulation runs, together with its standard deviation (shown in parentheses). The interpretation of the measure $d(\hat{\mathcal{S}}_\lambda,\mathcal{S}^*)$ is straightforward: if $d(\hat{\mathcal{S}}_\lambda,\mathcal{S}^*)$ equals $\delta$, then every true change point has at least one estimated change point within distance $\delta$. The average values and standard deviations reported in Table \ref{table:asymdist} are very small already for $n=500$ and get even smaller with increasing sample size. Hence, our estimator tends to have change points very close to the true ones. This, however, does not exclude the existence of additional ``spurious'' change points far away from the true ones. According to Theorem \ref{theo:change-point-II}, such ``spurious'' change points are negligible in the sense of having small jump size. As turns out, it is quite difficult to underpin this theoretical finding numerically as this would require to define precisely what constitutes a ``spurious'' change point. For instance, one could call an estimated change point ``spurious'' if it is further away from any true change point than a pre-specified threshold value, say $0.1$. As such a definition is quite adhoc, we have not attempted anything of this kind. We rather think that the visual results in Figures \ref{fig:settingS01}--\ref{fig:settingS1002} speak for themselves: they illustrate quite clearly that ``spurious'' jumps far away from the true change points tend to be comparatively small.

\begin{table}[t]
\centering

%\footnotesize{\begin{tabular}{lcccccccc}
%\hline\hline    
% & Scenario A1  & Scenario B1 & Scenario A2 & Scenario B2 \\
%\hline
%$n=500$  & 0.102 (0.062) & 0.124 (0.080) & 0.118 (0.056) & 0.134 (0.071) \\
%$n=1000$ & 0.058 (0.033) & 0.071 (0.042) & 0.069 (0.032) & 0.078 (0.039) \\
%$n=2000$ & 0.031 (0.017) & 0.037 (0.021) & 0.038 (0.017) & 0.044 (0.019) \\
%\hline\hline
%\end{tabular}}
%\caption{Average values of the squared $\ell_2$-error $\norm{\hat{\bs{\alpha}}_\lambda - \hazvec}_2^2/n$ with standard deviations in brackets.}\label{table:L2error} 
%\vspace{0.35cm}

\footnotesize{
\makebox[\textwidth][c]{%
\begin{tabular}{lcccccc}
\hline\hline  
Scenario & S$_0$[1] & S$_0$[2] & S$_2$[1] & S$_2$[2] & S$_{100}$[1] & S$_{100}$[2] \\
\hline
$n=500$  & 0.021 (0.013) & 0.030 (0.019) & 0.025 (0.012) & 0.032 (0.017) & 0.025 (0.014) & 0.029 (0.013) \\ 
$n=1000$ & 0.012 (0.007) & 0.017 (0.010) & 0.015 (0.007) & 0.019 (0.009) & 0.014 (0.008) & 0.018 (0.010) \\
$n=2000$ & 0.007 (0.004) & 0.009 (0.005) & 0.008 (0.004) & 0.010 (0.005) & 0.006 (0.003) & 0.010 (0.004) \\
\hline\hline
\end{tabular}}}
\caption{Average values of the relative squared $\ell_2$-error $\norm{\hat{\bs{\alpha}}_\lambda - \hazvec}_2^2/\norm{\hazvec}_2^2$, with standard deviations reported in brackets.}\label{table:L2error} 
\vspace{0.35cm}

\footnotesize{
\makebox[\textwidth][c]{%
\begin{tabular}{lcccccc}
\hline\hline    
Scenario & S$_0$[1] & S$_0$[2] & S$_2$[1] & S$_2$[2] & S$_{100}$[1] & S$_{100}$[2] \\
\hline
$n=500$  & 0.002 (0.003) & 0.016 (0.022) & 0.003 (0.003) & 0.019 (0.026) & 0.04 (0.003) & 0.021 (0.032) \\
$n=1000$ & 0.001 (0.001) & 0.008 (0.009) & 0.001 (0.002) & 0.009 (0.011) & 0.002 (0.002) & 0.009 (0.011) \\
$n=2000$ & 0.001 (0.001) & 0.004 (0.005) & 0.001 (0.001) & 0.004 (0.005) & 0.001 (0.001) & 0.004 (0.006) \\
\hline\hline
\end{tabular}}}
\caption{Average values of the asymmetric distance measure $d(\hat{\mathcal{S}}_\lambda,\mathcal{S}^*)$, with standard deviations reported in brackets.}\label{table:asymdist}   

\end{table}

\subsection{Comparison study}

We now compare our fused lasso approach with two alternative methods. The first is a sequential testing procedure for detecting multiple change points in piecewise constant hazards, which we refer to as the SeqTest method. Specifically, we consider the method proposed by \cite{GoodmanLiTiwari2011} and implemented in the \texttt{R}-package \rpack{eventTrack} by \cite{rufibach2023eventtrack}. The second is a spline-based method for estimating general hazard functions, in particular, Piecewise Exponential Additive Modelling (PAM) as proposed by \cite{BenderGrollScheipl2018} and implemented in the \texttt{R}-package \rpack{pammtools} \citep{bender2018pammtools}. The comparison with the former method allows us to assess the change-point detection performance of our approach, whereas the comparison with the latter method provides insight into its curve fitting performance. As the method of \cite{GoodmanLiTiwari2011} has only been developed for a simple survival model without covariates, we restrict attention to Scenarios S$_0$[1] and S$_0$[2] in what follows.

\begin{figure}[p]
\centering
\includegraphics[width=\textwidth]{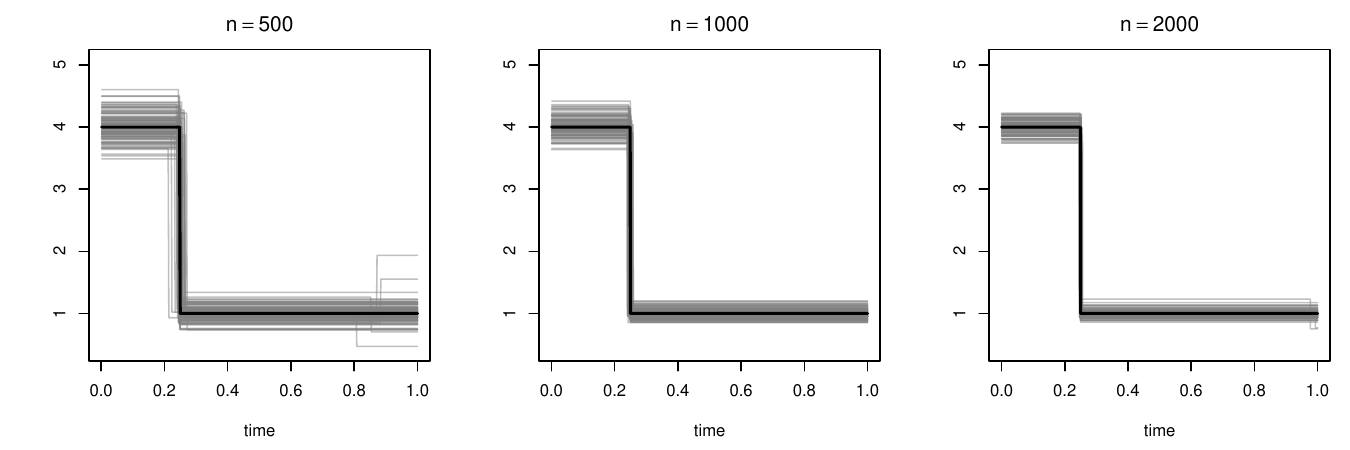}
\includegraphics[width=\textwidth]{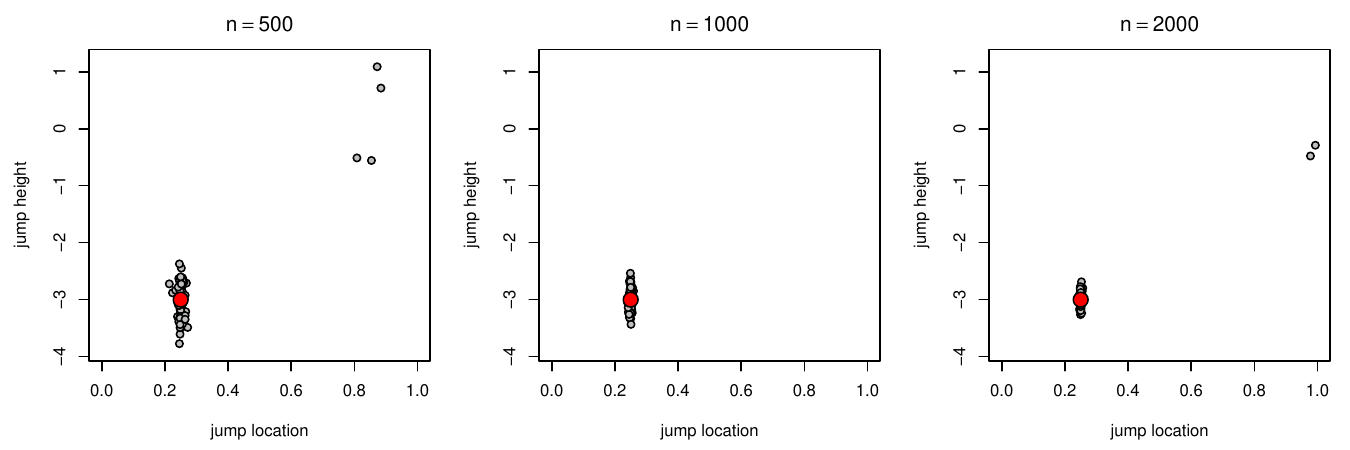}
\caption{Simulation results for the SeqTest method in Scenario S$_0$[1]. 
%\textcolor{red}{Replace plots! Delete fat bold heading in plots, i.e., use same format as in plots of previous subsection.}
}\label{fig:settingS01_seqtest}
\vspace{0.5cm}

\includegraphics[width=\textwidth]{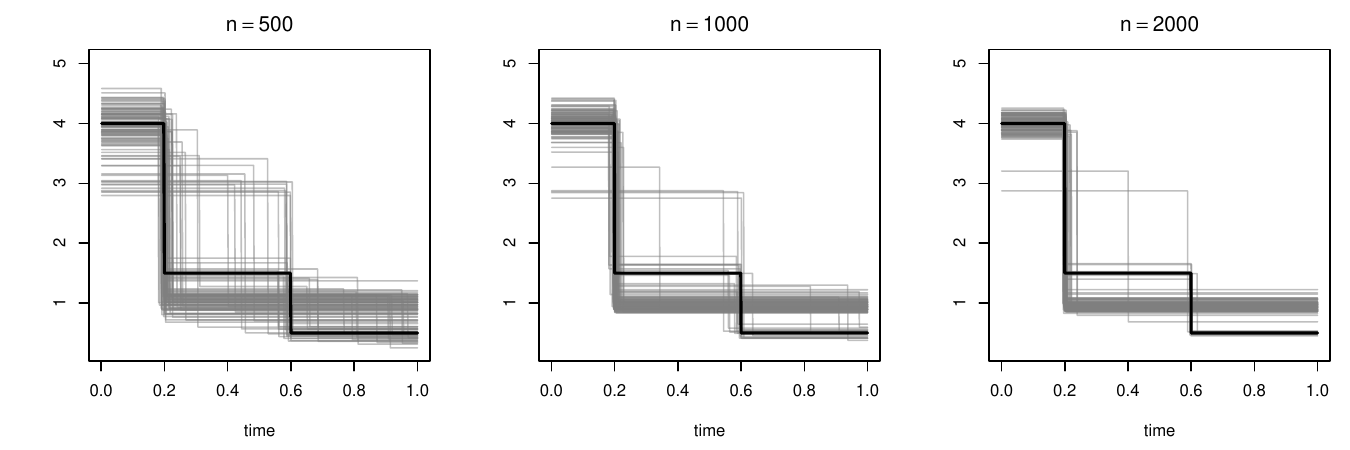}
\includegraphics[width=\textwidth]{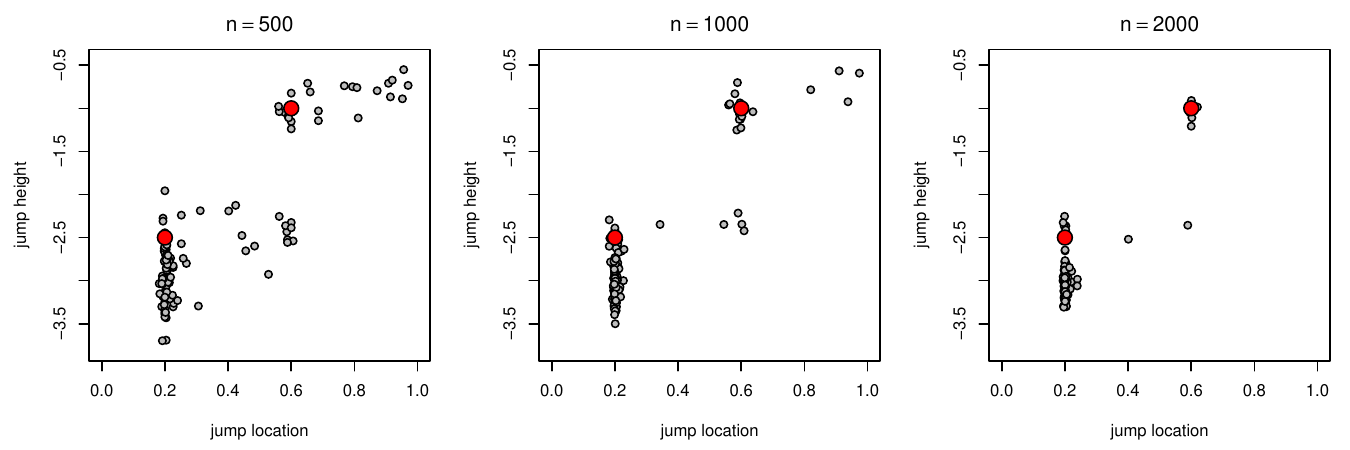}
\caption{Simulation results for the SeqTest method in Scenario S$_0$[2]. 
%\textcolor{red}{Replace plots! Delete fat bold heading in plots, i.e., use same format as in plots of previous subsection.}
}\label{fig:settingS02_seqtest}
\end{figure}

\begin{figure}[t]%[htbp]%[hp]
\centering
\includegraphics[width=\textwidth]{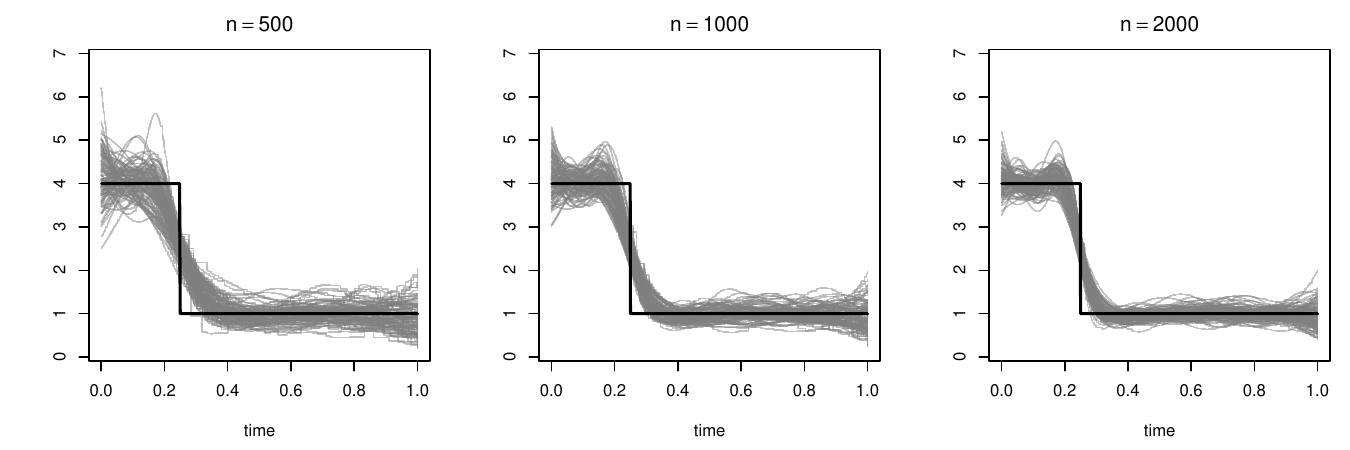}
\caption{Simulation results for PAM in Scenario S$_0$[1]. 
%\textcolor{red}{Replace plots! Delete fat bold heading in plots, i.e., use same format as in plots of previous subsection.}
}\label{fig:settingS01_pam}
\vspace{0.5cm}

\includegraphics[width=\textwidth]{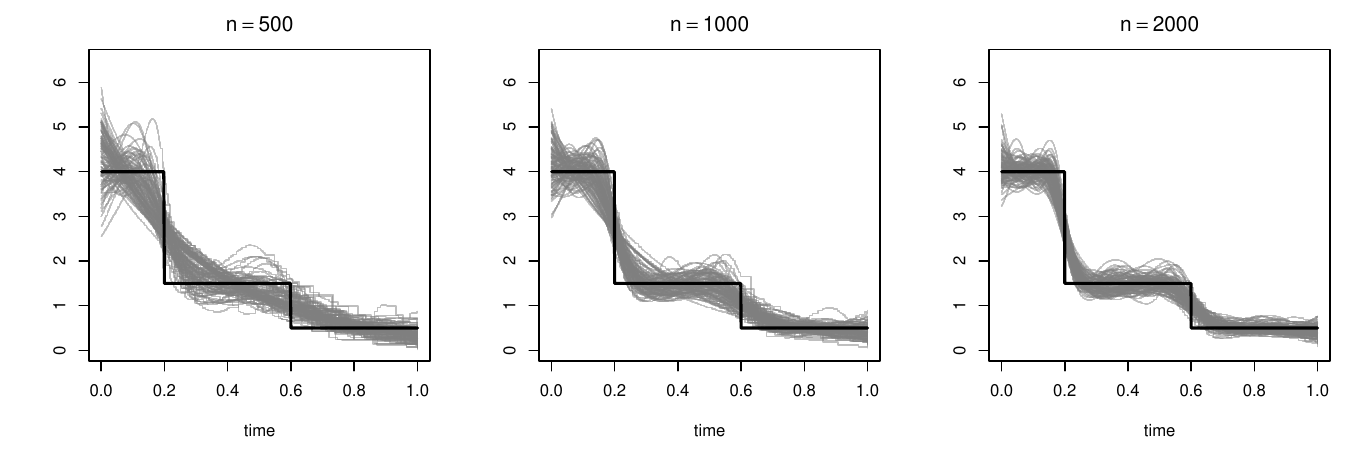}
\caption{Simulation results for PAM in Scenario S$_0$[2]. 
%\textcolor{red}{Replace plots! Delete fat bold heading in plots, i.e., use same format as in plots of previous subsection.}
}\label{fig:settingS02_pam}
\end{figure}

The SeqTest method proceeds as follows: Consider the piecewise constant hazard model from Scenarios S$_0$[1] and S$_0$[2] and assume that a known upper bound $K_{\text{upper}}$ on the (unknown) number of change points $K$ is available. Throughout the comparison study, we set $K_{\text{upper}} = 5$. First, the model with $K_{\text{upper}}$ change points is fitted by maximum likelihood, yielding $K_{\text{upper}}$ estimated change-point locations and $K_{\text{upper}}+1$ estimated hazard levels $\hat{a}_1,\ldots,\hat{a}_{K_{\text{upper}}+1}$. Subsequently, $K_{\text{upper}}$ Wald-type tests are performed to assess whether the adjacent hazard levels $\hat{a}_k$ and $\hat{a}_{k+1}$ differ significantly for $k=1,\ldots,K_{\text{upper}}$. To account for multiple testing, an exponential alpha-spending rule is used, according to which the $k$-th test is conducted at significance level $\text{FWER}\cdot 2^{-k}$, where $\text{FWER} \in (0,1)$ denotes the desired family-wise error rate and is set to $0.05$ throughout. This alpha-spending rule reflects the fact that right-censored time-to-event data typically become sparser over time, making accurate estimation increasingly difficult at later time points. In the \texttt{R}-package \rpack{eventTrack}, the Wald-type tests are performed sequentially for $k=1,2,\ldots$ and the procedure terminates at the first test that fails to reject the null hypothesis. If this occurs at the $k_0$-th test, only the first $k_0-1$ estimated change points are included in the final model. In contrast, we include all change points whose corresponding Wald-type tests reject the null hypothesis. The maximum-likelihood estimators and Wald-type tests of this approach are described in detail in \cite{GoodmanLiTiwari2011}. %We implemented them via the R package \rpack{eventTrack}.

The PAM method proceeds as follows: To start with, the time interval $[\tau_{\min},\tau_{\max}] = [0,1]$ is partitioned into a fine grid $[0,u_1),[u_1,u_2),\ldots,[u_{N-1},u_N),[u_N,1]$. 
%\textcolor{red}{We in particular work with the partition ??, where $N = ??$.} 
%\textcolor{red}{We in particular follow the recommendation of \cite{bender2018pammtools} and use the unique observed event times as cut-points for the grid, as done by default in the package \rpack{pammtools}.}
We in particular follow the recommendation of \cite{bender2018pammtools} and use the unique observed event times as grid points, which is the default option in the package \rpack{pammtools}.
Approximating the hazard $\haz$ by a function which is constant on each interval $[u_{j-1},u_j)$, we obtain that $\haz(t) \approx \sum_{j=1}^{N+1} a_j \ind(t\in [u_{j-1},u_j))$ for each $t$, i.e., it can be parameterized by the coefficient vector $(a_1,\ldots,a_{N+1})$. For large $N$, this representation is very flexible but produces a high-dimensional estimation problem with a large number of coefficients (equal to $N+1$). To address this issue, PAM imposes a smoothness constraint on the coefficients. In particular, the hazard $\haz$ is approximated by a spline function $s$ such that $a_j = s(u_j)$ for $j=1,\ldots,N+1$. Hence, the hazard is no longer parameterized by an unrestricted vector of coefficients but by a comparatively small set of spline coefficients, which substantially reduces the dimensionality of the estimation problem. 
%\textcolor{red}{We implement PAM with classic cubic B splines, with penalty on the second order derivative.} 
We implement PAM with cubic B-splines. %, with penalty on the second order derivative.
PAM estimates the spline coefficients by penalized maximum likelihood with a quadratic (i.e., ridge-type) penalty. We use the \texttt{R}-packages \rpack{pammtools} and \rpack{mgcv} to carry out the estimation.

%The PAM method proceeds as follows: To start with, the time interval $[\tau_{\min},\tau_{\max}] = [0,1]$ is partitioned into a fine grid $[0,u_1),[u_1,u_2),\ldots,[u_{N-1},u_N),[u_N,1]$. \textcolor{red}{We in particular work with the partition ??, where $N = ??$.} Assuming that the hazard is constant on each interval $[u_{j-1},u_j)$, it can be written as $\haz(t)=\sum_{j=1}^{N+1} a_j \ind(t\in [u_{j-1},u_j))$, i.e., it can be parameterized by the coefficient vector $(a_1,\ldots,a_{N+1})$. For large $N$, this representation is highly flexible but produces a high-dimensional estimation problem with a large number of coefficients (equal to $N+1$). To address this issue, PAM imposes a smoothness constraint on the coefficients. In particular, the interval-specific hazard levels $a_1,\ldots,a_{N+1}$ are assumed to arise from an underlying smooth spline function $s$, so that $a_j = s(u_j)$ for $j=1,\ldots,N+1$. Hence, the hazard is no longer parameterized by an unrestricted vector of coefficients but by a comparatively small set of spline coefficients, which substantially reduces the dimensionality of the estimation problem. \textcolor{red}{We implement PAM with ?? [specify which splines are used] in what follows.} PAM estimates the spline coefficients together with the other model parameters by penalized maximum likelihood with a quadratic (i.e., ridge-type) penalty. 

The simulation results for the SeqTest method are summarized in Figures \ref{fig:settingS01_seqtest} and \ref{fig:settingS02_seqtest}. These figures should be interpreted in the same way as Figures \ref{fig:settingS01} and \ref{fig:settingS02}, which present the corresponding results for our fused lasso approach.
The SeqTest method performs very well in the simple Scenario S$_0$[1], which contains only a single change point, clearly outperforming our fused lasso method in this scenario. In almost all simulation runs, it correctly identifies exactly one change point and estimates its location with high accuracy. However, its performance deteriorates substantially in the somewhat more challenging Scenario S$_0$[2], which contains two change points. In many runs, the method detects only a single change point, leading to a poor reconstruction of the underlying hazard function. In contrast, our method continues to recover the hazard function accurately in this setting. These results suggest that the fused lasso approach is more robust and better able to accommodate increasingly complex hazard structures.

The simulation results for PAM are presented in Figures \ref{fig:settingS01_pam} and \ref{fig:settingS02_pam}. The hazard estimator $\hat{\alpha}(t) = \sum_{j=1}^{N+1} \hat{a}_j \ind(t\in [u_{j-1},u_j))$ produced by PAM has the property that neighbouring hazard levels $\hat{a}_j$ and $\hat{a}_{j+1}$ are typically distinct for almost all $j$. As a consequence, the estimated hazard function exhibits a change point at virtually every grid point $u_1,\ldots,u_N$. For this reason, plots of the estimated change-point locations and jump heights -- such as those shown for the fused lasso and SeqTest approaches in the lower panels of Figures \ref{fig:settingS01}--\ref{fig:settingS02_seqtest} -- are not particularly informative. We therefore omit such plots. 
As expected, PAM smooths the underlying piecewise constant structure of the hazard function. Nevertheless, as the sample size increases, the resulting estimates provide a reasonable approximation of the true hazard and capture the main changes in its shape. Despite this, the reconstruction errors reported in Table \ref{table:L2error-comparison} are consistently larger for PAM than for the fused lasso method across all considered cases. This indicates that the fused lasso approach recovers the underlying hazard function more accurately. 
We view PAM as a natural baseline method for hazard estimation. The simulation results suggest that when the true hazard exhibits abrupt changes and is therefore not well described by a smooth function, the fused lasso approach has a clear advantage over classical spline-based methods such as PAM.

\begin{table}[t]
\centering
\footnotesize{\begin{tabular}{lcccccc}
\hline\hline    
 & \multicolumn{3}{c}{Scenario S$_0$[1]} \\
 & $n=500$  & $n=1000$ & $n=2000$ \\
\hline
fused lasso    			& 0.021 (0.013) & 0.012 (0.007) & 0.007 (0.004) \\
Seqtest             	& 0.013 (0.013) & 0.006 (0.005) & 0.003 (0.003) \\
PAM            			& 0.043 (0.018) & 0.029 (0.009) & 0.020 (0.005) \\
\hline\hline\\[-0.3cm]
 & \multicolumn{3}{c}{Scenario S$_0$[2]} \\
 & $n=500$  & $n=1000$ & $n=2000$ \\
\hline
fused lasso    			& 0.030 (0.019) & 0.017 (0.010) & 0.009 (0.005) \\
SeqTest             	& 0.087 (0.070) & 0.060 (0.041) & 0.055 (0.029) \\
PAM            			& 0.052 (0.022) & 0.032 (0.014) & 0.020 (0.006) \\
\hline\hline
\end{tabular}}
\caption{Average values of the relative squared $\ell_2$-error $\|\hat{\bs{\alpha}}^{[M]} - \hazvec\|_2^2/\|\hazvec\|_2^2$ over $1000$ simulation runs, with standard deviations reported in brackets. Here, $\hat{\bs{\alpha}}^{[M]} = (\hat{\alpha}_1^{[M]}, \ldots,\hat{\alpha}_n^{[M]})^\top$ denotes the vector of hazard levels at time points $j/n$ ($j=1,\ldots,n$) estimated by method $M \in \{$fused lasso, SeqTest, PAM$\}$.}\label{table:L2error-comparison} 
\end{table}

\section{Empirical application}\label{sec:app}

\subsection{Background}

The motivation behind the present empirical illustration is clinical trial planning in oncology. The gold standard to demonstrate clinical benefit of a novel therapy is evidence of delaying time-to-death, called `overall survival' (OS) in the field. A second endpoint that is both observed earlier and used to demonstrate treatment benefit is progression-free survival (PFS), i.e., the time until (diagnosed) progression of disease or death, whatever occurs first. In fact, it is not uncommon for contemporary trials to evaluate both endpoints, with PFS benefit leading to accelerated drug approval and authorities demanding subsequent follow-up of OS.

Trials are planned in terms of the OS and/or PFS number of events to be observed, typically assuming proportional hazards between treatment groups or even exponential distributions \citep{ohneberg2013sample,nochmehrsample}. If OS and PFS are co-primary endpoints, a common approach is to plan the number of events to be observed independently for OS and PFS, assuming proportional hazards. In what follows, we use the methodology developed earlier to come up with a more flexible and realistic trial design. We in particular propose a flexible multi-state model with piecewise constant transition hazards to jointly model PFS and OS. We then fit the proposed model to clinical data from a randomised controlled trial in patients with non-small-cell lung cancer \citep{rittmeyer2017atezolizumab}. Trial planning (e.g.\ calculating the number of events to be observed) could then be performed by simulating from the fitted model. In what follows, we focus on model fitting, in particular, on estimating the piecewise constant transition hazards, but we do not carry out any simulation exercises for trial planning. We refer to \cite{beyer2020multistate} and \cite{danzer2022confirmatory} as trial design examples using simulation in a multi-state context. Notably, we are not interested in a specific interpretation of the change points in the estimated transition hazards \citep[as opposed to other studies in the change point literature such as][]{Brazzale2019}. The main motivation for fitting a piecewise constant hazard model is that this yields a flexible yet simple parametrization which is well suited for trial planning.

\subsection{Multi-state modelling of PFS and OS}

Following \cite{meller2019}, we model PFS and OS jointly via an illness-death model without recovery. Formally speaking, for each patient $i$ in our sample, let $X_i$ be a time-continuous Markov process with state space $\mathcal{R} = \{0, 1, 2\}$ and possible transitions $0 \to 1$, $0 \to 2$ and $1 \to 2$. We interpret state $0$ as the initial state of being alive without progression, state $2$ as the terminal state of death and state $1$ as an intermediate state entered upon progression diagnosis. In this multi-state model, PFS is viewed as the waiting time in the initial state $0$ and OS as the waiting time until reaching the absorbing state $2$, i.e., $\textnormal{PFS}_i = \inf\{ t \ge 0: X_i(t) \ne 0 \}$ and $\textnormal{OS}_i = \inf\{ t \ge 0: X_i(t) = 2 \}$ for each patient $i$. In general, PFS and OS do not both follow proportional hazards at the same time in this model. Hence, our multi-state framework naturally accounts for the common concern in the field \citep{mukhopadhyay2022log} that proportional hazards for both OS and PFS appear to be too simple for adequate modelling.

We do not observe the Markov processes $X_i$ completely but right-censored versions of them, where $C_i$ denotes the right-censoring time for patient $i$. In order to estimate the transition hazards $\haz_{0 \to 1}$, $\haz_{0 \to 2}$ and $\haz_{1 \to 2}$ by our methods, we can treat each transition ($0 \to 1$, $0 \to 2$ and $1 \to 2$) separately and reformulate the data in terms of the survival framework from Setting \ref{settingA} as follows: 
\begin{itemize}[leftmargin=0.5cm]
\item Transition $0 \to 1$:  
$T_i^* = \textnormal{PFS}_i$, $T_i = T_i^* \land C_i$ and $\delta_i = \ind(T_i^* \le C_i \textnormal{ and } \textnormal{PFS}_i < \textnormal{OS}_i)$.  
\item Transition $0 \to 2$:  
$T_i^* = \textnormal{PFS}_i$, $T_i = T_i^* \land C_i$ and $\delta_i = \ind(T_i^* \le C_i \textnormal{ and } \textnormal{PFS}_i = \textnormal{OS}_i)$.
\item Transition $1 \to 2$: $T_i^* = \textnormal{OS}_i$, $T_i = T_i^* \land C_i$ and $\delta_i = \ind(T_i^* \le C_i)$. As patient $i$ is at risk only after entering state $1$, i.e., from time $\textnormal{PFS}_i$ onwards, we also have to deal with left-truncation. To do so, we only consider the subsample of patients $i$ for whom $\textnormal{PFS}_i < T_i$, thus treating $\textnormal{PFS}_i$ as a left-truncation time $L_i$. 
\end{itemize}
Figure~\ref{fig:scatter:oak} is a scatterplot of the observed transition times $T_i$ for the different transitions in the analyzed data set. The left-hand and middle plots indicate a possibly difficult estimation problem in that there are few observed $0 \to 2$ transitions and many rather early transitions into the intermediate state $1$.

\begin{figure}[t!]
\centering
\includegraphics[width=\textwidth]{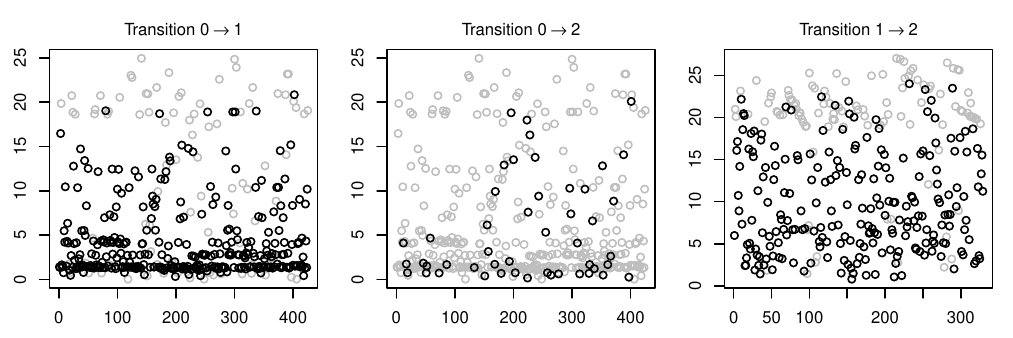}
\caption{Scatterplot of transition times in the atezolizumab arm of the data from \cite{rittmeyer2017atezolizumab}. Uncensored data points are depicted in black, right-censored points in grey. The $x$-axis displays arbitrary unit IDs, the $y$-axis displays time (in months).}\label{fig:scatter:oak}
\end{figure}

\subsection{Re-analysis of clinical trial data}

We consider data from \cite{rittmeyer2017atezolizumab}, available as supplement to \cite{gandara2018blood}, on 850 patients randomized in one-to-one fashion to either atezolizumab or docetaxel. In what follows, we restrict attention to the atezolizumab arm. (For the docetaxel arm, we found similar results not reported here.) As discussed by \cite{rittmeyer2017atezolizumab}, atezolizumab is a novel immunotherapy that aims at reestablishing anticancer immunity, while docetaxel is a more traditional therapy.

Our aim is to fit a parsimonious piecewise constant hazards parameterization of the illness-death model from above to the data at hand. The estimated piecewise constant transition hazards $\haz_{0 \to 1}$, $\haz_{0 \to 2}$ and $\haz_{1 \to 2}$ are then transformed into survival functions of PFS and OS, respectively, using well-known solutions to Kolmogorov forward differential equations \citep[see][Section II.6]{AndersenGill1993}, and are visually compared with standard Kaplan-Meier plots of PFS and OS.

To implement our methods, we make the following choices:
(i) To demonstrate that our methods are robust to the particular choice of the interval $[\tau_{\min},\tau_{\max}]$, we carry out the data analysis for different intervals: letting $\tau(p)$ be the empirical $p$-quantile of the uncensored event times $\{T_i: \delta_i = 1\}$ defined above for the different transitions, we set $\tau_{\min} = \tau(0)$ for the transitions $0 \to 1$ and $0 \to 2$, $\tau_{\min} = \tau(0.025)$ for transition $1 \to 2$ (taking into account the implicit left truncation) and $\tau_{\max} = \tau(p)$ with $p \in \{ 0.8, 0.975,0.99\}$. Note that for $p=0.8$, $\tau_{\max}$ and thus the estimation window is quite small. We have added the value $p=0.8$ for illustrative purposes but do not recommend to use such a small value in practice. 
(ii) To evaluate the influence of the tuning parameter $\lambda$ on our estimation results, we run the fused lasso with different choices of $\lambda$. In particular, $\lambda$ is chosen as explained in Section \ref{sec:impl} with $q \in \{0.1,0.5,0.9\}$ (as well as $K_{\max} = 20$ and $L=1000$). The larger $q$, the larger the tuning parameter $\lambda$ and thus the smaller the number of change points in the estimated hazard. Whereas the choice $q=0.9$ is in line with our theoretical considerations, $q=0.1$ and $q=0.5$ are quite low choices. In the application context at hand, it makes sense to consider $q$-values and thus $\lambda$-values considerably smaller than suggested by the theory for the following reason: while resulting in somewhat less parsimonious piecewise constant models with a larger number of change points, they may produce a better curve fit.

\begin{figure}[p]
\centering
\includegraphics[width=\textwidth]{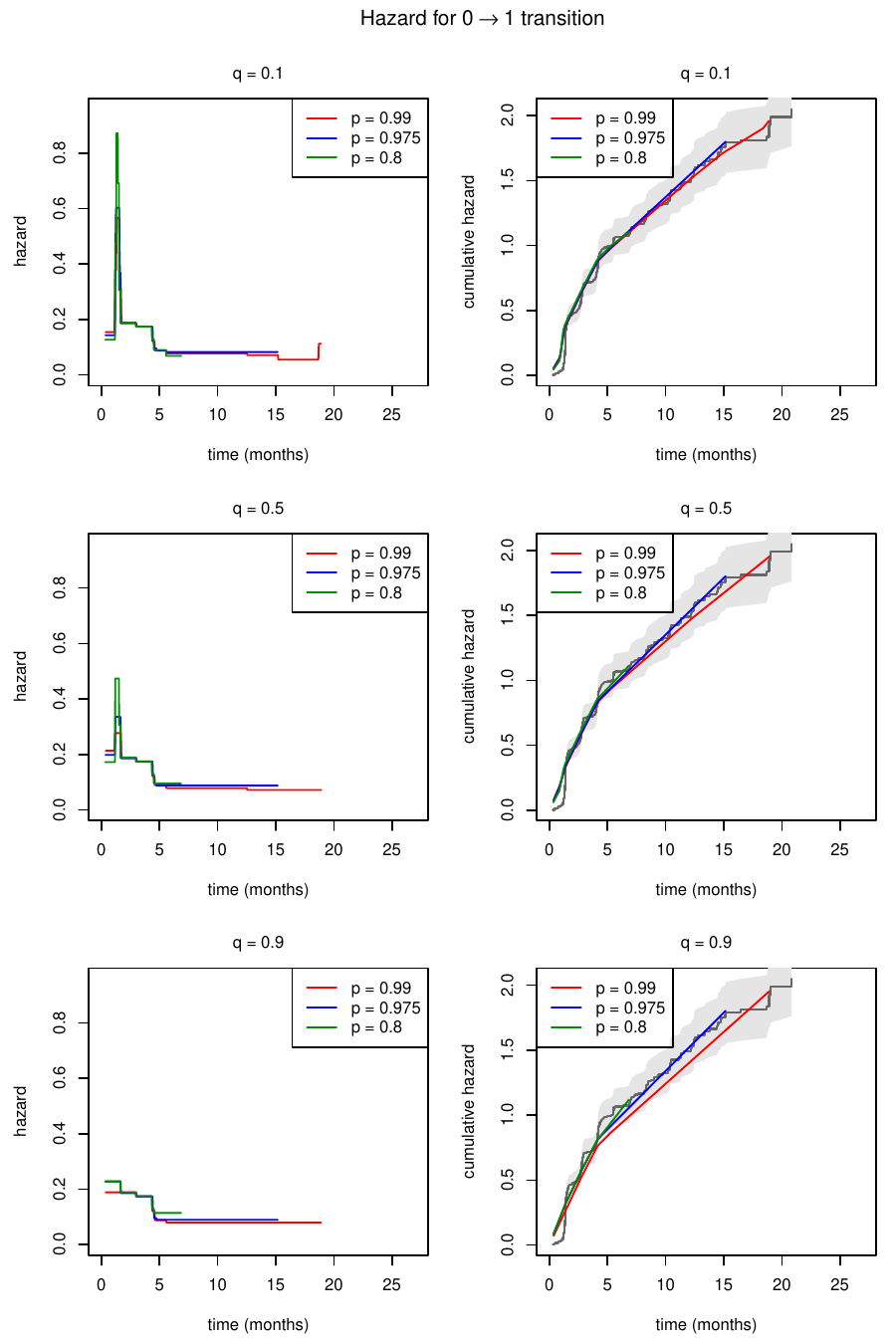}
\caption{Estimated hazard and cumulative hazard for the $0 \to 1$ transition with different choices of $p$ and $q$.}\label{fig:hazard01:oakMPDL}
\end{figure}

\begin{figure}[p]
\centering
\includegraphics[width=\textwidth]{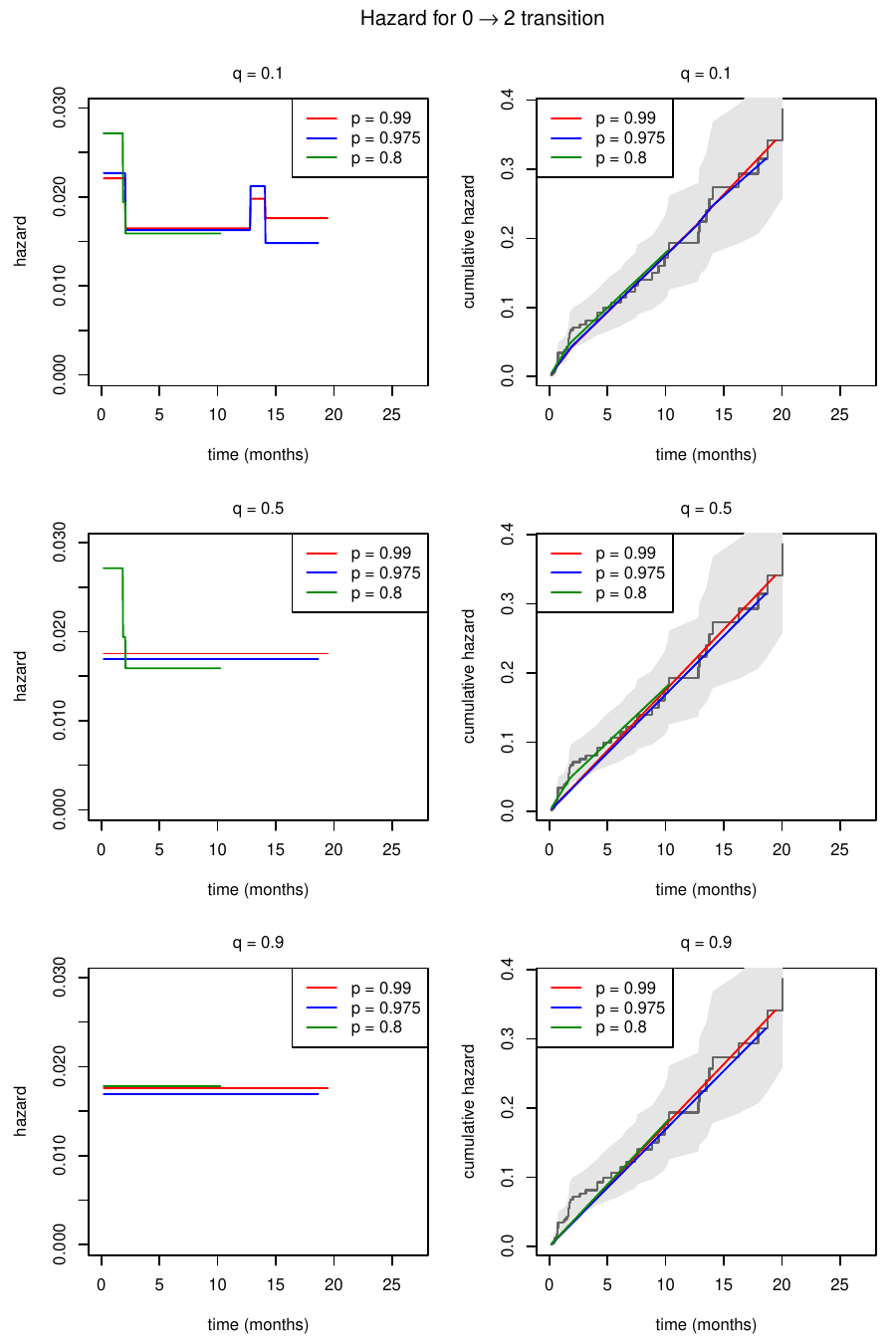}
\caption{Estimated hazard and cumulative hazard for the $0 \to 2$ transition with different choices of $p$ and $q$.}\label{fig:hazard02:oakMPDL}
\end{figure}

\begin{figure}[p]
\centering
\includegraphics[width=\textwidth]{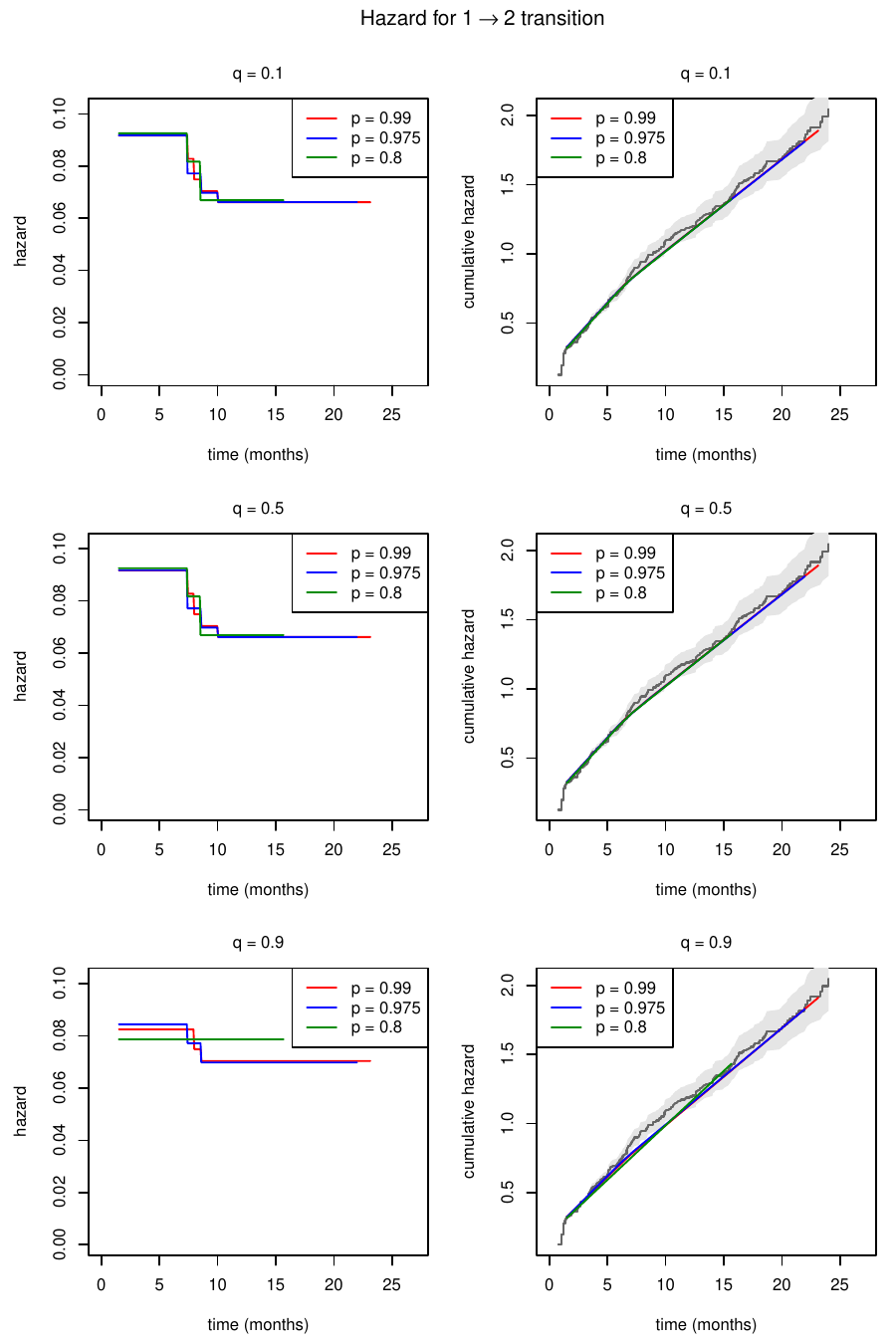}
\caption{Estimated hazard and cumulative hazard for the $1 \to 2$ transition with different choices of $p$ and $q$.}\label{fig:hazard12:oakMPDL}
\end{figure}

\begin{figure}[p]
\includegraphics[width=0.5\textwidth]{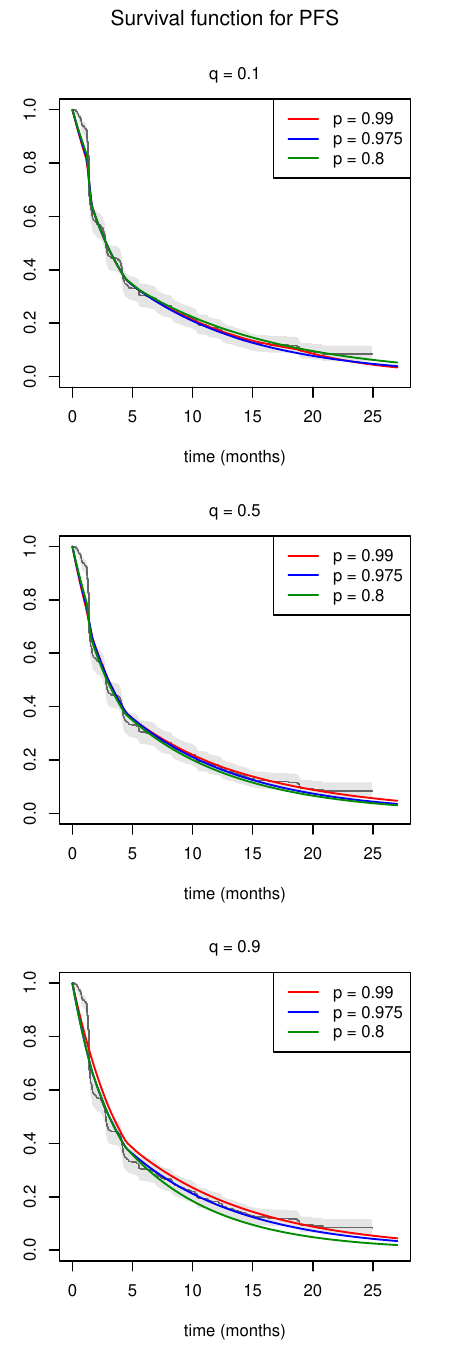}
\includegraphics[width=0.5\textwidth]{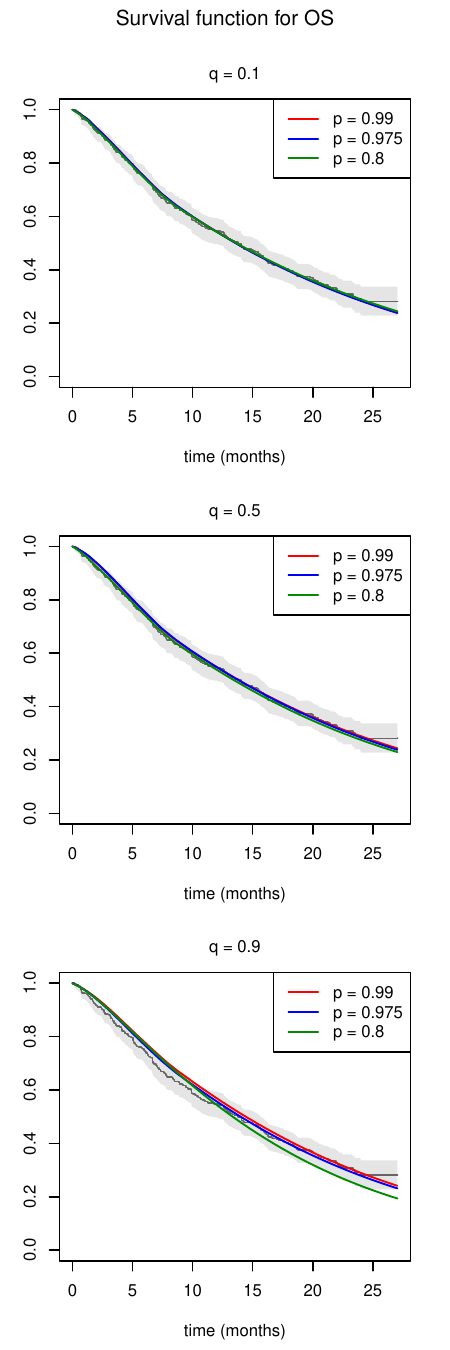}
\caption{Estimated survival curves for PFS and OS.}\label{fig:surv:oakMPDL}
\end{figure}

Figures \ref{fig:hazard01:oakMPDL}--\ref{fig:hazard12:oakMPDL} present the estimated piecewise constant hazards (left column) and the corresponding piecewise linear cumulative hazards (right column) for different choices of $p$ and $q$. Each figure is structured as follows: The plots in each row correspond to a specific choice of $q \in \{0.1,0.5,0.9\}$. The choice of $p$ is indicated by colour. In particular, the green, blue, red curve in each plot corresponds to the choice $p = 0.8,0.975,0.99$, respectively. The dark grey line in the right-hand plots is the Nelson-Aalen estimator of the cumulative hazard, the shaded area in light grey is the corresponding pointwise 95\% confidence band.
We briefly discuss the estimation results: 
(i) The estimated hazards in Figures \ref{fig:hazard01:oakMPDL}--\ref{fig:hazard12:oakMPDL} suggest that a model with constant transition hazards is too simple.
(ii) Overall, our approach produces a good fit to the data. This becomes visible in the piecewise linear estimates of the cumulative hazards which are very similar to their nonparametric Nelson-Aalen counterparts for all choices of $p$ and $q$. 
(iii) The estimated hazard curves are fairly similar for different choices of $p$.
%even though there are some notable differences between the curves with $p=0.8$ on the one hand and those with $p \in \{0.975,0.99\}$ on the other. 
%this is most likely due to the fact that the effective sample size is very low in this case. In particular, the number of uncensored data points is very small for the $0 \to 2$ transition.
More notable are the differences in the estimates across $q$. In particular, the larger $q$, the less pronounced the estimated jumps. This is not surprising as larger $q$ amounts to more shrinkage.
(iv) For the $0 \to 1$ transition, the Nelson-Aalen estimator has a staircase-like structure at earlier times (i.e., close to the time point $t=0$). This is most likely due to the fact that progression diagnosis comes in 'waves' during earlier times, a common feature in oncology trails. Notably, our piecewise linear estimates do not pick up this staircase structure but provide a smoother fit. 
%, suggesting that it does not reflect real change points in the underlying hazard. 

An aggregated assessment of the estimation results is offered by the survival function plots in Figure \ref{fig:surv:oakMPDL}, which is structured analogously as the previous three figures. The plots depict estimators of the survival function for PFS (left column) and OS (right column) which are based on the estimated piecewise constant transition hazards. The dark grey line in the background is the Kaplan-Meier estimator of the survival function, the shaded area in light grey is the corresponding pointwise 95\% confidence band. \cite{titman2010model} have suggested comparing the pointwise confidence intervals of the Kaplan-Meier estimator with the survival curves resulting from the piecewise constant fit of the multi-state model as an informal model diagnostic. We here follow this suggestion:
The estimated PFS survival curves yield an overall good fit to the nonparametric Kaplan-Meier estimator, the fit being somewhat less precise at earlier times (i.e., close to $t=0$). Specifically, our estimates do not mimic the wave-like structure of the Kaplan-Meier estimator close to $t=0$. This is not surprising as this wave-like structure is most probably due to the staircase structure of the Nelson-Aalen estimator for the $0 \to 1$ transition (discussed above) which is not picked up by our estimates.
The estimated OS survival curves are quite close to the Kaplan-Meier estimator as well, the fit being somewhat worse for $q=0.9$, in particular, for $(q,p)=(0.9, 0.8)$. % at later time points $t$.
%The less precise fit for $(q,p)=(0.9, 0.8)$ at later time points is most likely due to the fact that for $q = 0.8$, we perform estimation on a quite small interval $[\tau_{\min},\tau_{\max}]$ and thus rely on constant extrapolation of the piecewise constant hazard curves (the green curves in Figures \ref{fig:hazard01:oakMPDL}--\ref{fig:hazard12:oakMPDL}) to obtain the survival curve on the full support. 
%for $q\in\{0.1, 0.5\}$ (in the sense of being close to the nonparametric Kaplan-Meier estimator), whereas the fit is somewhat worse for $q=0.9$ at early times $t$. This is a consequence of the fact that the undlerying hazards have much less pronounced jumps for $q=0.9$ than for $q=0.1,0.5$ close to the time point $t=0$. 
To summarize, Figure~\ref{fig:surv:oakMPDL} suggests that our method produces an overall adequate fit (except for the early 'wave' characteristic of the PFS Kaplan-Meier curve, which is a common phenomenon in oncology trials).

\bibliographystyle{ims}
{\small \setlength{\bibsep}{0.35em}
\bibliography{bibliography}}
\newpage

\begin{center}
{\LARGE \textbf{Technical Appendices}}
\end{center}

% ----------
% Appendix A
% ----------

\def\thesection{\Alph{section}}
\setcounter{section}{0}
\def\theequation{A.\arabic{equation}}
\setcounter{equation}{0}
\section{Proof of main results}

In what follows, we prove the theoretical results from Section \ref{sec:theory}. Throughout this and the following appendix, the symbols $c$ and $C$ denote generic constants that may take a different value on each occurrence. Moreover, the symbols $c_j$ and $C_j$ with subscript $j$ (which may be either a natural number or a letter) are specific constants that are defined in the course of the appendices. Unless stated differently, the constants $c$, $C$, $c_j$ and $C_j$ do not depend on the sample size $n$.

\subsection*{Proof of Theorem \ref{theo:elementwise-bound}}

In order to prove Theorem \ref{theo:elementwise-bound}, we make use of the following general result on the fused lasso due to \cite{Zhang2023}; see also Theorem 3.1 in \cite{Zhang2019}. 
\begin{prop}\label{prop:fused-lasso-elementwise-error}
For any $\kappa_n > 0$ with the property that
\[ \max_{1 \le k \le \ell \le n} \left| \frac{1}{\sqrt{\ell-k+1}} \sum_{j=k}^{\ell} u_j \right| \le \kappa_n, \]
it holds that
\[ |\hat{\alpha}_{\lambda,j} - \haz_j| \le \max \left\{ \frac{\kappa_n}{\sqrt{d_j}}, \frac{\kappa_n^2}{4n\lambda},\frac{2n\lambda}{r_{k(j)}} + \frac{2\kappa_n}{\sqrt{r_{k(j)}}} \right\} \]
for $j \in \{1,\ldots,n\}$.
\end{prop}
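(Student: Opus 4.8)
The plan is to treat Proposition~\ref{prop:fused-lasso-elementwise-error} as a purely deterministic statement about the minimiser of the convex program \eqref{eq:fused-lasso}: the hypothesis on the noise partial sums is an \emph{assumption}, and no probability enters. The natural starting point is the first-order (KKT) characterisation of the fused lasso. Using the reparametrisation $\theta_1 = \haz_1$, $\theta_j = \haz_j - \haz_{j-1}$ from Section~\ref{sec:impl}, the program \eqref{eq:fused-lasso} becomes the lasso \eqref{eq:min-theta-1} with the lower-triangular design $X_{ij} = \ind(i \ge j)$ and an unpenalised intercept. Since the column of $X$ attached to $\theta_j$ is the step vector $(\ind(i \ge j))_{i=1}^n$, the subgradient conditions translate into a clean statement about the cumulative residual process $S_\ell := \sum_{i=1}^\ell (Y_i - \hat{\alpha}_{\lambda,i})$: the unpenalised intercept forces $S_0 = S_n = 0$, every $\ell$ obeys the ``tube'' constraint $|S_\ell| \le n\lambda/2$, and $S_\ell$ hits the boundary $\pm n\lambda/2$ exactly where $\hat{\bs{\alpha}}_\lambda$ jumps, with sign dictated by the jump direction. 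This is the discrete taut-string description of one-dimensional total-variation denoising, and it is the only structural input I would need from the optimisation problem.

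Next I would convert the hypothesis into local averaging bounds. Writing $Y_i = \haz_i + u_i$ and using that $\haz$ is constant on each true segment, for any indices $k \le \ell$ lying in a single constant segment of $\hazvec$ one gets
\[ \Big| \sum_{i=k}^{\ell} (\hat{\alpha}_{\lambda,i} - \haz_i) \Big| \le |S_\ell| + |S_{k-1}| + \Big| \sum_{i=k}^{\ell} u_i \Big| \le n\lambda + \kappa_n \sqrt{\ell - k + 1}, \]
where the last step invokes the tube constraint together with the assumed bound on the noise partial sums. Thus the \emph{average} deviation of $\hat{\bs{\alpha}}_\lambda$ from $\haz$ over any sub-block of a true segment is controlled by $n\lambda/(\ell-k+1) + \kappa_n/\sqrt{\ell-k+1}$. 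Taking the block to be (a large portion of) the whole segment containing $j$ already produces the ingredients $2n\lambda/r_{k(j)}$ and $2\kappa_n/\sqrt{r_{k(j)}}$ of the third term in the maximum.

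The remaining, and genuinely delicate, step is to upgrade these averaged ($\ell_1$-type) bounds to the pointwise ($\ell_\infty$-type) bound on $|\hat{\alpha}_{\lambda,j} - \haz_j|$, and to do so with the stated constants. Here I would exploit that $\hat{\bs{\alpha}}_\lambda$ is constant on its own (data-dependent) maximal blocks, and that on such a block $B$ the block-level stationarity condition forces the common fitted value to equal $\overline{Y}_B + O(n\lambda/|B|)$, the correction coming from the at-most-two boundary subgradients. A short case analysis on the block $B \ni j$ then yields the three competing terms: when $B$ is long and sits inside the true segment, $\overline{Y}_B \approx \haz_j$ gives the third term; when $j$ is close to a true jump the relevant scale is the distance $d_j$ and one extracts $\kappa_n/\sqrt{d_j}$; and the balancing term $\kappa_n^2/(4n\lambda)$ appears when one trades off the shrinkage bias against the stochastic fluctuation. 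The cleanest way to pin down the constants is a variational (contradiction) argument: assuming $\hat{\alpha}_{\lambda,j} - \haz_j$ exceeds the claimed maximum, perturb $\hat{\bs{\alpha}}_\lambda$ on an appropriately chosen sub-block and show the objective in \eqref{eq:fused-lasso} strictly decreases, contradicting optimality.

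I expect this last step to be the main obstacle. The difficulty is twofold: the block partition of $\hat{\bs{\alpha}}_\lambda$ is itself determined by the data, so one cannot fix the averaging window in advance, and the three regimes must be glued together so that a single maximum dominates the error for \emph{every} $j$ simultaneously. Carefully tracking how the bias contribution (the $n\lambda$ terms) and the noise contribution (the $\kappa_n$ terms) combine near segment boundaries — which is exactly what produces the $d_j$-dependence and the factor $4$ in $\kappa_n^2/(4n\lambda)$ — is where the bookkeeping is heaviest, and it is precisely this deterministic analysis that the cited result of \cite{Zhang2023} packages.
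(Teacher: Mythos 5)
Your setup is sound as far as it goes: the taut-string/KKT description is correct (summing the stationarity conditions gives $S_\ell = -\tfrac{n\lambda}{2}s_{\ell+1}$ with subgradients $s_{\ell+1}$, hence the tube constraint $|S_\ell|\le n\lambda/2$, $S_n=0$, and boundary-hitting at the jumps of $\hat{\bs{\alpha}}_\lambda$), and the averaged bound $|\sum_{i=k}^{\ell}(\hat{\alpha}_{\lambda,i}-\haz_i)| \le n\lambda + \kappa_n\sqrt{\ell-k+1}$ over sub-blocks of a true segment follows correctly. Indeed, your framework delivers one of the three terms almost for free: on a maximal constant block $B=[r_1,r_2]$ of the solution that is a local maximum inside a true segment, the boundary subgradients have opposite signs, so $\sum_{j\in B}(Y_j-\hat{\alpha}_{\lambda,j}) = S_{r_2}-S_{r_1-1} = n\lambda$ exactly, whence $\hat{\alpha}_{\lambda,j}-\haz_j = \bar{u}_B - n\lambda/|B| \le \kappa_n/\sqrt{|B|} - n\lambda/|B| \le \kappa_n^2/(4n\lambda)$, matching the paper's constant.

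Nevertheless, there is a genuine gap: the proposal stops precisely where the proof begins in earnest, and you say so yourself when you write that the remaining deterministic analysis ``is exactly what the cited result of \cite{Zhang2023} packages'' --- but Proposition \ref{prop:fused-lasso-elementwise-error} \emph{is} that result, so deferring to it is circular. What is missing is the machinery that the paper supplies in Appendix B. First, a localization step: the restriction of the global minimizer to each true constant segment solves the boundary-augmented subproblem \eqref{eq:aux-min-problem} with correctly specified endpoints (cf.\ \eqref{eq:min-subvector-k}), and shift-invariance reduces everything to centered noise; without this, your block $B\ni j$ may straddle a true change point, where $\bar{Y}_B$ is \emph{not} close to $\haz_j$ and your case analysis breaks down. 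Second, a structural classification (Lemma \ref{lemma:theo3.4-Zhang2019}): combining the no-large-local-extremum bound (Lemma \ref{lemma:lemma3.6-Zhang2019}) with monotonicity forces the centered solution into one of three patterns, and it is only this ordering structure that makes the term $\kappa_n/\sqrt{d_j}$ appear --- via the perturbation argument of Lemma \ref{prop:theo3.4-Zhang2019}, which shifts the maximal constant block containing $j$ in the monotone boundary layer and exploits that the two boundary subgradient contributions there \emph{cancel} rather than add, yielding $\hat{a}_k \le \kappa_n/\sqrt{k}$ with no $n\lambda$ term. Your sketch, which bounds each block's fitted value by $\bar{Y}_B + O(n\lambda/|B|)$, cannot produce this term for short blocks near a true jump (the $n\lambda/|B|$ correction is then unbounded), nor does it explain how the three regimes glue into a single maximum valid for every $j$ simultaneously. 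So the approach is a viable alternative scaffold, but as written it does not constitute a proof.
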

To make the paper as self-contained as possible, we provide a proof of Proposition \ref{prop:fused-lasso-elementwise-error} in Appendix B. In order to apply Proposition \ref{prop:fused-lasso-elementwise-error}, we derive a (probabilistic) upper bound on the maximal partial sum $\max_{1 \le k \le \ell \le n} | (\ell-k+1)^{-1/2} \sum_{j=k}^{\ell} u_j|$ in the following proposition.

\begin{prop}\label{prop:Rosenthal-bound-error-partial-sum}
Assume that \ref{C0}--\ref{C6} are satisfied and let
\[ \kappa_n = c_n \max \left\{ n^{2/\nu}, \sqrt{n} \rho_n \right\}, \]
where $\nu > 4$ is specified in \ref{C5} and $\{c_n\}$ is a slowly diverging sequence (e.g.\ $c_n = c_0 \log \log n$ with some constant $c_0$). Then 
\[ \pr \left( \max_{1 \le k \le \ell \le n} \bigg| \frac{1}{\sqrt{\ell-k+1}} \sum_{j=k}^\ell u_j \bigg| > \kappa_n \right) = o(1). \] 
\end{prop}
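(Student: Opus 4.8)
The plan is to split the noise $u_j = \Delta_j^\alpha + \Delta_j^J + \Delta_j^\beta + \eta_j$ and to bound the maximal normalized partial sum of each of the four pieces separately, using the triangle inequality together with the elementary observation that $\max_{k,\ell}(\ell-k+1)^{-1/2}|\sum_{j=k}^{\ell}(a_j+b_j)|$ is dominated by the sum of the two corresponding maxima. Three of the four pieces are handled quickly. The discretization bias $\Delta_j^\alpha$ vanishes except at the at most $K$ indices whose cell $[t_{j-1},t_j)$ contains a change point, where it is bounded by the corresponding jump height; hence $|\sum_{j=k}^{\ell}\Delta_j^\alpha|\le C$ uniformly and the associated normalized maximum is $O(1)=o(\kappa_n)$, since $\kappa_n\to\infty$. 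By \ref{C4}, on an event of probability $1-o(1)$ we have $J(t,\bs{\beta})=1$ for all $t\in[\tau_{\min},\tau_{\max}]$, so $\Delta^J\equiv 0$ there and every $\Delta_j^J$ vanishes. Finally, by \ref{C6} we have $\max_j|\Delta_j^\beta|=O_p(\rho_n)$, whence $(\ell-k+1)^{-1/2}|\sum_{j=k}^{\ell}\Delta_j^\beta|\le\sqrt{n}\max_j|\Delta_j^\beta|=O_p(\sqrt{n}\rho_n)$ uniformly in $k,\ell$, which is $o_p(\kappa_n)$ because $\kappa_n\ge c_n\sqrt{n}\rho_n$ with $c_n\to\infty$.

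The bulk of the work concerns the martingale part. Writing $\eta(t)=\int_0^t \frac{J(s,\bs{\beta})}{\widebar{Z}(s,\bs{\beta})}\,d\widebar{M}(s)$, I note that $\sum_{j=k}^{\ell}\eta_j=n\{\eta(t_\ell)-\eta(t_{k-1})\}$ and that $\{\eta(t_j)\}_j$ is a discrete-time martingale with respect to $\{\mathcal{F}_{t_j}\}$ whose differences $n^{-1}\eta_j$ I intend to control through the predictable quadratic variation. Since $\widebar{M}$ is the compensated counting-process martingale with $\quadvar{\widebar{M}}(t)=\int_0^t\widebar{Z}(s,\bs{\beta})\haz(s)\,ds$, distinct subjects contribute orthogonal martingales, and $J^2=J$, I obtain
\[ \quadvar{\eta}(t_\ell)-\quadvar{\eta}(t_{k-1}) = \int_{t_{k-1}}^{t_\ell}\frac{J(s,\bs{\beta})}{\widebar{Z}(s,\bs{\beta})}\haz(s)\,ds. \]
By \ref{C5} the integrand is of order $1/n$ and $\haz$ is bounded, so this increment is of order $(\ell-k+1)/n^2$; consequently $\var(\sum_{j=k}^{\ell}\eta_j)$ is of order $\ell-k+1$, which is exactly what makes $(\ell-k+1)^{-1/2}$ the correct normalization.

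To turn this variance heuristic into a uniform bound I will apply a Rosenthal-type moment inequality to the martingale differences $\eta_j$, bounding $\ex[|\sum_{j=k}^{\ell}\eta_j|^\nu]$ by a constant multiple of $\ex\bigl[\bigl(\sum_{j=k}^{\ell}\ex[\eta_j^2\mid\mathcal{F}_{t_{j-1}}]\bigr)^{\nu/2}\bigr]+\ex\bigl[\sum_{j=k}^{\ell}|\eta_j|^\nu\bigr]$. The quadratic-variation computation above makes the first term $O((\ell-k+1)^{\nu/2})$; for the second I split $\eta_j$ into its $d\widebar{N}$ part and its compensator part $n\int_{t_{j-1}}^{t_j}J\haz\,ds=O(1)$, bound the integrand by $\|J/\widebar{Z}\|_\infty$, and invoke \ref{C5} to get $\ex[|\eta_j|^\nu]=O(1)$, so that this term is $O(\ell-k+1)=O((\ell-k+1)^{\nu/2})$ because $\nu>4>2$. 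This yields $\ex[|\sum_{j=k}^{\ell}\eta_j|^\nu]\le C(\ell-k+1)^{\nu/2}$ with $C$ independent of $k,\ell,n$. Markov's inequality then gives $\pr\bigl((\ell-k+1)^{-1/2}|\sum_{j=k}^{\ell}\eta_j|>\kappa_n/2\bigr)\le C(\kappa_n/2)^{-\nu}$, and a union bound over the at most $n^2$ pairs $(k,\ell)$ produces a total probability of order $n^2/\kappa_n^\nu$. Since $\kappa_n\ge c_n n^{2/\nu}$, this equals $O(c_n^{-\nu})=o(1)$, which is precisely where the exponent $2/\nu$ in the definition of $\kappa_n$ comes from.

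I expect the main obstacle to be the rigorous derivation of the uniform moment bound $\ex[|\sum_{j=k}^{\ell}\eta_j|^\nu]\le C(\ell-k+1)^{\nu/2}$. This requires a Rosenthal/Burkholder inequality valid for the (jump) martingale $\eta$, and it requires feeding \ref{C5} in at the right places: one must verify that both the predictable-quadratic-variation term and the individual $\nu$-th-moment term scale correctly, which hinges on confirming that $J/\widebar{Z}$ is genuinely of order $1/n$ and that the compensator contribution is of order one. Keeping the constant $C$ uniform in $k$, $\ell$ and $n$ — rather than accumulating a factor that grows with the number of summands — is the delicate point; once the moment bound is established, the union bound over $\Theta(n^2)$ pairs and the final assembly of the four contributions via the triangle inequality are routine.
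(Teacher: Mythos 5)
Your proposal is correct and follows essentially the same route as the paper: the same four-way decomposition of $u_j$, the same arguments for $\Delta_j^\alpha$ (at most $K$ non-vanishing terms), $\Delta_j^J$ (via \ref{C4}), and $\Delta_j^\beta$ (via \ref{C6} and $\kappa_n \ge c_n\sqrt{n}\rho_n$), and the same Markov-plus-union-bound strategy over the $O(n^2)$ pairs $(k,\ell)$ for the martingale part, with the exponent $2/\nu$ entering exactly as you describe. The only difference is that the paper obtains the uniform moment bound by applying a continuous-time Rosenthal inequality for stochastic integrals against counting-process martingales (due to Wood) directly to $\eta(t_\ell)-\eta(t_{k-1})=\int_0^\tau H_{k,\ell}\,d\widebar{M}$, which cleanly delivers both the quadratic-variation term and the term $\ex\int|H_{k,\ell}|^\nu\widebar{\lambda}\,ds$ using only the $\nu$-th (and lower) moments from \ref{C5}, whereas your discrete-time martingale-difference version still needs $\ex|\eta_j|^\nu=O(1)$ for a single increment — the delicate point you flag — which is most easily settled by the same continuous-time inequality rather than by bounding the $d\widebar{N}$ part through $\|J/\widebar{Z}\|_\infty$ alone (that route requires joint moments with the counting increment beyond what \ref{C5} directly provides).
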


\begin{proof}
Since $u_j = \Delta_j^\alpha + \Delta_j^J + \Delta_j^\beta + \eta_j$, it holds that
\[ \pr \left( \max_{1 \le k \le \ell \le n} \bigg| \frac{1}{\sqrt{\ell-k+1}} \sum_{j=k}^\ell u_j \bigg| > \kappa_n \right) \leq P_n^\alpha + P_n^J + P_n^\beta + P_n^\eta, \]
where
\begin{align*}
P_n^\alpha & = \pr \left( \max_{1 \le k \le \ell \le n} \bigg| \frac{1}{\sqrt{\ell-k+1}} \sum_{j=k}^\ell \Delta_j^\alpha \bigg| > \frac{\kappa_n}{4} \right) \\
P_n^J & = \pr \left( \max_{1 \le k \le \ell \le n} \bigg| \frac{1}{\sqrt{\ell-k+1}} \sum_{j=k}^\ell \Delta_j^J \bigg| > \frac{\kappa_n}{4} \right) \\
P_n^\beta & = \pr \left( \max_{1 \le k \le \ell \le n} \bigg| \frac{1}{\sqrt{\ell-k+1}} \sum_{j=k}^\ell \Delta_j^\beta \bigg| > \frac{\kappa_n}{4} \right) \\
P_n^\eta & = \pr \left( \max_{1 \le k \le \ell \le n} \bigg| \frac{1}{\sqrt{\ell-k+1}} \sum_{j=k}^\ell \eta_j \bigg| > \frac{\kappa_n}{4} \right).
\end{align*}
We analyze these four probabilities one after the other:
\begin{enumerate}[label=(\roman*)]

\item As the change points $\tau_1 < \ldots < \tau_K$ and their number $K$ are fixed, we have the following for sufficiently large $n$, where we set $\tau_0 := \tau_{\min}$ and $\tau_{K+1} := \tau_{\max}$ for notational convenience: for each $j \in \{1,\ldots,n\}$, either (a) $[t_{j-1}, t_j] \subseteq [\tau_{k-1},\tau_k]$ for some $k$ or (b) $t_{j-1} < \tau_k < t_j$ for some $k$. Case (a) occurs at least $n-K$ times, whereas case (b) occurs at most $K$ times. It is straightforward to see that in case (a), $\Delta_j^\alpha = 0$, whereas in case (b), 
\begin{align*}
|\Delta_j^\alpha|
 & = \left| \frac{\tau_{k}-t_{j-1}}{t_j-t_{j-1}} \mathfrak{a}_{k} + \frac{t_j-\tau_{k}}{t_j-t_{j-1}} \mathfrak{a}_{k+1} - \mathfrak{a}_{k+1} \right| \\
 & \le \frac{\tau_{k}-t_{j-1}}{t_j-t_{j-1}} |\mathfrak{a}_{k} - \mathfrak{a}_{k+1}| 
   \le \max_{1 \le k \le K} |\mathfrak{a}_{k} - \mathfrak{a}_{k+1}| =: C_{\mathfrak{a}} < \infty. 
\end{align*}
Since case (b) occurs at most $K$ times, we obtain that 
\[ \max_{1 \le k \le \ell \le n} \bigg| \frac{1}{\sqrt{\ell-k+1}} \sum_{j=k}^\ell \Delta_j^\alpha \bigg| \le K C_{\mathfrak{a}}. \]
With our choice of $\kappa_n$, this implies that $P_n^\alpha = 0$ for sufficiently large $n$.

\item Since $\Delta^J_j = \{ \Delta^J(t_j) - \Delta^J(t_{j-1}) \} / \{ t_j - t_{j-1} \}$ and $\Delta^J(t) =  \int_0^t (J(s,\bs{\beta})-1) \haz(s) ds$, we have   
\begin{align*}
P_n^J
 & \le \pr \left( \max_{1 \le k \le \ell \le n} \sqrt{\ell - k + 1} \max_{k \le j \le \ell} \big| \Delta_j^J \big| > \frac{\kappa_n}{4} \right) \\
 & \le \pr \left( \max_{1 \le j \le n} \big| \Delta_j^J \big| > \frac{\kappa_n}{4 \sqrt{n}} \right) \\
 & \le \pr \left( \max_{1 \le j \le n} \big|\Delta^J_j\big| > 0 \right) \\
 & \le \pr \left( \inf_{t \in [\tau_{\min},\tau_{\max}]} J(t,\bs{\beta}) = 0 \right) = o(1)
\end{align*}
by \ref{C4}.

\item By \ref{C6}, $\max_{1 \le j \le n} |\Delta^\beta_j| = O_p(\rho_n)$. Therefore,
\begin{align*}
P_n^\beta
 & \le \pr \left( \max_{1 \le k \le \ell \le n} \sqrt{\ell - k + 1} \max_{k \le j \le \ell} \big| \Delta_j^\beta \big| > \frac{\kappa_n}{4} \right) \\
 & \le \pr \left( \max_{1 \le j \le n} \big| \Delta_j^\beta \big| > \frac{\kappa_n}{4 \sqrt{n}} \right) = o(1),
\end{align*}
as long as $\kappa_n$ is such that $\kappa_n/\sqrt{n} \rho_n \to \infty$, which is ensured by our choice of $\kappa_n$.

\item A simple union bound yields that
\[ P_n^\eta \le \sum_{1 \le k \le \ell \le n} P_{k,\ell}^\eta \quad \text{with} \quad P_{k,\ell}^\eta = \pr \left( \bigg| \frac{1}{\sqrt{\ell-k+1}} \sum_{j=k}^\ell \eta_j \bigg| > \frac{\kappa_n}{4} \right). \]
Since $\eta_j = \{\eta(t_j) - \eta(t_{j-1})\} / \{t_j - t_{j-1}\} = n \{ \eta(t_j) - \eta(t_{j-1}) \}$, it holds that $\sum_{j=k}^\ell \eta_j = n ( \eta(t_\ell) - \eta(t_{k-1}) )$. Hence, we can apply Markov's inequality to get that 
\begin{align*}
P_{k,\ell}^\eta
 & \leq \frac{n^\nu \ex \left[ \big| \eta(t_\ell) - \eta(t_{k-1}) \big|^\nu \right]}{\{(\kappa_n/4) \sqrt{\ell-k+1}\}^\nu}
\end{align*}
for any $\nu > 0$. Our goal now is to derive a suitable upper bound on $\ex |\eta(t_\ell) - \eta(t_{k-1})|^\nu$. To do so, we regard $\eta(t_\ell) - \eta(t_{k-1})$ as a stochastic integral of the form 
\begin{equation*}
\eta(t_\ell) - \eta(t_{k-1}) = \int_0^\tau H_{k,\ell}(s) d\widebar{M}(s)
\end{equation*}
with
\[ H_{k,\ell}(s) = \ind(s \in (t_{k-1},t_\ell]) \frac{J(s,\bs{\beta})}{\widebar{Z}(s,\bs{\beta})} \]
and apply a version of Rosenthal's inequality to it. In particular, we use the following result due to \cite{Wood1999}. To formulate the result, we denote the intensity of the counting process $\widebar{N}$
by $\widebar{\lambda}$, that is, $\widebar{N}(t) = \int_0^t \widebar{\lambda}(s) ds + \widebar{M}(t)$ with $\widebar{\lambda}(s) = \widebar{Z}(s,\bs{\beta}) \haz(s)$. 
\begin{prop}\label{prop:Rosenthal}
Consider the stochastic integral 
\[ I_H(t) = \int_0^t H(s) d\widebar{M}(s), \]
where $H = \{H(t): t \in [0,\tau] \}$ is an $\{\mathcal{F}_t\}$-predictable process. If
(a) $\int_0^t \widebar{\lambda}(s) ds < \infty$ almost surely for any time point $t \in [0,\tau]$,
(b) $\ex [\int_0^\tau |H(s)| \widebar{\lambda}(s) ds ] < \infty$ and
(c) $\sup_{t \in [0,\tau]} \ex[I_H^2(t)] < \infty$, then  
\[ \ex \big[ |I_H(\tau)|^\nu \big] \leq C_\nu \ex \left[ \{\quadvar{I_H}(\tau)\}^{\nu/2} + \int_0^{\tau} |H(s)|^\nu \widebar{\lambda}(s) ds \right] \]
for any $\nu > 2$, where $C_\nu$ is a finite constant depending only on $\nu$.
\end{prop}
In order to apply this lemma with $H = H_{k,\ell}$, we verify its conditions: As $\widebar{Z}(\cdot,\bs{\beta})$ is predictable by \ref{C2}, so is the process $H_{k,\ell}$. Condition (a) is satisfied automatically in our framework \citep[see e.g.\ Theorem 2.3.1 in][]{FlemingHarrington2005}. Moreover, (b) holds since
\begin{align*}
\ex \int_0^\tau |H_{k,\ell}(s)| \widebar{\lambda}(s) ds 
 & \le \ex \int_0^\tau \ind(s \in (t_{k-1},t_\ell]) J(s,\bs{\beta})\haz(s) ds \\
 & \le \int_0^\tau \haz(s) ds = \cumhaz(\tau) 
\end{align*}
and (c) follows from the bound
\begin{align*}
\ex \bigg\{ \int_0^t H_{k,\ell}(s) d\widebar{M}(s) \bigg\}^2
& = \ex \int_0^t \left(  \ind(s \in (t_{k-1},t_\ell]) \frac{J(s,\bs{\beta})}{\widebar{Z}(s,\bs{\beta})} \right)^2 d\quadvar{\widebar{M}}(s) \\
% by Theorem 2.4.3 in \cite{FlemingHarrington2005} 
& = \ex \int_0^t \left(  \ind(s \in (t_{k-1},t_\ell]) \frac{J(s,\bs{\beta})}{\widebar{Z}(s,\bs{\beta})} \right)^2 \widebar{\lambda}(s) ds \\
% No additional conditions required for $d\quadvar{\widebar{M}} = \widebar{\lambda}$; see (2.4.3) in \cite{AndersenGill1993}.
& = \ex \int_0^t |H_{k,\ell}(s)| \haz(s) ds \\
& \le \ex \int_{\tau_{\min}}^{\tau_{\max}} \bigg|\frac{J(s,\bs{\beta})}{\widebar{Z}(s,\bs{\beta})}\bigg| \haz(s) ds \\
& \le \left\{ \int_{\tau_{\min}}^{\tau_{\max}} \haz(s) ds \right\} \ex \sup_{s \in [\tau_{\min},\tau_{\max}]} \bigg|\frac{J(s,\bs{\beta})}{\widebar{Z}(s,\bs{\beta})}\bigg| < \infty,
\end{align*}
where $\ex \sup_{s \in [\tau_{\min},\tau_{\max}]} |J(s,\bs{\beta})/\widebar{Z}(s,\bs{\beta})| < \infty$ by \ref{C5}. We now compute the upper bound of the Rosenthal inequality from Proposition \ref{prop:Rosenthal} in our case. Let $\nu$ be a natural number strictly larger than $4$. Since
\[ \ex \bigg\| \frac{J(\cdot,\bs{\beta})}{\widebar{Z}(\cdot,\bs{\beta})} \bigg\|_{\infty}^{\nu} \le \frac{C_{\infty,\nu}}{n^\nu} \]
by \ref{C5}, it holds that
\begin{align*}
\ex \{\quadvar{I_{H_{k,\ell}}}(\tau)\}^{\nu/2}
 & = \ex \left\{ \int_0^\tau H_{k,\ell}^2(s) d\quadvar{\widebar{M}}(s) \right\}^{\nu/2} \\
 & = \ex \left\{ \int_0^\tau \left( \ind_{(t_{k-1},t_\ell]}(s) \frac{J(s,\bs{\beta})}{\widebar{Z}(s,\bs{\beta})} \right)^2 \widebar{\lambda}(s) ds \right\}^{\nu/2} \\
% No additional conditions required for $d\quadvar{\widebar{M}} = \widebar{\lambda}$; see (2.4.3) in \cite{AndersenGill1993}.
 & = \ex \left\{ \int_0^\tau \ind_{(t_{k-1},t_\ell]}(s) \bigg|\frac{J(s,\bs{\beta})}{\widebar{Z}(s,\bs{\beta})}\bigg| \haz(s) ds \right\}^{\nu/2} \\
 & \le \bigg\{ \int_0^\tau \ind_{(t_{k-1},t_\ell]}(s) \haz(s) ds \bigg\}^{\nu/2} \ex \bigg\| \frac{J(\cdot,\bs{\beta})}{\widebar{Z}(\cdot,\bs{\beta})} \bigg\|_{\infty}^{\nu/2} \\
 & \le \bigg\{ \frac{(\ell - k + 1)}{n} \| \haz \|_\infty \bigg\}^{\nu/2} \ex \bigg\| \frac{J(\cdot,\bs{\beta})}{\widebar{Z}(\cdot,\bs{\beta})} \bigg\|_{\infty}^{\nu/2} \\
 & \le C_{\infty,\nu/2} \bigg\{ \frac{(\ell - k + 1)}{n^2} \| \haz \|_\infty \bigg\}^{\nu/2} 
\end{align*}
and
\begin{align*}
\ex \int_0^\tau |H_{k,\ell}(s)|^\nu \widebar{\lambda}(s) ds 
 & \le \ex \int_0^\tau \ind_{(t_{k-1},t_\ell]}(s) \bigg| \frac{J(s,\bs{\beta})}{\widebar{Z}(s,\bs{\beta})} \bigg|^{\nu-1} \haz(s) ds \\
 & \le \bigg\{ \int_0^\tau \ind_{(t_{k-1},t_\ell]}(s) \haz(s) ds \bigg\} \ex \bigg\| \frac{J(\cdot,\bs{\beta})}{\widebar{Z}(\cdot,\bs{\beta})} \bigg\|_{\infty}^{\nu-1} \\
 & \le \bigg\{ \frac{(\ell - k + 1)}{n} \| \haz \|_\infty \bigg\} \ex \bigg\| \frac{J(\cdot,\bs{\beta})}{\widebar{Z}(\cdot,\bs{\beta})} \bigg\|_{\infty}^{\nu-1} \\
 & \le C_{\infty,\nu-1} \bigg\{ \frac{(\ell - k + 1)}{n^{\nu}} \| \haz \|_\infty \bigg\} 
\end{align*}
with $\| \haz \|_\infty = \sup_{t \in [0,1]} |\haz(t)|$. We thus arrive at the bound
\begin{align*}
\ex \big| \eta(t_\ell) - \eta(t_{k-1}) \big|^\nu
 & = \ex \bigg|\int_0^\tau H_{k,\ell}(s) d\widebar{M}(s)\bigg|^{\nu} \\
 & \le C_\nu \max\{C_{\infty,\nu/2},C_{\infty,\nu-1}\} \\ & \quad \times \left[ \bigg\{ \frac{(\ell - k + 1)}{n^2} \| \haz \|_\infty \bigg\}^{\nu/2} + \frac{(\ell - k + 1)}{n^{\nu}} \| \haz \|_\infty \right].
\end{align*}
Using this bound, we finally obtain that 
\begin{align*}
P_n^\eta
 & \le \sum_{1 \le k \le \ell \le n}  \frac{n^\nu \ex \left[ \big| \eta(t_\ell) - \eta(t_{k-1}) \big|^\nu \right]}{\{(\kappa_n/4) \sqrt{\ell-k+1}\}^\nu} 
 \le \frac{Cn^2}{\kappa_n^\nu} 
\end{align*}
with the constant $C = 2^{2\nu+1} C_\nu \max\{C_{\infty,\nu/2}, C_{\infty,\nu-1}\} \max\{\| \haz \|_\infty, \| \haz \|_\infty^{\nu/2}\}$. With our choice of $\kappa_n$, this implies that $P_n^\eta = o(1)$. \qedhere 
\end{enumerate}
\end{proof}

\subsection*{Proof of Theorem \ref{theo:l2-bound}}

By Theorem \ref{theo:elementwise-bound}, the following holds with probability tending to 1:  
\begin{align*}
\sum_{j=n_k}^{n_{k+1}-1} \big( \hat{\alpha}_{\lambda,j} - \haz_j \big)^2 
 & \le \sum_{j=1}^{r_k} \bigg\{ \bigg[\frac{\kappa_n}{\sqrt{j}}\bigg]^2 + \bigg[\frac{\kappa_n}{\sqrt{r_k +1 - j}}\bigg]^2 \\ & \phantom{\le \sum_{j=1}^{r_k} \bigg\{} + \frac{\kappa_n^4}{16(n\lambda)^2} + \frac{16(n\lambda)^2}{r_k^2} + \frac{16\kappa_n^2}{r_k} \bigg\} \\
 & \le C \left\{ \log(n) \kappa_n^2 + \frac{\kappa_n^4}{n \lambda^2} + n \lambda^2 + \kappa_n^2  \right\}
\end{align*}
for all $k \in \{0,\ldots,K\}$ with some sufficiently large constant $C$, where we have used that $\sum_{j=1}^m j^{-1} \le \log(m) + 1$ for any $m \in \naturals$ and $0 < c_r \le r_k/n \le C_r$ for all $k$ and sufficiently large $n$ with suitable constants $0 < c_r \le C_r < \infty$. We thus obtain that 
\begin{align*}
\frac{1}{n} \norm{\hat{\bs{\alpha}}_\lambda - \hazvec}_2^2  
 & = \frac{1}{n} \sum_{j=1}^n \big( \hat{\alpha}_{\lambda,j} - \haz_j \big)^2 \\
 & = \frac{1}{n} \sum_{k=0}^{K}\sum_{j=n_k}^{n_{k+1}-1} \big( \hat{\alpha}_{\lambda,j} - \haz_j \big)^2 \\
 & \le \frac{C}{n} \left\{ \log(n) \kappa_n^2 + \frac{\kappa_n^4}{n \lambda^2} + n \lambda^2 + \kappa_n^2  \right\}
\end{align*}
with probability tending to $1$. Choosing $\lambda$ such that $n \lambda^2 = \kappa_n^2$, that is, $\lambda = \kappa_n /\sqrt{n}$, we arrive at  
\begin{align*}
\frac{1}{n} \norm{\hat{\bs{\alpha}}_\lambda - \hazvec}_2^2
 & \le \frac{C \log(n) \kappa_n^2}{n}  = C \log(n) c_n^2 \max\{  n^{-1 + \frac{4}{\nu}}, \rho_n^2 \}
\end{align*}
with probability tending to $1$. This completes the proof.

\subsection*{Proof of Theorem \ref{theo:change-point-I}}

Let $J^* = \{ j \in \{2,\ldots, n\} : \haz_{j-1} \neq \haz_j \}$ and note that $J^* = \{ n_1,\ldots,n_K \}$ with $n_k = \lceil (\tau_k - \tau_{\min}) n \rceil$ for $k=1,\ldots,K$. Setting $n_0 = 1$ and $n_{K+1} = n+1$ for convenience, we denote the smallest distance between two jump indices $n_k$ and $n_{k+1}$ of $\hazvec$ by
\[ D_n = \min_{0 \le k \le K} (n_{k+1} - n_k) \]
and the smallest jump height by
\[ H_n = \min_{j \in J^*} |\haz_j - \haz_{j-1}|. \]
Theorem \ref{theo:change-point-I} is a direct consequence of Theorem 4 from \cite{LinSharpnackRinaldoTibshirani2017}, which says the following:

\begin{prop}\label{theo:LinSharpnackRinaldoTibshirani2017}
Let $\tilde{\bs{\alpha}} = (\tilde{\alpha}_1,\ldots,\tilde{\alpha}_n)^\top$ be an estimator of $\hazvec = (\haz_1,\ldots,\haz_n)^\top$ with the property that $n^{-1} \norm{\tilde{\bs{\alpha}} - \hazvec}_2^2 = O_p(R_n)$ and let $\tilde{J} = \{ j \in \{2,\ldots, n\} : \tilde{\alpha}_{j-1} \neq \tilde{\alpha}_j \}$. Assume that $n R_n / H_n^2 = o(D_n)$. Then
\[ d(\tilde{J} \, | \, J^* ) = O_p \Big( \frac{n R_n}{H_n^2} \Big). \]
\end{prop}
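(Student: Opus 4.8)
The plan is to prove the localization bound by a direct comparison argument: if the estimator $\tilde{\bs{\alpha}}$ failed to place a jump near some true jump index $n_k$, then $\tilde{\bs{\alpha}}$ would be forced to be constant across a wide window straddling $n_k$, and approximating a genuine jump of height at least $H_n$ by a single constant over such a window would cost more squared error than the global bound $n^{-1}\norm{\tilde{\bs{\alpha}} - \hazvec}_2^2 = O_p(R_n)$ permits.

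First I would fix $\epsilon > 0$ and, using the rate assumption, choose a constant $M < \infty$ so that the event $\mathcal{A} = \{ \norm{\tilde{\bs{\alpha}} - \hazvec}_2^2 \le M n R_n \}$ has probability at least $1-\epsilon$ for all large $n$. I would then introduce the candidate localization radius $\delta_n = C n R_n / H_n^2$ with a constant $C$ to be fixed, and record that by the separation assumption $n R_n / H_n^2 = o(D_n)$ we have $\delta_n = o(D_n)$, hence $\delta_n < D_n/2$ for all large $n$. Working on $\mathcal{A}$, I would argue by contradiction: suppose $d(\tilde{J} \, | \, J^*) > \delta_n$. Then there is a true jump index $n_k \in J^*$ with no element of $\tilde{J}$ within distance $\delta_n$, which forces $\tilde{\alpha}_j$ to equal a single constant $c$ on the window $W_k = \{ n_k - m, \ldots, n_k + m - 1 \}$, where $m = \lfloor \delta_n \rfloor$. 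Because $\delta_n < D_n/2 \le \tfrac12 \min\{ n_k - n_{k-1},\, n_{k+1} - n_k \}$, the window $W_k$ contains no true jump other than $n_k$, so $\haz_j = \mathfrak{a}_k$ on its left half and $\haz_j = \mathfrak{a}_{k+1}$ on its right half, with $|\mathfrak{a}_k - \mathfrak{a}_{k+1}| \ge H_n$.

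Minimizing the contribution of $W_k$ to the squared error over the single admissible value $c$ then yields
\[ \norm{\tilde{\bs{\alpha}} - \hazvec}_2^2 \ge \min_{c \in \reals} \Big\{ m (c - \mathfrak{a}_k)^2 + m (c - \mathfrak{a}_{k+1})^2 \Big\} = m \, \frac{(\mathfrak{a}_k - \mathfrak{a}_{k+1})^2}{2} \ge \lfloor \delta_n \rfloor \, \frac{H_n^2}{2}. \]
Combining this with the upper bound $\norm{\tilde{\bs{\alpha}} - \hazvec}_2^2 \le M n R_n$ valid on $\mathcal{A}$ gives $\lfloor \delta_n \rfloor \le 2 M n R_n / H_n^2$, which contradicts $\delta_n = C n R_n / H_n^2$ once $C$ is chosen larger than $2M$ (with a further additive margin to absorb the floor). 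Hence on $\mathcal{A}$ we must have $d(\tilde{J} \, | \, J^*) \le \delta_n$; since $\pr(\mathcal{A}) \ge 1 - \epsilon$ eventually and $\epsilon$ was arbitrary, this establishes $d(\tilde{J} \, | \, J^*) = O_p(n R_n / H_n^2)$.

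I expect the main obstacle to lie in the bookkeeping that makes the window argument airtight rather than in any deep idea. One must verify that a window of width $\asymp 2\delta_n$ centred at $n_k$ genuinely lies inside the two adjacent constant segments of $\hazvec$ — this is precisely where the separation condition $n R_n/H_n^2 = o(D_n)$ is indispensable, as it guarantees $\delta_n < D_n/2$ — and one must handle integer/floor effects as well as the boundary segments at indices $n_0 = 1$ and $n_{K+1} = n+1$, which are accounted for because their spacings enter the definition of $D_n$. Since $K$ is fixed, no union bound over a growing number of jumps is required: the single worst-localized true jump already produces the contradiction.
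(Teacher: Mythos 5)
Your proposal is correct and follows essentially the same route as the paper's proof: a window of radius $\asymp n R_n/H_n^2$ around a putative unlocalized true jump, constancy of $\tilde{\bs{\alpha}}$ on that window, the elementary bound $(x-a)^2+(x-b)^2 \ge (a-b)^2/2$, and the separation condition $nR_n/H_n^2 = o(D_n)$ to keep the window inside the two adjacent constant segments. The only differences are cosmetic (your explicit contradiction with constants $M$ and $C>2M$ versus the paper's direct implication with $r_n = \lceil C n R_n/H_n^2\rceil$, and floor versus ceiling bookkeeping).
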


A proof of Proposition \ref{theo:LinSharpnackRinaldoTibshirani2017} is provided in Appendix B for completeness. We now verify that the assumptions of Proposition \ref{theo:LinSharpnackRinaldoTibshirani2017} are fulfilled in our case. As the number of jumps $K$ is fixed and finite, it holds that $D_n = \min_{0 \le k \le K} (\tau_{k+1} - \tau_k) n \ge c_D n$ with some constant $c_D > 0$. Similarly, $H_n \ge c_H$ with some $c_H > 0$. By Theorem \ref{theo:l2-bound},  
\[ \frac{1}{n} \norm{\hat{\bs{\alpha}}_\lambda - \hazvec}_2^2 = O_p(R_n) \quad \text{with} \quad R_n = c_n^2 \log(n) \max\big\{  n^{-1 + \frac{4}{\nu}}, \rho_n^2 \big\}. \]
Consequently,
\[ \frac{n R_n}{H_n^2} \le \frac{n R_n}{c_H^2} = \frac{1}{c_H^2} c_n^2 \log(n) \max\big\{  n^{\frac{4}{\nu}}, n \rho_n^2 \big\} = o(D_n), \]
which shows that the conditions of Proposition \ref{theo:LinSharpnackRinaldoTibshirani2017} are fulfilled. Defining $\hat{J}_\lambda = \{ j \in \{2,\ldots, n\} : \hat{\alpha}_{\lambda,j-1} \neq \hat{\alpha}_{\lambda,j} \}$, we can now apply Proposition \ref{theo:LinSharpnackRinaldoTibshirani2017} to obtain that 
\[ d(\hat{J}_\lambda \, | \, J^*) = O_p \Big( c_n^2 \log(n) \max\big\{  n^{\frac{4}{\nu}}, n \rho_n^2 \big\} \Big), \]
which is equivalent to
\[ d(\hat{\mathcal{S}}_\lambda \, | \, \mathcal{S}^*) = O_p\Big( c_n^2 \log(n) \max\big\{ n^{-1+\frac{4}{\nu}}, \rho_n^2 \big\} \Big). \]

\subsection*{Proof of Theorem \ref{theo:change-point-II}}

Theorem \ref{theo:elementwise-bound} immediately yields that with probability tending to $1$,
\begin{align*}
|\hat{\alpha}_{\lambda,\hat{\tau}n} - \hat{\alpha}_{\lambda,\hat{\tau}n-1}| 
 & \le |\hat{\alpha}_{\lambda,\hat{\tau}n} - \haz_{\hat{\tau}n}| + |\hat{\alpha}_{\lambda,\hat{\tau}n-1} - \haz_{\hat{\tau}n-1}| \\
 & \le 2 \max_{j \in \{\hat{\tau}n-1, \hat{\tau}n\}} \left\{ \frac{\kappa_n}{\sqrt{d_j}}, \frac{\kappa_n^2}{4\ngrid\lambda},\frac{2\ngrid\lambda}{r_{k(j)}} + \frac{2\kappa_n}{\sqrt{r_{k(j)}}} \right\} \\
 & \le C \max \left\{ \frac{\kappa_n}{\sqrt{\Delta_n}}, \frac{\kappa_n}{\sqrt{n}} \right\} =  C \frac{\kappa_n}{\sqrt{\Delta_n}}
\end{align*}
for all $\hat{\tau} \in \hat{\mathcal{S}}_\lambda^{\text{far}}$, where we have used that $\min_{1\le k \le K} |\hat{\tau} - \tau_k| \ge \Delta_n/n$, $0 < c_r \le r_k/n \le C_r$ for all $k$ and sufficiently large $n$ with suitable constants $0 < c_r \le C_r < \infty$, $\lambda =  \kappa_n / \sqrt{n}$, and $\Delta_n$ can always be chosen such that $\Delta_n \le n$.

\subsection*{Proof of Proposition \ref{prop:settingB}}

Proposition \ref{prop:settingB} can be proven by a simplified version of the technical arguments for Proposition \ref{prop:settingC}. Hence, we only prove the latter.

\subsection*{Proof of Proposition \ref{prop:settingC}}

In what follows, all probabilistic statements are to be understood with respect to the conditional probability measure $\pr^B$ given study entry. Formally, this measure is defined by $\pr^{B}(A) = \pr(A \cap B)/\pr(B)$ for all $A \in \mathcal{F}$, where $B = \bigcap_{i=1}^n B_i$ with $B_i = \{T_i^* > L_i\}$. In addition to $\pr^B$, we use the symbol $\ex^B$ to denote the expectation given study entry. As each competing risk $r \in \{1,\ldots,R\}$ is analyzed separately, we let $r$ be fixed throughout the proof and consider the corresponding multivariate counting process $N_{1:n,r} = (N_{1r},\ldots,N_{nr})$ on the filtered probability space $(\Omega,\mathcal{F},\{\mathcal{F}_t\},\pr^B)$. Here, $\mathcal{F}_t$ is the $\sigma$-algebra generated by the information available up to time $t$. Formally speaking, $\mathcal{F}_t = \sigma(N_{ir^\prime}(s),\ind(L_i < s \le T_i): 1 \le i \le n, \, 1 \le r^\prime \le R, \, 0 \le s \le t) \lor \sigma(\bs{W}_1,\ldots,\bs{W}_n)$.

Before we verify conditions \ref{C0}--\ref{C6}, we prove an auxiliary lemma. 
\begin{lemma}\label{lemma:binomial}
Let $Z \sim \textnormal{Bin}(n,q)$ be a binomially distributed random variable with $q \in (0,1)$, that is,
\[ \pr^B(Z=k) = \binom{n}{k} q^k (1-q)^{n-k} \quad \text{for } k=0,\ldots,n. \]
Then for any $\nu \in \naturals$,
\[ \ex^B \left[ \frac{\ind(Z>0)}{Z^\nu} \right] \leq \frac{(\nu+1)!}{(nq)^{\nu}}, \]
where we define
\[ \frac{\ind(Z>0)}{Z^\nu} =
\begin{cases}
1/Z^\nu & \text{for } Z > 0 \\
0      & \text{for } Z = 0.
\end{cases}
\]
\end{lemma}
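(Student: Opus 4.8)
The plan is to replace the awkward negative power $1/Z^\nu$ by a reciprocal rising factorial that behaves nicely under the binomial law, and then evaluate the resulting moment exactly.

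First I would record an elementary pointwise inequality: for any integer $Z \ge 1$,
\[ (Z+1)(Z+2)\cdots(Z+\nu) = Z^\nu \prod_{j=1}^\nu \Bigl(1 + \frac{j}{Z}\Bigr) \le Z^\nu \prod_{j=1}^\nu (1+j) = (\nu+1)! \, Z^\nu, \]
where I used $j/Z \le j$ because $Z \ge 1$. Rearranging gives $\ind(Z>0)/Z^\nu \le (\nu+1)!\,/\{(Z+1)\cdots(Z+\nu)\}$ for all $Z \ge 0$, the case $Z=0$ being trivial since the left-hand side then vanishes. Taking $\ex^B$ and dropping the indicator on the right (all terms are nonnegative) reduces the claim to the bound $\ex^B[1/\{(Z+1)\cdots(Z+\nu)\}] \le (nq)^{-\nu}$. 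This is the step that produces the exact constant $(\nu+1)!$.

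Next I would compute this factorial moment exactly. Writing $1/\{(Z+1)\cdots(Z+\nu)\} = Z!/(Z+\nu)!$ and inserting the binomial weights, the summand rearranges as
\[ \frac{k!}{(k+\nu)!}\binom{n}{k} = \frac{n!}{(n+\nu)!}\binom{n+\nu}{k+\nu}, \]
so that after the index shift $m = k+\nu$ the sum equals $\frac{n!}{(n+\nu)!}\, q^{-\nu} \sum_{m=\nu}^{n+\nu}\binom{n+\nu}{m} q^m (1-q)^{(n+\nu)-m}$. The remaining sum is a tail probability of a $\mathrm{Bin}(n+\nu,q)$ variable and is therefore at most $1$, which yields $\ex^B[Z!/(Z+\nu)!] \le \frac{n!}{(n+\nu)!}\, q^{-\nu} \le (nq)^{-\nu}$ on using $(n+1)\cdots(n+\nu) \ge n^\nu$. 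Combining this with the pointwise inequality of the previous paragraph gives precisely $\ex^B[\ind(Z>0)/Z^\nu] \le (\nu+1)!/(nq)^\nu$.

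I do not anticipate a genuine obstacle here, since both ingredients are elementary. The only point requiring a little care is identifying the right auxiliary quantity, namely the reciprocal rising factorial $1/\{(Z+1)\cdots(Z+\nu)\}$: it is exactly this choice that simultaneously dominates $1/Z^\nu$ with the clean constant $(\nu+1)!$ and collapses, after the index shift, into a shifted binomial sum that is trivially bounded by $1$. Everything else is bookkeeping with factorials.
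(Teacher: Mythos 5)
Your proof is correct and follows essentially the same route as the paper's: the key steps — bounding $1/k^\nu$ by $(\nu+1)!/\{(k+1)\cdots(k+\nu)\}$ and then collapsing the resulting sum into a shifted $\mathrm{Bin}(n+\nu,q)$ probability bounded by $1$ — are identical, with the paper merely carrying out both manipulations inside a single chain of equalities rather than separating the pointwise inequality from the exact moment computation. Your modular presentation is a slightly cleaner way of organizing the same argument.
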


\begin{proof}
We have
\begin{align*}
\ex^B \left[ \frac{\ind(Z > 0)}{Z^\nu} \right] 
& = \sum_{k=1}^n \frac{1}{k^\nu} \binom{n}{k} q^k (1-q)^{n-k} \\
& = \sum_{k=1}^n \frac{1}{k^\nu} \frac{n!}{(n-k)! k!} q^k (1-q)^{n-k} \\
& = \sum_{k=1}^n \frac{1}{k^\nu} \left( \prod_{i=1}^\nu \frac{(k+i)}{(k+i)} \right) \frac{n!}{(n-k)! k!} q^k (1-q)^{n-k} \\
& = \sum_{k=1}^n \underbrace{ \prod_{i=1}^\nu \frac{k+i}{k}}_{\leq \prod_{i=1}^{\nu} (1+i) = (\nu+1)!} \frac{n!}{(n-k)! (k+\nu)!} q^k (1-q)^{n-k} \\
& \leq (\nu+1)! \sum_{k=1}^n \frac{n!}{(n-k)! (k+\nu)!} q^k (1-q)^{n-k} \\
& = (\nu+1)! \sum_{k=1}^n \frac{n!}{[(n+\nu)-(k+\nu)]! (k+\nu)!} q^k (1-q)^{n-k} \\
& = (\nu+1)! \sum_{k=1}^n \left( \prod_{i=1}^\nu \frac{1}{n+i} \right) \frac{(n+\nu)!}{[(n+\nu)-(k+\nu)]! (k+\nu)!} \\
& \hspace{6.2cm} \times q^{k+\nu} q^{-\nu} (1-q)^{(n+\nu)-(k+\nu)} \\
& = (\nu+1)! q^{-\nu} \left( \prod_{i=1}^\nu \frac{1}{n+i} \right) \underbrace{\sum_{k=1}^n \binom{n+\nu}{k+\nu} q^{k+\nu} (1-q)^{(n+\nu) - (k+\nu)}}_{\begin{array}{l}
= \sum_{\ell=\nu+1}^{n+\nu} \binom{n+\nu}{\ell} q^\ell (1-q)^{(n+\nu)-\ell} \\
\leq \sum_{\ell=0}^{n+\nu} \binom{n+\nu}{\ell} q^\ell (1-q)^{(n+\nu)-\ell} = 1
\end{array}} \\
& \leq (\nu+1)! q^{-\nu} \left( \prod_{i=1}^\nu \frac{1}{n+i} \right) \\
& \leq (\nu+1)! (nq)^{-\nu}. \qedhere
\end{align*}
\end{proof}

We now verify conditions \ref{C0}--\ref{C6}.

\begin{proof}[Proof of \ref{C0}]
We need to show that the family of $\sigma$-algebras $\{\mathcal{F}_t\}$ is increasing, right-continuous and complete (i.e.\ can be completed). This follows from standard arguments \citep[see e.g.\ Section II.2 in][]{AndersenGill1993}.
%further details in \cite{AndersenGill1993}:
%right-continuous -- p.61 (citation Courrege \& Priouret)
%complete -- p.73 (second paragraph) and p.117 (middle)
\end{proof}

\begin{proof}[Proof of \ref{C1}]
By \ref{C1:settingC}, the data vectors $(T_i,\delta_i \cdot \varepsilon_i, L_i, \bs{W}_i)$ are i.i.d.\ across $i$ (with respect to $\pr^B$). This implies that the counting processes $N_{ir}$ defined by $N_{ir}(t) = \ind(L_i < T_i \le t, \delta_i\cdot\varepsilon_i = r)$ are i.i.d.\ across $i$ as well (with respect to $\pr^B$). 
\end{proof}

\begin{proof}[Proof of \ref{C2}]
Condition \ref{C2} is a direct consequence of \ref{C3:settingC}. In particular, since the at-risk process $\ind(L_i < \cdot \le T_i)$ is adapted to $\{\mathcal{F}_t\}$ and left-continuous, it is $\{\mathcal{F}_t\}$-predictable, which immediately implies that $Z_i(\cdot,\bs{\beta}_r) = \ind(L_i < \cdot \le T_i) \exp(\bs{\beta}_r^\top \bs{W}_i)$ and $\lambda_{ir}$ are $\{\mathcal{F}_t\}$-predictable as well. 
\end{proof}

\begin{proof}[Proof of \ref{C3}]
We have $\widebar{Z}(t,\bs{\beta}_r) = \sum_{i=1}^n Z_i(t,\bs{\beta}_r)$ with $Z_i(t,\bs{\beta}_r) = \ind(L_i < t \le T_i)$ $\exp(\bs{\beta}_r^\top \bs{W}_i)$ and $J(t,\bs{\beta}_r) = \ind(\widebar{Z}(t,\bs{\beta}_r) > 0)$. Since $\|\bs{W}_i\|_{\infty} \le C_W < \infty$ by \ref{C5:settingC} and $\| \bs{\beta}_r\|_1 \le C_\beta < \infty$ by \ref{C6:settingC}, it holds that $| \bs{\beta}_r^\top \bs{W}_i | \le \| \bs{\beta}_r \|_1 \| \bs{W}_i \|_{\infty} \le C_\beta C_W < \infty$. From this, it immediately follows that the process $\widebar{Z}(\cdot,\bs{\beta}_r)$ is  bounded. With the convention $0/0 := 0$, the process $J(\cdot,\bs{\beta}_r)/\widebar{Z}(\cdot,\bs{\beta}_r)$ is bounded as well. This in particular implies that $\widebar{Z}(\cdot,\bs{\beta}_r) $ and $J(\cdot,\bs{\beta}_r)/\widebar{Z}(\cdot,\bs{\beta}_r)$ are locally bounded. 
%Proof for the case with potentially unbounded covariates: We have $Z(t,\bs{\beta}) = \sum_{i=1}^n Z_i(t,\bs{\beta})$ with $Z_i(t,\bs{\beta}) = \ind(L_i < t \le T_i) \exp(\bs{\beta}^\top \bs{W}_i)$ and $J(t,\bs{\beta}) = \ind(Z(t,\bs{\beta}) > 0)$. Obviously, the process $Z(\cdot,\bs{\beta})$ is adapted to $\{\mathcal{F}_t\}$ and left-continuous with right-hand limits. With the convention $0/0 := 0$, the process $J(\cdot,\bs{\beta})/Z(\cdot,\bs{\beta})$ is adapted to $\{\mathcal{F}_t\}$ and left-continuous with right-hand limits as well. This implies that these two processes are locally bounded \citep[see e.g.\ p.63 in][]{AndersenGill1982}. Note: If we work with the slightly stronger definition of local boundedness in \cite{FlemingHarrington2005}, then only a version of the processes $Z(\cdot,\bs{\beta})$ and $J(\cdot,\bs{\beta})/Z(\cdot,\bs{\beta})$ is locally bounded which is $0$ at $t=0$ w.p.\ $1$. But this is not a problem.
\end{proof}

\begin{proof}[Proof of \ref{C4}]
Let $J(t) = \ind( \widebar{Z}(t) > 0 )$ with $\widebar{Z}(t) = \sum_{i=1}^n \ind(L_i < t \le T_i)$ and write $\mathcal{T} = [\tau_{\min},\tau_{\max}]$ for short. Since $0 < \min_{1 \le i \le n} \exp(\bs{\beta}_r^\top \bs{W}_i) \le \max_{1 \le i \le n} \exp(\bs{\beta}_r^\top \bs{W}_i) < \infty$ with probability $1$, it holds that
\begin{equation*}
\pr^B \big( J(t,\bs{\beta}_r) = J(t) \text{ for all } t \in \mathcal{T} \big) = 1,
\end{equation*}
which implies that
\[ \pr^B \left( \inf_{t \in \mathcal{T}} J(t,\bs{\beta}_r) = 0 \right) = \pr^B \left( \inf_{t \in \mathcal{T}} J(t) = 0 \right). \]
Moreover, since $\widebar{Z}(t) = \sum_{i=1}^n \ind(L_i < t \land T_i \ge t) \ge \sum_{i=1}^n \ind(L_i < \tau_{\min} \land T_i \ge \tau_{\max})$ for all $t \in \mathcal{T}$,
\begin{align*}
\pr^B \left( \inf_{t \in \mathcal{T}} J(t) = 0 \right) 
& \le \pr^B \left( \sum_{i=1}^n \ind\big(L_i < \tau_{\min} \land T_i \ge \tau_{\max}\big) = 0 \right) \\
& = \prod_{i=1}^n \Big\{ 1  - \pr^B\big(L_i < \tau_{\min} \land T_i \ge \tau_{\max}\big) \Big\} \\
& = (1-p)^n = o(1),
\end{align*}
where we have used \ref{C4:settingC} in the last line. 
\end{proof}

\begin{proof}[Proof of \ref{C5}]
As before, let $\widebar{Z}(t) = \sum_{i=1}^n \ind(L_i < t \le T_i)$, $J(t) = \ind( \widebar{Z}(t) > 0 )$ and $\mathcal{T} = [\tau_{\min},\tau_{\max}]$. Since $\pr^B ( J(t,\bs{\beta}_r) = J(t) \text{ for all } t \in \mathcal{T} ) = 1$ as already shown in the proof of \ref{C4} and $| \bs{\beta}_r^\top \bs{W}_i | \le \| \bs{\beta}_r \|_1 \| \bs{W}_i \|_\infty \le C_\beta C_W$ by \ref{C5:settingC} and \ref{C6:settingC}, it holds that 
\begin{align*}
 & \ex^B \left[ \bigg\| \frac{J(\cdot,\bs{\beta}_r)}{\widebar{Z}(\cdot,\bs{\beta}_r)} \bigg\|_{\infty}^{\nu} \, \bigg| \, \bs{W}_1,\ldots, \bs{W}_n \right] \\
 & \le \frac{1}{\min_{1 \le i \le n} \exp(\nu \bs{\beta}_r^\top \bs{W}_i)} \, \ex^B \left[ \sup_{t \in \mathcal{T}} \bigg| \frac{J(t)}{\widebar{Z}(t)} \bigg|^{\nu} \, \bigg| \, \bs{W}_1,\ldots, \bs{W}_n \right] \\
 & \le \max_{1 \le i \le n} \exp(\nu |\bs{\beta}_r^\top \bs{W}_i|) \, \ex^B \left[ \sup_{t \in  \mathcal{T}} \bigg| \frac{J(t)}{\widebar{Z}(t)} \bigg|^{\nu} \, \bigg| \, \bs{W}_1,\ldots, \bs{W}_n \right] \\
 & \le \exp(\nu C_\beta C_W) \, \ex^B \left[ \sup_{t \in  \mathcal{T}} \bigg| \frac{J(t)}{\widebar{Z}(t)} \bigg|^{\nu} \, \bigg| \, \bs{W}_1,\ldots, \bs{W}_n \right] 
\end{align*}
almost surely and therefore
\[ \ex^B \left[ \bigg\| \frac{J(\cdot,\bs{\beta}_r)}{\widebar{Z}(\cdot,\bs{\beta}_r)} \bigg\|_{\infty}^{\nu} \right] \le \exp(\nu C_\beta C_W) \, \ex^B \left[ \sup_{t \in \mathcal{T}} \bigg| \frac{J(t)}{\widebar{Z}(t)} \bigg|^{\nu} \right]. \]
Next define $\widebar{Z}_{\mathcal{T}} := \sum_{i=1}^n \ind(L_i < \tau_{\min} \land T_i \ge \tau_{\max})$ and $J_{\mathcal{T}} := \ind(\widebar{Z}_{\mathcal{T}} > 0)$. Since $\widebar{Z}(t) \ge \widebar{Z}_{\mathcal{T}}$ and $J(t) \ge J_{\mathcal{T}}$ for any $t \in \mathcal{T}$ and $J(t)/\widebar{Z}(t)$ is upper bounded by $1$ (with the convention $0/0 := 0$),
\begin{align*}
\ex^B \left[ \sup_{t \in \mathcal{T}} \bigg| \frac{J(t)}{\widebar{Z}(t)} \bigg|^{\nu} \right]
 & = \ex^B \left[ \sup_{t \in \mathcal{T}} \bigg| \frac{J(t)}{\widebar{Z}(t)} \bigg|^{\nu} \ind\Big( \inf_{t \in \mathcal{T}} J(t) > 0\Big) \right] \\
 & \quad + \ex^B \left[ \sup_{t \in \mathcal{T}} \bigg| \frac{J(t)}{\widebar{Z}(t)} \bigg|^{\nu} \ind\Big( \inf_{t \in \mathcal{T}} J(t) = 0\Big) \right] \\
 & \le \ex^B \left[ \sup_{t \in \mathcal{T}} \bigg| \frac{J(t)}{\widebar{Z}(t)} \bigg|^{\nu} \ind\Big( \inf_{t \in \mathcal{T}} J(t) > 0\Big) \right] \\
 & \quad + \pr^B \left( \inf_{t \in \mathcal{T}} J(t) = 0 \right) \\
 & \le \ex^B \left[ \sup_{t \in \mathcal{T}} \bigg| \frac{J(t)}{\widebar{Z}(t)} \bigg|^{\nu} \ind\Big( \inf_{t \in \mathcal{T}} J(t) > 0 \land \inf_{t \in \mathcal{T}} J(t) = J_{\mathcal{T}} \Big) \right] \\
 & \quad + \ex^B \left[ \sup_{t \in \mathcal{T}} \bigg| \frac{J(t)}{\widebar{Z}(t)} \bigg|^{\nu} \ind\Big( \inf_{t \in \mathcal{T}} J(t) > 0 \land \inf_{t \in \mathcal{T}} J(t) > J_{\mathcal{T}} \Big) \right] \\
 & \quad + \pr^B \left( \inf_{t \in \mathcal{T}} J(t) = 0 \right) \\
 & \le \ex^B \left[ \bigg| \frac{J_{\mathcal{T}}}{\widebar{Z}_{\mathcal{T}}} \bigg|^{\nu} \right] + \pr^B \big( J_{\mathcal{T}} = 0 \big)  + \pr^B \left( \inf_{t \in \mathcal{T}} J(t) = 0 \right). 
\end{align*}
From the proof of \ref{C4}, we already know that $\pr^B(J_{\mathcal{T}} = 0) \le (1-p)^n$ as well as $\pr^B(\inf_{t \in \mathcal{T}} J(t) = 0) \le (1-p)^n$. Moreover, since $\widebar{Z}_{\mathcal{T}} = \sum_{i=1}^n \ind(L_i < \tau_{\min} \land T_i \ge \tau_{\max})$ follows a binomial distribution $\text{Bin}(n,q)$ with $q = \pr^B(L_i < \tau_{\min} \land T_i \ge \tau_{\max}) = p$ under \ref{C4:settingC}, we can apply Lemma \ref{lemma:binomial} to get that
\[ \ex^B \left[ \bigg| \frac{J_{\mathcal{T}}}{\widebar{Z}_{\mathcal{T}}} \bigg|^{\nu} \right] \leq \frac{(\nu+1)!}{\{np\}^{\nu}}. \]
Putting everything together, we arrive at
\[ \ex^B \left[ \bigg\| \frac{J(\cdot,\bs{\beta}_r)}{\widebar{Z}(\cdot,\bs{\beta}_r)} \bigg\|_{\infty}^{\nu} \right] \le \exp(\nu C_\beta C_W) \left\{ \frac{(\nu+1)!}{\{np\}^{\nu}} + 2(1-p)^n \right\} \]
for any $\nu \in \naturals$. As $(1-p)^n$ converges to zero faster than $n^{-\nu}$, \ref{C5} is an immediate consequence of this bound.  
\end{proof}

\begin{proof}[Proof of \ref{C6}]
We show that
\begin{equation*}
\max_{1 \le k \le n} \big|\Delta_k^\beta\big| = O_p\Big( \frac{\log(n)}{\sqrt{n}} \Big) 
\end{equation*}
(where the notation $O_p(\cdot)$ and $o_p(\cdot)$ is to be understood with respect to $\pr^B$). Our proof exploits the following facts about the random variables $\widebar{Z}(t,\bs{b}) = \sum_{i=1}^n \ind(L_i < t \le T_i) \exp(\bs{b}^\top \bs{W}_i)$ and $J(t,\bs{b}) = \ind(\widebar{Z}(t,\bs{b}) > 0)$ with $\bs{b} \in \reals^d$: 
\begin{enumerate}[label=(\alph*),leftmargin=0.75cm]

\item \label{proof:settingB-C6-a} It holds that
\[ \pr^B \big( J(T_i, \hat{\bs{\beta}}_r) = J(T_i,\bs{\beta}_r) \text{ for all } i =1,\ldots,n \big) = 1. \]
\begin{proof}
Since $J(t,\bs{b}) = \ind( \sum_{i=1}^n \ind(L_i < t \le T_i) \exp(\bs{b}^\top \bs{W}_i) > 0)$, \ref{proof:settingB-C6-a} is a direct consequence of the fact that $0 < \min_{1 \le i \le n} \exp(\bs{\beta}^\top \bs{W}_i) \le \max_{1 \le i \le n} \exp(\bs{\beta}^\top \bs{W}_i) < \infty$ and $0 < \min_{1 \le i \le n} \exp(\hat{\bs{\beta}}_r^\top \bs{W}_i)  \le \max_{1 \le i \le n} \exp(\hat{\bs{\beta}}_r^\top \bs{W}_i) < \infty$ with probability $1$. 
\end{proof}

\item \label{proof:settingB-C6-b} It holds that 
\[ \sup_{t \in \mathcal{T}} \big| \widebar{Z}(t,\hat{\bs{\beta}}_r) - \widebar{Z}(t,\bs{\beta}_r) \big| = O_p\big(\log(n) \sqrt{n}\big). \]
\begin{proof} 
With the help of a Taylor expansion, we get that 
\begin{align}
 & \sup_{t \in \mathcal{T}} \big| \widebar{Z}(t,\hat{\bs{\beta}}_r) - \widebar{Z}(t,\bs{\beta}_r) \big| \nonumber \\*
 & \le \sup_{t \in  \mathcal{T}} \sum_{i=1}^n \ind(L_i < t \le T_i) \big| \exp(\hat{\bs{\beta}}_r^\top \bs{W}_i) - \exp(\bs{\beta}_r^\top \bs{W}_i) \big| \nonumber \\
 & \le \sum_{i=1}^n \big| \exp(\hat{\bs{\beta}}_r^\top \bs{W}_i) - \exp(\bs{\beta}_r^\top \bs{W}_i) \big| \nonumber \\
 & = \sum_{i=1}^n \big| \exp(\tilde{\bs{\beta}}_r^\top \bs{W}_i) \bs{W}_i^\top (\hat{\bs{\beta}}_r - \bs{\beta}_r) \big| \nonumber \\
 & \le \sum_{i=1}^n \exp(\tilde{\bs{\beta}}_r^\top \bs{W}_i) \| \bs{W}_i \|_\infty \| \hat{\bs{\beta}}_r - \bs{\beta}_r \|_1 \nonumber \\
 & = \sum_{i=1}^n \exp(\tilde{\bs{\beta}}_r^\top \bs{W}_i) \| \bs{W}_i \|_\infty \| \hat{\bs{\beta}}_r - \bs{\beta}_r \|_1 \, \ind\left( \| \hat{\bs{\beta}}_r - \bs{\beta}_r \|_1 \le \frac{\log(n)}{\sqrt{n}} \right) \nonumber \\
 & \quad + \sum_{i=1}^n \exp(\tilde{\bs{\beta}}_r^\top \bs{W}_i) \| \bs{W}_i \|_\infty \| \hat{\bs{\beta}}_r - \bs{\beta}_r \|_1 \, \ind\left( \| \hat{\bs{\beta}}_r - \bs{\beta}_r \|_1 > \frac{\log(n)}{\sqrt{n}}\right), \label{proof:settingB-C6-b-bound1}
\end{align}
where $\tilde{\bs{\beta}}_r$ is an intermediate value between $\hat{\bs{\beta}}_r$ and $\bs{\beta}_r$. In the case that $\| \hat{\bs{\beta}}_r - \bs{\beta}_r \|_1 \le \log(n) n^{-1/2}$, $\exp(\tilde{\bs{\beta}}_r^\top \bs{W}_i) \le \exp(\|\tilde{\bs{\beta}}_r\|_1 \|\bs{W}_i\|_\infty) \le \exp(2 C_\beta C_W)$ for sufficiently large $n$ because $\tilde{\bs{\beta}}_r$ lies between $\hat{\bs{\beta}}_r$ and $\bs{\beta}_r$ and thus $\| \tilde{\bs{\beta}}_r \|_1 \le \max \{ \| \bs{\beta}_r \|_1,$ $\| \hat{\bs{\beta}}_r \|_1 \} \le \| \bs{\beta}_r \|_1 + \log(n) n^{-1/2} \le 2 C_\beta$ for sufficiently large $n$. Consequently, 
\begin{align}
\sum_{i=1}^n \exp & (\tilde{\bs{\beta}}_r^\top \bs{W}_i) \| \bs{W}_i \|_\infty \| \hat{\bs{\beta}}_r - \bs{\beta}_r \|_1 \, \ind\left( \| \hat{\bs{\beta}}_r - \bs{\beta}_r \|_1 \le \frac{\log(n)}{\sqrt{n}} \right) \nonumber \\
 & \le \frac{\exp(2 C_\beta C_W) \log(n)}{\sqrt{n}} \sum_{i=1}^n \| \bs{W}_i \|_\infty = O_p(\log(n) \sqrt{n}). \label{proof:settingB-C6-b-bound2}
\end{align}
Moreover, for any given $\varepsilon > 0$,
\begin{align*}
 & \pr \left( \sum_{i=1}^n \exp(\tilde{\bs{\beta}}_r^\top \bs{W}_i) \| \bs{W}_i \|_\infty \| \hat{\bs{\beta}}_r - \bs{\beta}_r \|_1 \, \ind\left( \| \hat{\bs{\beta}}_r - \bs{\beta}_r \|_1 > \frac{\log(n)}{\sqrt{n}} \right) > \varepsilon \right) \\
 & \hspace{6.2cm} \le \pr \left( \| \hat{\bs{\beta}}_r - \bs{\beta}_r \|_1 > \frac{\log(n)}{\sqrt{n}} \right) = o(1),
\end{align*}
where the last equality follows from \ref{C6:settingC}, and thus
\begin{equation}\label{proof:settingB-C6-b-bound3}
\sum_{i=1}^n \exp(\tilde{\bs{\beta}}_r^\top \bs{W}_i) \| \bs{W}_i \|_\infty \| \hat{\bs{\beta}}_r - \bs{\beta}_r \|_1 \, \ind\left( \| \hat{\bs{\beta}}_r - \bs{\beta}_r \|_1 > \frac{\log(n)}{\sqrt{n}} \right) = o_p(1). 
\end{equation}
Plugging \eqref{proof:settingB-C6-b-bound2} and \eqref{proof:settingB-C6-b-bound3} into \eqref{proof:settingB-C6-b-bound1} yields \ref{proof:settingB-C6-b}.
\end{proof}

\item \label{proof:settingB-C6-c} It holds that $\inf_{t \in \mathcal{T}} \widebar{Z}(t,\bs{\beta}_r) \ge c n$ with probability tending to $1$, where $c$ is any positive constant with $c < p \exp(- C_\beta C_W)$.
\begin{proof}
Under our assumptions, 
\begin{align*}
\Big| \frac{\inf_{t \in \mathcal{T}} \widebar{Z}(t,\bs{\beta}_r)}{n} \Big|
 & = \inf_{t \in \mathcal{T}} \left\{ \frac{1}{n} \sum_{i=1}^n \exp(\bs{\beta}_r^\top \bs{W}_i) \ind(L_i < t \le T_i) \right\} \\
 & \ge \frac{1}{n} \sum_{i=1}^n \exp(\bs{\beta}_r^\top \bs{W}_i) \ind(L_i < \tau_{\min} \land T_i \ge \tau_{\max}) \\
 & = \frac{1}{n} \sum_{i=1}^n \ex \Big[ \exp(\bs{\beta}_r^\top \bs{W}_i) \ind(L_i < \tau_{\min} \land T_i \ge \tau_{\max}) \Big] + O_p\Big( \frac{1}{\sqrt{n}} \Big) \\
 & \ge \exp(-C_\beta C_W) \frac{1}{n} \sum_{i=1}^n \ex \big[ \ind(L_i < \tau_{\min} \land T_i \ge \tau_{\max}) \big] + O_p\Big( \frac{1}{\sqrt{n}} \Big) \\
 & = p \exp(-C_\beta C_W) + O_p\Big( \frac{1}{\sqrt{n}} \Big),
\end{align*}
where the second equality follows from a simple application of the weak law of large numbers. This immediately implies \ref{proof:settingB-C6-c}.
\end{proof}

\item \label{proof:settingB-C6-d} It holds that $\inf_{t \in \mathcal{T}} \widebar{Z}(t,\hat{\bs{\beta}}_r) \ge c n$ with probability tending to $1$, where $c$ is any positive constant with $c < p \exp(- C_\beta C_W)$.
\begin{proof}
As by (b), $\inf_{t \in \mathcal{T}} \widebar{Z}(t,\hat{\bs{\beta}}_r) \ge \inf_{t \in \mathcal{T}} \widebar{Z}(t,\bs{\beta}_r) - \sup_{t \in \mathcal{T}} |\widebar{Z}(t,\hat{\bs{\beta}}_r) - \widebar{Z}(t,\bs{\beta}_r)| = \inf_{t \in \mathcal{T}} \widebar{Z}(t,\bs{\beta}_r) + O_p(\log(n) \sqrt{n})$, we can use (c) to complete the proof.
\end{proof}

\end{enumerate}
We now bound $\max_{1 \le k \le n} |\Delta_k^\beta|$ with the help of (a)--(d). Using (a), we obtain that
\begin{align}
\max_{1 \le k \le n} \big|\Delta_k^\beta\big| \nonumber
 & = n \max_{1 \le k \le n} \bigg| \int \ind(s \in (t_{k-1},t_k]) \frac{J(s,\hat{\bs{\beta}}_r)}{\widebar{Z}(s,\hat{\bs{\beta}}_r)} d\widebar{N}_r(s) \\
 & \phantom{= n \max_{1 \le k \le n} \bigg| \quad} - \int \ind(s \in (t_{k-1},t_k]) \frac{J(s,\bs{\beta}_r)}{\widebar{Z}(s,\bs{\beta}_r)} d\widebar{N}_r(s) \bigg| \nonumber \\
 & = n \max_{1 \le k \le n} \bigg| \sum_{\{ i: \, t_{k-1} < T_i \le t_k, \, \delta_i \cdot \varepsilon_i = r\}} \bigg\{ \frac{J(T_i,\hat{\bs{\beta}}_r)}{\widebar{Z}(T_i,\hat{\bs{\beta}}_r)} - \frac{J(T_i,\bs{\beta}_r)}{\widebar{Z}(T_i,\bs{\beta}_r)} \bigg\} \bigg| \nonumber \\
 & = n \max_{1 \le k \le n} \bigg| \sum_{\{ i: \, t_{k-1} < T_i \le t_k, \, \delta_i \cdot \varepsilon_i = r\}} J(T_i,\bs{\beta}_r) \bigg\{ \frac{1}{\widebar{Z}(T_i,\hat{\bs{\beta}}_r)} - \frac{1}{\widebar{Z}(T_i,\bs{\beta}_r)} \bigg\} \bigg| \nonumber \\
 & \le n \max_{1 \le k \le n} \bigg\{ \sum_{\{ i: \, t_{k-1} < T_i \le t_k, \, \delta_i \cdot \varepsilon_i = r\}} \bigg| \frac{1}{\widebar{Z}(T_i,\hat{\bs{\beta}}_r)} - \frac{1}{\widebar{Z}(T_i,\bs{\beta}_r)} \bigg| \bigg\} \nonumber \\
 & = n \max_{1 \le k \le n} \bigg\{ \sum_{\{ i: \, t_{k-1} < T_i \le t_k, \, \delta_i \cdot \varepsilon_i = r\}} \bigg| \frac{\widebar{Z}(T_i,\bs{\beta}_r) - \widebar{Z}(T_i,\hat{\bs{\beta}}_r)}{\widebar{Z}(T_i,\hat{\bs{\beta}}_r) \widebar{Z}(T_i,\bs{\beta}_r)} \bigg| \bigg\} \nonumber \\
 & \le \frac{n}{\{ \inf_{t \in \mathcal{T}} \widebar{Z}(t,\hat{\bs{\beta}}_r) \} \{ \inf_{t \in \mathcal{T}} \widebar{Z}(t,\bs{\beta}_r) \}} \nonumber \\
 & \qquad \times \max_{1 \le k \le n} \bigg\{ \sum_{\{ i: \, t_{k-1} < T_i \le t_k, \delta_i \cdot \varepsilon_i = r\}} \big| \widebar{Z}(T_i,\bs{\beta}_r) - \widebar{Z}(T_i,\hat{\bs{\beta}}_r) \big| \bigg\} \label{eq:proof_C5_settingB_bound1}
\end{align}
with probability $1$. From (c) and (d), it directly follows that
\begin{equation}\label{eq:proof_C5_settingB_bound2}
 \frac{n}{\{ \inf_{t \in \mathcal{T}} \widebar{Z}(t,\hat{\bs{\beta}}_r) \} \{ \inf_{t \in \mathcal{T}} \widebar{Z}(t,\bs{\beta}_r) \}} = O_p\Big(\frac{1}{n}\Big).
\end{equation}  
Moreover, condition \ref{C2:settingC} and standard arguments for uniform convergence yield that $\max_{1 \le k \le n} \{ \sum_{i=1}^n \ind(t_{k-1} < T_i^* \le t_k) \} = O_p(1)$. Combining this with (b), we get that 
\begin{align}
\max_{1 \le k \le n} & \bigg\{ \sum_{\{ i: \, t_{k-1} < T_i \le t_k, \delta_i \cdot \varepsilon_i = r\}} \big| \widebar{Z}(T_i,\bs{\beta}_r) - \widebar{Z}(T_i,\hat{\bs{\beta}}_r) \big| \bigg\} \nonumber \\
 & \le O_p\big(\log(n) \sqrt{n}\big) \max_{1 \le k \le n} \Big\{ \sum_{i=1}^n \ind(t_{k-1} < T_i^* \le t_k) \Big\} = O_p(\log(n) \sqrt{n}). \label{eq:proof_C5_settingB_bound3}
\end{align}
Plugging \eqref{eq:proof_C5_settingB_bound3} and \eqref{eq:proof_C5_settingB_bound2} into \eqref{eq:proof_C5_settingB_bound1}, we finally arrive at the desired result 
\begin{equation*}
\max_{1 \le k \le n} \big|\Delta_k^\beta\big| = O_p\Big( \frac{\log(n)}{\sqrt{n}} \Big). \qedhere
\end{equation*}
\end{proof}

\subsection*{Proof of Proposition \ref{prop:settingD}}

As each direct $\ell \to m$ transition with $\ell, m \in \{0,\ldots,R\}$, $\ell \ne m$, is analyzed separately, we let the pair $(\ell,m)$ be fixed throughout the proof and consider the corresponding multivariate counting process $N_{1:n,\ell \to m} = (N_{1,\ell \to m},\ldots,N_{n,\ell \to m})$ on the filtered probability space $(\Omega,\mathcal{F},\{\mathcal{F}_t\},\pr)$. Here, $\mathcal{F}_t$ is the $\sigma$-algebra generated by the information available up to time $t$, which is given by $\mathcal{F}_t = \sigma(X_1(s),\ldots,X_n(s): 0 \le s \le t)$.

\begin{proof}[Proof of \ref{C0}]
Standard arguments show that $\{\mathcal{F}_t\}$ is increasing, right-continuous and complete (i.e.\ can be completed); see Section II.2 in \cite{AndersenGill1993}.
%further details in \cite{AndersenGill1993}:
%right-continuous -- p.61 (citation Courrege \& Priouret)
%complete -- p.73 (second paragraph) and p.117 (middle)
\end{proof}

\begin{proof}[Proof of \ref{C1}]
By \ref{C1:settingD}, the Markov processes $X_i$ are i.i.d.\ across $i$, which implies that the counting processes $N_{i,\ell \to m}$ defined by $N_{i,\ell \to m}(t) = \# \{s \le t:  X_i(s-) = \ell \text{ and } X_i(s) = m \}$ are i.i.d.\ across $i$ as well. 
\end{proof}

\begin{proof}[Proof of \ref{C2}]
By Theorem II.6.8 in \cite{AndersenGill1993} and absolute continuity of the intensity measure as assumed in \ref{C1:settingD}, the process $N_{i,\ell \to m}$ has $\{\mathcal{F}_t\}$-compensator $\Lambda_{i,\ell \to m}(t) = \int_0^t \lambda_{i,\ell \to m}(s) ds$, where $\lambda_{i,\ell \to m}(s) = Z_{i,\ell}(s) \haz_{\ell \to m}(s)$ with $Z_{i,\ell}(s) = \ind(X_i(s-) = \ell)$ and a deterministic, non-negative function $\haz_{\ell \to m}$. Since $Z_{i,\ell}$ is adapted to $\{\mathcal{F}_t\}$ and left-continuous, it is $\{\mathcal{F}_t\}$-predictable, which immediately implies that $\lambda_{i,\ell \to m}$ is $\{\mathcal{F}_t\}$-predictable as well.
\end{proof}

\begin{proof}[Proof of \ref{C3}]
We have $\widebar{Z}_{\ell}(t) = \sum_{i=1}^n Z_{i,\ell}(t)$ with $Z_{i,\ell}(t) = \ind(X_i(t-) = \ell)$ and $J_\ell(t) = \ind(\widebar{Z}_\ell(t) > 0)$. Hence, with the convention $0/0 := 0$, the processes $\widebar{Z}_\ell$ and $J_\ell/\widebar{Z}_\ell$ are obviously bounded, which in particular implies that they are locally bounded. 
\end{proof}

\begin{proof}[Proof of \ref{C4}]
With $\mathcal{T} = [\tau_{\min},\tau_{\max}]$, it holds that
\begin{align*}
\pr \left( \inf_{t \in \mathcal{T}} J_\ell(t) = 0 \right) 
& = \pr \left( \inf_{t \in \mathcal{T}} \sum_{i=1}^n \ind(X_i(t-) = \ell) = 0 \right) \\
& \le \pr \left( \sum_{i=1}^n \inf_{t \in \mathcal{T}} \ind(X_i(t-) = \ell) = 0 \right) \\
& = \prod_{i=1}^n \Big\{ 1 - \pr \big(X_i(t-) = \ell \text{ for all } t \in \mathcal{T}\big) \Big\} \\
& = (1-p_\ell)^n = o(1), 
\end{align*}
where we have used \ref{C2:settingD} in the last line. 
\end{proof}

\begin{proof}[Proof of \ref{C5}]
As before, let $\widebar{Z}_{\ell}(t) = \sum_{i=1}^n Z_{i,\ell}(t)$ with $Z_{i,\ell}(t) = \ind(X_i(t-) = \ell)$ and $J_\ell(t) = \ind(\widebar{Z}_\ell(t) > 0)$. Moreover, define $\widebar{Z}_{\ell,\mathcal{T}} = \sum_{i=1}^n \inf_{t \in \mathcal{T}} Z_{i,\ell}(t)$ and $J_{\ell,\mathcal{T}} = \ind(\widebar{Z}_{\ell,\mathcal{T}} > 0)$. With the convention $0/0 := 0$, it holds that 
\begin{align*}
\ex \left[ \bigg\| \frac{J_\ell}{\widebar{Z}_\ell} \bigg\|_{\infty}^{\nu} \right]
 & = \ex \left[ \sup_{t \in \mathcal{T}} \bigg| \frac{J_\ell(t)}{\widebar{Z}_\ell(t)} \bigg|^{\nu} \ind\Big( \inf_{t \in \mathcal{T}} J_\ell(t) > 0\Big) \right] 
   + \ex \left[ \sup_{t \in \mathcal{T}} \bigg| \frac{J_\ell(t)}{\widebar{Z}_\ell(t)} \bigg|^{\nu} \ind\Big( \inf_{t \in \mathcal{T}} J_\ell(t) = 0\Big) \right] \\
 & \le \ex \left[ \sup_{t \in \mathcal{T}} \bigg| \frac{J_\ell(t)}{\widebar{Z}_\ell(t)} \bigg|^{\nu} \ind\Big( \inf_{t \in \mathcal{T}} J_\ell(t) > 0\Big) \right] 
   + \pr \left( \inf_{t \in \mathcal{T}} J_\ell(t) = 0 \right) \\
 & \le \ex \left[ \sup_{t \in \mathcal{T}} \bigg| \frac{J_\ell(t)}{\widebar{Z}_\ell(t)} \bigg|^{\nu} \ind\Big( \inf_{t \in \mathcal{T}} J_\ell(t) > 0 \land \inf_{t \in \mathcal{T}} J_\ell(t) = J_{\ell,\mathcal{T}} \Big) \right] \\
 & \quad + \ex \left[ \sup_{t \in \mathcal{T}} \bigg| \frac{J_\ell(t)}{\widebar{Z}_\ell(t)} \bigg|^{\nu} \ind\Big( \inf_{t \in \mathcal{T}} J_\ell(t) > 0 \land \inf_{t \in \mathcal{T}} J_\ell(t) > J_{\ell,\mathcal{T}} \Big) \right] \\
 & \quad + \pr \left( \inf_{t \in \mathcal{T}} J_\ell(t) = 0 \right) \\
 & \le \ex \left[ \bigg| \frac{J_{\ell,\mathcal{T}}}{\widebar{Z}_{\ell,\mathcal{T}}} \bigg|^{\nu} \right] + \pr \big( J_{\ell,\mathcal{T}} = 0 \big)  + \pr \left( \inf_{t \in \mathcal{T}} J_\ell(t) = 0 \right). 
\end{align*}
From the proof of \ref{C4}, we already know that $\pr(J_{\ell,\mathcal{T}} = 0) \le (1-p_\ell)^n$ and $\pr(\inf_{t \in \mathcal{T}} J_\ell(t) = 0) \le (1-p_\ell)^n$. Moreover, since $\widebar{Z}_{\ell,\mathcal{T}} = \sum_{i=1}^n \inf_{t \in \mathcal{T}} \ind(X_i(t-) = \ell)$ follows a binomial distribution $\text{Bin}(n,q)$ with $q = \pr(X_i(t-) = \ell \text{ for all } t \in \mathcal{T}) = p_\ell$ under \ref{C2:settingD}, we can apply Lemma \ref{lemma:binomial} to get that
\[ \ex \left[ \bigg| \frac{J_{\ell,\mathcal{T}}}{\widebar{Z}_{\ell,\mathcal{T}}} \bigg|^{\nu} \right] \leq \frac{(\nu+1)!}{\{np_\ell\}^{\nu}}. \]
Putting everything together, we arrive at
\[ \ex \left[ \bigg\| \frac{J_\ell}{\widebar{Z}_\ell} \bigg\|_{\infty}^{\nu} \right] \le \frac{(\nu+1)!}{\{np_\ell\}^{\nu}} + 2(1-p_\ell)^n \]
for any $\nu \in \naturals$. As $(1-p_\ell)^n$ converges to zero faster than $n^{-\nu}$, \ref{C5} is an immediate consequence of this bound.  
\end{proof}

\begin{proof}[Proof of \ref{C6}]
Since $\Delta_k^\beta = 0$ for all $k$ in the present case (without covariates), \ref{C6} is trivially fulfilled.  
\end{proof}

% ----------
% Appendix B
% ----------

\def\thesection{\Alph{section}}
\setcounter{section}{1}
\def\theequation{B.\arabic{equation}}
\setcounter{equation}{0}
\section{Proof of auxiliary results}

For completeness, we now provide proofs of Propositions \ref{prop:fused-lasso-elementwise-error} and \ref{theo:LinSharpnackRinaldoTibshirani2017}.

\subsection*{Proof of Proposition \ref{prop:fused-lasso-elementwise-error}}

The proof is essentially a reformulation of the arguments for Theorem 3.1 in \cite{Zhang2019}. As a preliminary step, we analyze the minimization problem
\begin{equation}\label{eq:aux-min-problem}
\underset{a_1,\ldots,a_r \in \reals}{\textnormal{argmin}} \mathcal{L}(z_1,\ldots,z_r, a_1,\ldots,a_r \, | \, \underline{a},\overline{a}) 
\end{equation}
for given values $z_1,\ldots,z_r,\underline{a},\overline{a} \in \reals$, where  
\begin{align}
 & \mathcal{L}(z_1,\ldots,z_r,a_1,\ldots,a_r \, | \, \underline{a},\overline{a}) \nonumber \\ & \quad = \sum_{j=1}^r (z_j - a_j)^2 + \gamma \bigg\{ \sum_{j=1}^{r-1} |a_{j+1} - a_j| + |a_1 - \underline{a}| + |a_r - \overline{a}| \bigg\} \label{eq:aux-min-problem-crit}
\end{align}
with some penalization parameter $\gamma > 0$. We first derive certain statements about the structure of the solution vector
\[ (\hat{a}_1,\ldots,\hat{a}_r) \in \underset{a_1,\ldots,a_r \in \reals}{\textnormal{argmin}} \mathcal{L}(z_1,\ldots,z_r, a_1,\ldots,a_r \, | \, \underline{a},\overline{a}) \]
in Lemmas \ref{lemma:lemma3.6-Zhang2019} and \ref{lemma:theo3.4-Zhang2019} and then use these statements to obtain bounds on the components $\hat{a}_k$ for $k \in \{1,\ldots,r\}$ in Lemma \ref{prop:theo3.4-Zhang2019}. For ease of notation, we write $\hat{a}_0 := \underline{a}$ and $\hat{a}_{r+1} := \overline{a}$ in what follows.

\begin{lemma}\label{lemma:lemma3.6-Zhang2019}
Assume that 
\begin{equation}\label{eq:condition-lemma3.6-Zhang2019}
\bigg| \frac{1}{\sqrt{r_2 - r_1 + 1}} \sum_{j=r_1}^{r_2} z_j \bigg| \leq C_z 
\end{equation}
for some $r_1, r_2$ with $1 \le r_1 \le r_2 \le r$. Then the solution vector $(\hat{a}_1,\ldots,\hat{a}_r)$ has the following properties:
\begin{enumerate}[label=(\roman*)]
\item If $\hat{a}_{r_1-1} < \hat{a}_{r_1} = \ldots = \hat{a}_{r_2} > \hat{a}_{r_{2}+1}$, then
\begin{equation*}
\hat{a}_{r_1} = \ldots = \hat{a}_{r_2} \leq \frac{C_z^2}{4 \gamma}.
\end{equation*}
\item If $\hat{a}_{r_1-1} > \hat{a}_{r_1} = \ldots = \hat{a}_{r_2} < \hat{a}_{r_2+1}$, then
\begin{equation*}
\hat{a}_{r_1} = \ldots = \hat{a}_{r_2} \geq - \frac{C_z^2}{4 \gamma}.
\end{equation*}
\end{enumerate}
\end{lemma}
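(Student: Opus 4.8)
The plan is to exploit the first-order optimality conditions of the convex program \eqref{eq:aux-min-problem} by perturbing the solution only on the plateau $\{r_1,\ldots,r_2\}$. I will carry out part (i) in detail; part (ii) follows by the symmetric argument (or by replacing each $z_j$ with $-z_j$). Throughout, write $m = r_2 - r_1 + 1$, $S = \sum_{j=r_1}^{r_2} z_j$ and let $v$ denote the common plateau value $\hat{a}_{r_1} = \ldots = \hat{a}_{r_2}$. Recall the convention $\hat{a}_0 = \underline{a}$, $\hat{a}_{r+1} = \overline{a}$, so the penalty in \eqref{eq:aux-min-problem-crit} can be written uniformly as $\gamma \sum_{j=0}^{r} |a_{j+1} - a_j|$ with fixed endpoints; this lets the boundary cases $r_1 = 1$ and $r_2 = r$ be handled without separate notation.

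First I would introduce the one-parameter perturbation that rigidly shifts the whole plateau, i.e.\ the map $\epsilon \mapsto \phi(\epsilon)$ obtained by replacing $\hat{a}_{r_1},\ldots,\hat{a}_{r_2}$ with $\hat{a}_{r_1}+\epsilon,\ldots,\hat{a}_{r_2}+\epsilon$ in \eqref{eq:aux-min-problem-crit} while leaving all coordinates outside $\{r_1,\ldots,r_2\}$ at their optimal values. The key observation is that because the plateau is a \emph{strict} local maximum, $\hat{a}_{r_1-1} < v$ and $\hat{a}_{r_2+1} < v$, the signs of the two differences bordering the plateau are unchanged for all small $|\epsilon|$, while the internal differences stay identically zero. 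Hence near $\epsilon = 0$ the penalty is affine in $\epsilon$ and the fidelity term is quadratic, so $\phi$ is differentiable at $0$. Since $\hat{a}$ minimizes $\mathcal{L}$, the restriction $\phi$ attains its minimum at $\epsilon = 0$, and therefore $\phi'(0) = 0$.

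Computing this derivative is the crux. The fidelity part contributes $-2\sum_{j=r_1}^{r_2}(z_j - v)$, while the two active boundary edges each contribute $+\gamma$ (both bordering differences increase with $\epsilon$ in the strict-maximum case), giving a penalty derivative of $2\gamma$; the internal edges contribute nothing. The stationarity condition then reads $\sum_{j=r_1}^{r_2}(z_j - v) = \gamma$, i.e.\ $v = (S - \gamma)/m$. Finally I would insert the hypothesis $|S| \le C_z \sqrt{m}$ from \eqref{eq:condition-lemma3.6-Zhang2019} to obtain $v \le C_z/\sqrt{m} - \gamma/m$, and bound the right-hand side by optimizing over the continuous relaxation: writing $x = 1/\sqrt{m}$, one has $C_z x - \gamma x^2 \le C_z^2/(4\gamma)$ for all real $x$ by completing the square. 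This yields $v \le C_z^2/(4\gamma)$, the claim of (i); part (ii) gives $v = (S+\gamma)/m \ge -C_z/\sqrt{m} + \gamma/m \ge -C_z^2/(4\gamma)$ by the identical optimization.

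The main obstacle is one of justification rather than computation: one must argue that the plateau may be shifted freely in \emph{both} directions, so that $\phi$ is genuinely differentiable (not merely one-sidedly) at $\epsilon = 0$ and the interior first-order condition $\phi'(0)=0$ applies. This is exactly where the strictness of $\hat{a}_{r_1-1} < v$ and $\hat{a}_{r_2+1} < v$ enters; were either inequality an equality, the corresponding boundary term would have a kink at $\epsilon = 0$ and one could only extract an inequality from a one-sided derivative. Keeping precise track of which edges are active, and verifying that the internal edges drop out of the derivative, is the bookkeeping that needs the most care.
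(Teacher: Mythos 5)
Your proposal is correct and follows essentially the same route as the paper: both rigidly perturb the plateau, use optimality of $\hat{\bs{a}}$ together with the strict local-max/min structure to extract the linear change $\pm 2\gamma$ in the penalty, and then combine the partial-sum hypothesis with the bound $C_z x - \gamma x^2 \le C_z^2/(4\gamma)$. The only difference is cosmetic: the paper perturbs in one direction and obtains the inequality $v \le (S-\gamma)/m$ directly, whereas you invoke two-sided differentiability to get the exact stationarity equation $v = (S-\gamma)/m$; both yield the same conclusion.
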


Lemma \ref{lemma:lemma3.6-Zhang2019} essentially says the following: if condition \eqref{eq:condition-lemma3.6-Zhang2019} is satisfied for all $r_1$, $r_2$ with $1 \le r_1 \le r_2 \le r$, then the sequence $\hat{a}_0, \hat{a}_1,\ldots,\hat{a}_r,\hat{a}_{r+1}$ cannot have a local minimum smaller than $-C_z^2/(4\gamma)$ or a local maximum larger than $C_z^2/(4\gamma)$.

\begin{proof}
We only prove (i) since (ii) can be verified by analogous arguments. For any $\varepsilon > 0$ smaller than $\min\{ \hat{a}_{r_1} - \hat{a}_{r_1-1}, \hat{a}_{r_2} - \hat{a}_{r_{2}+1} \}$, define 
\begin{equation}
\tilde{a}_j =
\begin{cases}
\hat{a}_j - \varepsilon & \text{for } j \in \{r_1, \ldots, r_2\} \\ 
\hat{a}_j & \text{for } j \notin \{r_1, \ldots, r_2\}.
\end{cases}
\end{equation}
Direct calculations yield that
\begin{align*}
\sum_{j=1}^r (z_j - \hat{a}_j)^2 - \sum_{j=1}^r (z_j - \tilde{a}_j)^2 
% & = \sum_{j=1}^r (z_j - \hat{a}_j)^2 - \left( \sum_{j=1}^{r_1-1} (z_j - \tilde{a}_j)^2 + \sum_{j=r_1}^{r_2} (z_j - \tilde{a}_j)^2 + \sum_{j=r_2+1}^r (z_j - \tilde{a}_j)^2 \right) \\
% & = \sum_{j=1}^r (z_j - \hat{a}_j)^2 - \left( \sum_{j=1}^{r_1-1} (z_j - \hat{a}_j)^2 + \sum_{j=r_1}^{r_2} (z_j - \hat{a}_i + \varepsilon)^2 + \sum_{j=r_2+1}^r (z_j - \hat{a}_j)^2 \right) \\
% & = \sum_{j=r_1}^{r_2} \big\{ (z_j - \hat{a}_j)^2 - (z_j - \hat{a}_j + \varepsilon)^2 \big\} \\
% & = \sum_{j=r_1}^{r_2} \Big\{ (z_j - \hat{a}_j)^2 - \left[ (z_j - \hat{a}_j)^2 + 2 \varepsilon (z_j - \hat{a}_j) + \varepsilon^2 \right] \Big\} \\
 & = 2 \varepsilon \sum_{j=r_1}^{r_2} (\hat{a}_j - z_j) - (r_2 - r_1 + 1) \varepsilon^2
\end{align*}
and 
\begin{align*}
\sum_{j=0}^r |\hat{a}_{j+1} - \hat{a}_j| - \sum_{j=0}^r | \tilde{a}_{j+1} - \tilde{a}_j| 
% & = \sum_{j=0}^r |\hat{a}_j - \hat{a}_{j+1}| - \Big( \sum_{j=0}^{r_1-2} | \tilde{a}_j - \tilde{a}_{j+1}| + \sum_{j=r_1-1}^{r_2} | \tilde{a}_j - \tilde{a}_{j+1}| + \sum_{j=r_2+1}^r | \tilde{a}_j - \tilde{a}_{j+1}| \Big) \\
% & = \sum_{j=0}^r |\hat{a}_j - \hat{a}_{j+1}| - \Big( \sum_{j=0}^{r_1-2} | \tilde{a}_j - \tilde{a}_{j+1}| + | \tilde{a}_{r_1-1} - \tilde{a}_{r_1}| + \sum_{j=r_1}^{r_2-1} | \tilde{a}_j - \tilde{a}_{j+1}| \\ 
% & \quad + | \tilde{a}_{r_2} - \tilde{a}_{r_2+1}| + \sum_{j=r_2+1}^r | \tilde{a}_j - \tilde{a}_{j+1}| \Big) \\
% & = \sum_{j=0}^r |\hat{a}_j - \hat{a}_{j+1}| - \Big( \sum_{j=0}^{r_1-2} | \hat{a}_j - \hat{a}_{j+1}| + | \hat{a}_{r_1-1} - (\hat{a}_{r_1} - \varepsilon)| + \sum_{j=r_1}^{r_2-1} | \hat{a}_j - \hat{a}_{j+1}| \\ 
% & \quad + | (\hat{a}_{r_2} - \varepsilon) - \hat{a}_{r_2+1}| + \sum_{j=r_2+1}^r | \hat{a}_j - \hat{a}_{j+1}| \Big) \\
 & = |\hat{a}_{r_1} - \hat{a}_{r_1-1}| - |\hat{a}_{r_1} - \hat{a}_{r_1-1} - \varepsilon| \\[-0.25cm]
 & \quad + |\hat{a}_{r_2+1} - \hat{a}_{r_2}| - |\hat{a}_{r_2+1} - \hat{a}_{r_2} + \varepsilon| = 2 \varepsilon,
\end{align*}
where the last equality uses that $\hat{a}_{r_1} - \hat{a}_{r_1 - 1} > 0$, $\hat{a}_{r_2} - \hat{a}_{r_2 + 1} > 0$ and $0 < \varepsilon < \min\{ \hat{a}_{r_1} - \hat{a}_{r_1-1}, \hat{a}_{r_2} - \hat{a}_{r_{2}+1} \}$. By definition of $(\hat{a}_1,\ldots,\hat{a}_r)$, it holds that
\[ \mathcal{L}(z_1,\ldots,z_r, \hat{a}_1,\ldots,\hat{a}_r \, | \, \underline{a},\overline{a}) \le \mathcal{L}(z_1,\ldots,z_r, \tilde{a}_1,\ldots,\tilde{a}_r \, | \, \underline{a},\overline{a}). \]
Hence, it follows that
\begin{align*}
0 & \geq \mathcal{L}(z_1,\ldots,z_r, \hat{a}_1,\ldots,\hat{a}_r \, | \, \underline{a},\overline{a}) - \mathcal{L}(z_1,\ldots,z_r, \tilde{a}_1,\ldots,\tilde{a}_r \, | \, \underline{a},\overline{a})  \\
  & = \left[ \sum_{j=1}^r (z_j - \hat{a}_j)^2 + \gamma \left( |\hat{a}_1 - \underline{a}| + |\hat{a}_r - \overline{a}| + \sum_{j=1}^{r-1} |\hat{a}_{j+1} - \hat{a}_j| \right) \right] \\
  & \quad - \left[ \sum_{j=1}^r (z_j - \tilde{a}_j)^2 + \gamma \left( |\tilde{a}_1 - \underline{a}| + |\tilde{a}_r - \overline{a}| + \sum_{j=1}^{r-1} |\tilde{a}_{j+1} - \tilde{a}_j| \right) \right] \\
  & = \left[ \sum_{j=1}^r (z_j - \hat{a}_j)^2 - \sum_{j=1}^r (z_j - \tilde{a}_j)^2 \right] + \gamma \left[ \sum_{j=0}^{r} |\hat{a}_{j+1} - \hat{a}_j| - \sum_{j=0}^{r}|\tilde{a}_{j+1} - \tilde{a}_j| \right] \\
  & = 2 \varepsilon \sum_{j=r_1}^{r_2} (\hat{a}_j - z_j) - (r_2 - r_1 + 1) \varepsilon^2 + 2 \gamma \varepsilon, 
\end{align*}
that is,
\[ 0 \geq \sum_{j=r_1}^{r_2} (\hat{a}_j - z_j) - \frac{(r_2 - r_1 + 1) \varepsilon}{2} + \gamma. \]
Letting $\varepsilon \to 0$, we obtain that
\begin{align*}
0 & \ge \sum_{j=r_1}^{r_2} (\hat{a}_j - z_j) + \gamma =  (r_2 - r_1 + 1) \hat{a}_{r_1} - \sum_{j=r_1}^{r_2} z_j + \gamma,
\end{align*}
which in turn implies that
\[ \hat{a}_{r_1} \leq \frac{\sum_{j=r_1}^{r_2} z_j - \gamma}{r_2 - r_1 + 1}. \]
Finally, using the assumption that $\sum_{j=r_1}^{r_2} z_j \le C_z \sqrt{r_2-r_1+1}$ and the inequality $\frac{a}{x} - \frac{b}{x^2} \leq \frac{a^2}{4b}$ for $x > 0$, we arrive at   
\[ \hat{a}_{r_1} \leq \frac{C_z \sqrt{r_2 - r_1 + 1} - \gamma}{r_2 - r_1 + 1} \leq \frac{C_z^2}{4 \gamma}. \qedhere \]
\end{proof}

\begin{lemma}\label{lemma:theo3.4-Zhang2019}
Assume that
\begin{equation}\label{eq:condition-theo3.4-Zhang2019}
\max_{1 \le r_1 \le r_2 \le r} \bigg| \frac{1}{\sqrt{r_2 - r_1 + 1}} \sum_{j=r_1}^{r_2} z_j \bigg| \leq C_z. 
\end{equation}
Then the solution vector $(\hat{a}_1,\ldots,\hat{a}_r)$ has one of the following patterns:
\pagebreak
\begin{enumerate}[label=Pattern \Alph*:,leftmargin=2.15cm]
\item 
There exist $r_1, r_2 \in \{0,\ldots,r+1\}$ with $r_1 \leq r_2$ such that
\begin{itemize}[label=,leftmargin=0.1cm]
\item $\hat{a}_0 \geq \ldots \geq \hat{a}_{r_{1}-1} > \frac{C_z^2}{4 \gamma}$ or $\hat{a}_0 \leq \ldots \leq \hat{a}_{r_1-1} < - \frac{C_z^2}{4 \gamma}$,
\item $|\hat{a}_j| \leq \frac{C_z^2}{4 \gamma}$ for all $j \in \{r_1,\ldots,r_2\}$, and 
\item $\frac{C_z^2}{4 \gamma} < \hat{a}_{r_2+1} \leq \ldots \leq \hat{a}_{r+1}$ or $-\frac{C_z^2}{4 \gamma} > \hat{a}_{r_2+1} \geq \ldots \geq \hat{a}_{r+1}$.
\end{itemize}
\item There exists $r_0 \in \{0,\ldots,r\}$ such that either (a) or (b) holds:
\begin{enumerate}[label=(\alph*),leftmargin=0.75cm]
\item $\hat{a}_0 \geq \ldots \geq \hat{a}_{r_0} > \frac{C_z^2}{4 \gamma}$ and $- \frac{C_z^2}{4 \gamma} > \hat{a}_{r_0+1} \geq \ldots \geq \hat{a}_{r+1}$
\item $\hat{a}_0 \leq \ldots \leq \hat{a}_{r_0} < - \frac{C_z^2}{4 \gamma}$ and $\frac{C_z^2}{4 \gamma} < \hat{a}_{r_0+1} \leq \ldots \leq \hat{a}_{r+1}$.
\end{enumerate}
\item There exist $r_1, r_2 \in \{0,\ldots,r+1\}$ with $r_1 \leq r_2$ such that either (a) or (b) holds:
\begin{enumerate}[label=(\alph*),leftmargin=0.75cm]
\item $\hat{a}_j > \frac{C_z^2}{4 \gamma}$ for all $j$ and $\hat{a}_0 \geq \ldots > \hat{a}_{r_1} = \ldots = \hat{a}_{r_2} < \ldots \leq \hat{a}_{r+1}$
\item $\hat{a}_j < -\frac{C_z^2}{4 \gamma}$ for all $j$ and $\hat{a}_0 \leq \ldots < \hat{a}_{r_1} = \ldots = \hat{a}_{r_2} > \ldots \geq \hat{a}_{r+1}$.
\end{enumerate}
\end{enumerate}
\end{lemma}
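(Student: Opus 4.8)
The plan is to read off the global shape of the solution from the single-block bound already established in Lemma~\ref{lemma:lemma3.6-Zhang2019}. Write $\theta = C_z^2/(4\gamma)$ for the threshold. Since condition \eqref{eq:condition-theo3.4-Zhang2019} makes \eqref{eq:condition-lemma3.6-Zhang2019} hold for every pair $r_1 \le r_2$, Lemma~\ref{lemma:lemma3.6-Zhang2019} applies to each maximal constant block of $\hat a_0,\dots,\hat a_{r+1}$: every interior strict local maximum has value $\le \theta$ and every interior strict local minimum has value $\ge -\theta$. I would treat the endpoints $\hat a_0=\underline a$ and $\hat a_{r+1}=\overline a$ as unconstrained and decompose the sequence into its maximal monotone pieces, whose interior turning points are exactly these local extrema. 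The whole classification is then a case analysis on the type and number of these extrema.

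First I would isolate the case of a \emph{wrong-side} interior extremum, i.e.\ a local minimum with value $>\theta$ or a local maximum with value $<-\theta$; this is what produces Pattern~C. Suppose a constant block $\hat a_{r_1}=\dots=\hat a_{r_2}=:v>\theta$ is a strict local minimum. Moving left from $r_1$ the values increase away from the block, and they cannot turn before reaching $\hat a_0$: any turning point would be an interior local maximum, necessarily $\le\theta<v$, yet the monotone stretch down to the block would force that maximum to be $\ge v$, a contradiction. Hence $\hat a_0\ge\dots>\hat a_{r_1}$, and symmetrically $\hat a_{r_2}<\dots\le\hat a_{r+1}$, so the block is the unique interior extremum and every coordinate exceeds $\theta$, which is exactly Pattern~C(a). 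The mirror argument for a local maximum below $-\theta$ gives Pattern~C(b); in particular a wrong-side minimum and a wrong-side maximum cannot coexist.

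It remains to handle the case with no wrong-side extremum, where every interior local maximum and minimum lies in the band $[-\theta,\theta]$; I claim this yields Pattern~A, with Pattern~B and one-sided monotone sequences as degenerate instances. The key observation is that a monotone stretch joining two values in $[-\theta,\theta]$ stays in $[-\theta,\theta]$; hence the entire portion of the sequence lying between the first and the last interior extremum is trapped in the band and forms a single contiguous block $\{r_1,\dots,r_2\}$ with $|\hat a_j|\le\theta$. What lies outside this block are the two tails preceding the first and following the last extremum: each is monotone and attached to an endpoint, so $\hat a_0,\dots,\hat a_{r_1-1}$ is monotone and stays on one side of the band while $\hat a_{r_2+1},\dots,\hat a_{r+1}$ is monotone on one side — precisely the structure of Pattern~A. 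The only configurations not covered this way are the monotone ones (no interior extremum): a monotone sequence landing in the band is again Pattern~A, one crossing from above $\theta$ directly to below $-\theta$ with no band value is Pattern~B, and one staying entirely on one side is the degenerate form of Pattern~C.

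The main obstacle is not any single estimate but the bookkeeping needed to make the case analysis genuinely exhaustive. The extrema are constant blocks rather than isolated points, so Lemma~\ref{lemma:lemma3.6-Zhang2019} must be invoked block-wise, and the many degenerate boundary situations — empty tails, endpoints lying inside or outside the band, and purely monotone sequences — must each be checked to fall under one of A, B, C. I would organise the argument by the number of interior extrema (zero, one, at least two): the trapped-in-the-band observation makes the case of two or more extrema immediate, and reduces the delicate degeneracies to the monotone and single-extremum cases already discussed above.
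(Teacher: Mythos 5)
Your proposal is correct and follows essentially the same route as the paper: both proofs rest solely on Lemma \ref{lemma:lemma3.6-Zhang2019} (no interior local maximum above $C_z^2/(4\gamma)$, no interior local minimum below $-C_z^2/(4\gamma)$) and then classify the sequence by an exhaustive case analysis, the paper pivoting on whether the set of indices lying in the band $[-C_z^2/(4\gamma),C_z^2/(4\gamma)]$ is a nonempty interval or empty, you pivoting on the presence of a wrong-side extremum. The only slip is in your Pattern A step, where $r_1,r_2$ should be taken as the first and last indices whose values lie in the band rather than as the locations of the first and last interior extrema (a monotone tail may enter the band before reaching its turning point, so the tail $\hat{a}_0,\ldots,\hat{a}_{r_1-1}$ with your choice of $r_1$ need not stay outside the band); with that adjustment the bookkeeping you describe goes through.
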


\begin{proof}
Let $\mathcal{I} = \{ 0 \le j \le r+1: |\hat{a}_j| \le C_z^2 / (4\gamma) \}$. From Lemma \ref{lemma:lemma3.6-Zhang2019}, we already know that the sequence $\hat{a}_0, \hat{a}_1,\ldots,\hat{a}_r,\hat{a}_{r+1}$ cannot have a local minimum smaller than $-C_z^2/(4\gamma)$ or a local maximum larger than $C_z^2/(4\gamma)$ under condition \eqref{eq:condition-theo3.4-Zhang2019}. This implies that $\mathcal{I}$ must be an interval of the form $\mathcal{I} = \{r_1,\ldots,r_2\}$ for some $0 \le r_1 \le r_2 \le r+1$ or the empty set.

We first examine the case where $\mathcal{I}$ is a non-empty interval of the form $\mathcal{I} = \{r_1,\ldots,r_2\}$. Since the sequence $\hat{a}_0, \hat{a}_1,\ldots,\hat{a}_r,\hat{a}_{r+1}$ cannot have a local maximum (minimum) larger than $C_z^2/(4\gamma)$ (smaller than $-C_z^2/(4\gamma)$), the values $\hat{a}_0, \ldots, \hat{a}_{r_{1}-1}$ must either form a decreasing sequence with $\hat{a}_0 \geq \ldots \geq \hat{a}_{r_{1}-1} > C_z^2 / (4 \gamma)$ or an increasing sequence with $\hat{a}_0 \leq \ldots \leq \hat{a}_{r_1-1} < - C_z^2 / (4 \gamma)$. The values $\hat{a}_{r_2+1}, \ldots, \hat{a}_{r+1}$ must fulfill an analogous restriction. We thus end up with pattern A.

Next consider the case where $\mathcal{I}$ is the empty set and thus $|\hat{a}_j| > C_z^2 / (4 \gamma)$ for all $j \in \{0,\ldots,r+1\}$. There are two subcases: either there are $\hat{a}_j$'s with different signs or all $\hat{a}_j$'s have the same sign. 
We start with the subcase where the $\hat{a}_j$'s have different signs. To avoid a violation of Lemma \ref{lemma:lemma3.6-Zhang2019} by producing a local maximum/minimum which is too large/small, the following must hold: The signs of the $\hat{a}_j$'s can only change once, that is, there exists $r_0$ such that $\hat{a}_j$ is positive (negative) for all $j \le r_0$ and negative (positive) for all $j > r_0$. Moreover, when the signs are first positive (negative) and then negative (positive), the sequence $\hat{a}_0, \hat{a}_1,\ldots,\hat{a}_r,\hat{a}_{r+1}$ must be decreasing (increasing). We thus end up with pattern B. 
We now turn to the second subcase where all $\hat{a}_j$'s have the same sign. If $\hat{a}_j > C_z^2 / (4 \gamma)$ for all $j$, we must have that $\hat{a}_0 \geq \ldots > \hat{a}_{r_1} = \ldots = \hat{a}_{r_2} < \ldots \leq \hat{a}_{r+1}$ for some indices $r_1$ and $r_2$ because otherwise the sequence $\hat{a}_0, \hat{a}_1,\ldots,\hat{a}_r,\hat{a}_{r+1}$ would have a local maximum larger than $C_z^2/(4\gamma)$. Analogously, if $\hat{a}_j < -C_z^2 / (4 \gamma)$ for all $j$, we must have that $\hat{a}_0 \leq \ldots < \hat{a}_{r_1} = \ldots = \hat{a}_{r_2} > \ldots \geq \hat{a}_{r+1}$. As a result, we obtain pattern C. 
\end{proof}

With the help of Lemmas \ref{lemma:lemma3.6-Zhang2019} and \ref{lemma:theo3.4-Zhang2019}, we now prove that the entries of the solution vector $(\hat{a}_1,\ldots,\hat{a}_r)$ can be bounded as follows.
\begin{lemma}\label{prop:theo3.4-Zhang2019}
If
\begin{equation}\label{eq:bound-theo3.4-Zhang2019}
\max_{1 \le r_1 \le r_2 \le r} \left| \frac{1}{\sqrt{r_2-r_1+1}} \sum_{j=r_1}^{r_2} z_j \right| \le C_z, 
\end{equation}
then
\begin{equation}\label{eq:statement-theo3.4-Zhang2019}
|\hat{a}_j| \le \max \left\{ \frac{C_z}{\sqrt{j}}, \frac{C_z}{\sqrt{r+1-j}},\frac{C_z^2}{4\gamma},\frac{2\gamma}{r} + \frac{2C_z}{\sqrt{r}} \right\} 
\end{equation}
for all $j \in \{1,\ldots,r\}$. 
\end{lemma}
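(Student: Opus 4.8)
The plan is to combine the pattern classification of Lemma \ref{lemma:theo3.4-Zhang2019} with a first-order (subgradient) analysis of the constant blocks of the solution vector. The backbone is a summed stationarity identity. Writing the penalty in \eqref{eq:aux-min-problem-crit} as $\gamma\sum_{i=0}^{r}|a_{i+1}-a_i|$ under the convention $\hat{a}_0=\underline{a}$, $\hat{a}_{r+1}=\overline{a}$, optimality of $(\hat{a}_1,\dots,\hat{a}_r)$ gives $2(\hat{a}_i-z_i)=\gamma(s_i-s_{i-1})$ for every $i$, where $s_i\in[-1,1]$ is a subgradient of $|\cdot|$ at $\hat{a}_{i+1}-\hat{a}_i$, so that $s_i=\operatorname{sign}(\hat{a}_{i+1}-\hat{a}_i)$ whenever $\hat{a}_{i+1}\neq\hat{a}_i$. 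Summing this identity over a prefix $\{1,\dots,j\}$ or over a constant block telescopes the $s_i$ and expresses the corresponding partial sum of the $\hat{a}_i$ in terms of the partial sum of the $z_i$ plus a boundary correction of size at most $\gamma$; the hypothesis \eqref{eq:bound-theo3.4-Zhang2019} then bounds each $z$-sum by $C_z\sqrt{\text{length}}$. (Alternatively one can re-derive these identities by the prefix-perturbation argument already used in the proof of Lemma \ref{lemma:lemma3.6-Zhang2019}.)

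With this tool I would first dispose of the monotone flanks. Suppose $j$ lies in a maximal strictly decreasing stretch of \emph{positive} values that does not belong to the central block, and let $\{s,\dots,t\}$ be the constant block containing $j$, with $t$ its last index, so that $\hat{a}_{t+1}<\hat{a}_t$ and hence $s_t=-1$. Summing the stationarity identity over $\{1,\dots,t\}$ and using only the one-sided fact $s_0\ge -1$ gives $\sum_{i=1}^{t}\hat{a}_i\le\sum_{i=1}^{t}z_i\le C_z\sqrt{t}$; since $\hat{a}_t$ is the minimum of the decreasing prefix, $\hat{a}_j=\hat{a}_t\le t^{-1}\sum_{i=1}^{t}\hat{a}_i\le C_z/\sqrt{t}\le C_z/\sqrt{j}$. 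A reflection $i\mapsto r+1-i$, which preserves \eqref{eq:bound-theo3.4-Zhang2019}, yields the symmetric bound $C_z/\sqrt{r+1-j}$ for increasing flanks, and a sign flip handles negative flanks. This settles every index of Pattern B and the two outer stretches of Patterns A and C.

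It then remains to bound the central block of each pattern. In Pattern A the central indices obey $|\hat{a}_j|\le C_z^2/(4\gamma)$ by the very definition of the pattern, which is the third term of the maximum. In Pattern C the central block $\{r_1,\dots,r_2\}$ is a common minimum (or maximum) with a strict decrease on one side and a strict increase on the other, so its two flanking subgradients equal $-1$ and $+1$; summing stationarity over the block gives $\hat{a}_j=\bar{z}_{[r_1,r_2]}+\gamma/m$ with $m=r_2-r_1+1$, whence $|\hat{a}_j|\le C_z/\sqrt{m}+\gamma/m$. The final step converts this into the fourth term $2\gamma/r+2C_z/\sqrt{r}$ by a case split on $m$: if $m\ge r/2$ the block formula already suffices, whereas if $m<r/2$ then $(r_1-1)+(r-r_2)=r-m>r/2$, so the longer flank has length exceeding $r/4$, and bounding $\hat{a}_j$ by the adjacent flank value $\hat{a}_{r_1-1}$ or $\hat{a}_{r_2+1}$ through the monotone bound of the previous paragraph gives $|\hat{a}_j|\le 2C_z/\sqrt{r}$.

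I expect this Pattern C central block to be the main obstacle. The naive block-average bound $C_z/\sqrt{m}+\gamma/m$ degrades badly when the valley is short, and recovering the advertised $r^{-1/2}$ scaling (rather than $m^{-1/2}$) forces one to exploit monotonicity in order to ``borrow length'' from the flanks, while taking care of the degenerate configurations $r_1=1$ or $r_2=r$, where one flank is empty and the whole estimate must be extracted from the other side. The flank argument itself also needs a little care so that the boundary subgradients $s_0$ and $s_r$, which involve the fixed endpoints $\underline{a}$ and $\overline{a}$, enter with the correct sign; this is precisely where the one-sided inequalities $s_0\ge -1$ and $s_r\le 1$ are used, and it allows the argument to go through even when the first or last value coincides with $\underline{a}$ or $\overline{a}$.
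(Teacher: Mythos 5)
Your proposal is correct and follows essentially the same route as the paper: the pattern classification of Lemma \ref{lemma:theo3.4-Zhang2019}, the prefix/suffix averaging bounds $C_z/\sqrt{j}$ and $C_z/\sqrt{r+1-j}$ on the monotone flanks, the $C_z^2/(4\gamma)$ bound on Pattern A's central block, and the block-average-plus-$\gamma/m$ estimate with the $r/4$ case split for Pattern C's valley. The only difference is presentational: you telescope the KKT subgradient identities where the paper derives the same one-sided inequalities by explicit $\varepsilon$-perturbations of the constant blocks, as you yourself note.
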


\begin{proof}
From Lemma \ref{lemma:theo3.4-Zhang2019}, we already know that the vector $(\hat{a}_0,\hat{a}_1,\ldots,\hat{a}_r,\hat{a}_{r+1})$ has pattern A, B or C. We now examine these three patterns one after the other.

First suppose that the vector has pattern A. Without loss of generality, assume that $\hat{a}_0 \geq \ldots \geq \hat{a}_{r_{1}-1} > C_z^2 / (4 \gamma)$ and $C_z^2 / (4 \gamma) < \hat{a}_{r_2+1} \leq \ldots \leq \hat{a}_{r+1}$ with $1 \le r_1 \le r_2 \le r$. The other possibilities entailed by pattern A can be treated analogously. To start with, we have a closer look at the partial sequence $\hat{a}_0,\ldots,\hat{a}_{r_1-1}$. For any $k \leq r_1 - 1$, define $k' := k'(k)$ as the biggest index such that $\hat{a}_k = \hat{a}_{k+1} = \ldots = \hat{a}_{k'} \neq \hat{a}_{k'+1}$. Note that $k' \leq r_1 - 1$ since $\hat{a}_{r_1-1} \neq \hat{a}_{r_1}$. Defining 
\[ \tilde{a}_j = 
\begin{cases} 
\hat{a}_j - \varepsilon & \text{for } 1 \le j \leq k' \\ 
\hat{a}_j & \text{else} 
\end{cases} 
\]
with $\varepsilon > 0$, we obtain that 
\begin{align*}
\sum_{j=0}^r |\hat{a}_{j+1} - \hat{a}_j| - \sum_{j=0}^r | \tilde{a}_{j+1} - \tilde{a}_j| 
 & = | \hat{a}_1 - \hat{a}_0 | - | \hat{a}_1 - \hat{a}_0 - \varepsilon | \\[-0.3cm]
 & \qquad + |\hat{a}_{k'+1} - \hat{a}_{k'}| - |\hat{a}_{k'+1} - \hat{a}_{k'} + \varepsilon| = 0 
\end{align*}
for sufficiently small $\varepsilon > 0$ (taking into account that $\hat{a}_{k'+1} - \hat{a}_{k'} < 0$ and $\hat{a}_1 - \hat{a}_0 \le 0$) and
\begin{align*}
\sum_{j=1}^r (z_j - \hat{a}_j)^2 - \sum_{j=1}^r (z_j - \tilde{a}_j)^2 
% & = \sum_{j=1}^{k'} (z_j - \hat{a}_j)^2 - (z_j - \hat{a}_j + - \varepsilon)^2 \\
% & = \sum_{j=1}^{k'} (z_j - \hat{a}_j)^2 - \left[ (z_j - \hat{a}_j)^2 + 2 \varepsilon (z_j - \hat{a}_j) + \varepsilon^2 \right] \\
 & = 2 \varepsilon \sum_{j=1}^{k'} (\hat{a}_j - z_j) - k' \varepsilon^2.
\end{align*}
This implies that 
\begin{align*}
0 & \ge \mathcal{L}(z_1,\ldots,z_r, \hat{a}_1,\ldots,\hat{a}_r \, | \, \underline{a},\overline{a}) - \mathcal{L}(z_1,\ldots,z_r, \tilde{a}_1,\ldots,\tilde{a}_r \, | \, \underline{a},\overline{a}) \\
  & = \left[ \sum_{j=1}^r (z_j - \hat{a}_j)^2 - \sum_{j=1}^r (z_j - \tilde{a}_j)^2 \right] + \gamma \left[ \sum_{j=0}^{r} |\hat{a}_{j+1} - \hat{a}_j| - \sum_{j=0}^{r}|\tilde{a}_{j+1} - \tilde{a}_j| \right] \\*
  & = 2 \varepsilon \sum_{j=1}^{k'} (\hat{a}_j - z_j) - k'\varepsilon^2 
\end{align*}
and thus
\[ 0 \geq \sum_{j=1}^{k'} (\hat{a}_j - z_j) - \frac{k' \varepsilon}{2}. \]
Letting $\varepsilon \to 0$, we further get that
\[ 0 \geq \sum_{j=1}^{k'} (\hat{a}_j - z_j) \ge  k' \hat{a}_{k'} - \sum_{j=1}^{k'} z_j = k' \hat{a}_k - \sum_{j=1}^{k'} z_j. \]
Rearranging, inserting the bound \eqref{eq:bound-theo3.4-Zhang2019} and taking into account that $k \le k'$ yields 
\begin{equation}\label{eq:theo3.4-Zhang2019-structureA-bound1}
\hat{a}_{k} \leq \frac{C_z \sqrt{k'}}{k'} = \frac{C_z}{\sqrt{k'}} \leq \frac{C_z}{\sqrt{k}} 
\end{equation}
for any $k \in \{1,\ldots,r_1-1\}$. Analyzing the partial sequence $\hat{a}_{r_2+1},\ldots,\hat{a}_{r+1}$ in a similar way shows that 
\begin{equation}\label{eq:theo3.4-Zhang2019-structureA-bound2} 
\hat{a}_{k} \leq \frac{C_z}{\sqrt{r-k+1}} 
\end{equation}
for any $k \in \{r_2+1,\ldots,r\}$. Combining the bounds \eqref{eq:theo3.4-Zhang2019-structureA-bound1} and \eqref{eq:theo3.4-Zhang2019-structureA-bound2} with the fact that $|\hat{a}_j| \le C_z^2/(4\gamma)$ for $j \in \{r_1,\ldots,r_2\}$, we finally arrive at the desired bound \eqref{eq:statement-theo3.4-Zhang2019}.

We next suppose that the vector $(\hat{a}_0,\hat{a}_1,\ldots,\hat{a}_r,\hat{a}_{r+1})$ has pattern B. In this case, we can apply the same arguments as for pattern A to the partial sequences $\hat{a}_0,\ldots,\hat{a}_{r_0}$ and $\hat{a}_{r_0+1},\ldots,,\hat{a}_{r+1}$ to verify the bound \eqref{eq:statement-theo3.4-Zhang2019}.

We finally turn to the case that $(\hat{a}_0,\hat{a}_1,\ldots,\hat{a}_r,\hat{a}_{r+1})$ has pattern C. Without loss of generality, we let $1 \le r_1 \le r_2 \le r$ and consider case (a), that is, $\hat{a}_j > C_z^2 / (4 \gamma)$ for all $j$ and $\hat{a}_0 \geq \ldots > \hat{a}_{r_1} = \ldots = \hat{a}_{r_2} < \ldots \leq \hat{a}_{r+1}$. The other possibilities entailed by pattern C can be treated analogously. The same arguments as for pattern A yield that 
\[ \hat{a}_{k} \leq \frac{C_z}{\sqrt{k}} \quad \text{for } k \in \{1,\ldots,r_1-1\} \]
and 
\[ \hat{a}_{k} \leq \frac{C_z}{\sqrt{r-k+1}} \quad \text{for } k \in \{r_2+1,\ldots,r\}. \]
It thus remains to bound $\hat{a}_k$ for $k \in \mathcal{I} = \{r_1, \ldots, r_2\}$. Defining
\[ \tilde{a}_j = \begin{cases} 
\hat{a}_j - \varepsilon & \text{for } j \in \mathcal{I} \\ 
\hat{a}_j & \text{for } j \notin \mathcal{I} 
\end{cases} \]
with $\varepsilon > 0$, we get that
\begin{align*}
 & \sum_{j=0}^r |\hat{a}_{j+1} - \hat{a}_j| - \sum_{j=0}^r | \tilde{a}_{j+1} - \tilde{a}_j| \\
 & \quad = |\hat{a}_{r_1} - \hat{a}_{r_1-1}| - |\hat{a}_{r_1} - \hat{a}_{r_1-1} - \varepsilon| + |\hat{a}_{r_2+1} - \hat{a}_{r_2}| - |\hat{a}_{r_2+1} - \hat{a}_{r_2} + \varepsilon| = - 2\varepsilon
\end{align*}
for sufficiently small $\varepsilon > 0$ (taking into consideration that $\hat{a}_{r_{1}} - \hat{a}_{r_{1}-1} < 0$ and $\hat{a}_{r_{2}+1} - \hat{a}_{r_{2}} > 0$) and 
\begin{align*}
\sum_{j=1}^r (z_j - \hat{a}_j)^2 - \sum_{j=1}^r (z_j - \tilde{a}_j)^2 
% & = \sum_{j=1}^r (z_j - \hat{a}_j)^2 - \left( \sum_{j=1}^{r_{1}-1} (z_j - \tilde{a}_j)^2 + \sum_{j=r_{1}}^{r_{2}} (z_j - \tilde{a}_i)^2 + \sum_{j=r_{2}+1}^r (z_j - \tilde{a}_j)^2 \right) \\
% & = \sum_{j=1}^r (z_j - \hat{a}_j)^2 - \left( \sum_{j=1}^{r_{1}-1} (z_j - \hat{a}_j)^2 + \sum_{j=r_{1}}^{r_{2}} (z_j - \hat{a}_j + \varepsilon)^2 + \sum_{j=r_{2}+1}^r (z_j - \hat{a}_j)^2 \right) \\
% & = \sum_{j=r_{1}}^{r_{2}} \big\{ (z_j - \hat{a}_j)^2 - (z_j - \hat{a}_j + \varepsilon)^2 \big\} \\
% & = \sum_{j=r_{1}}^{r_{2}} (z_j - \hat{a}_j)^2 - \left[ (z_j - \hat{a}_j)^2 + 2 \varepsilon (z_j - \hat{a}_j) + \varepsilon^2 \right] \\
 & = 2 \varepsilon \sum_{j=r_{1}}^{r_{2}} (\hat{a}_j - z_j) - (r_2 - r_1 + 1) \varepsilon^2.
\end{align*}
We thus obtain that
\begin{align*}
0 & \ge \mathcal{L}(z_1,\ldots,z_r, \hat{a}_1,\ldots,\hat{a}_r \, | \, \underline{a},\overline{a}) - \mathcal{L}(z_1,\ldots,z_r, \tilde{a}_1,\ldots,\tilde{a}_r \, | \, \underline{a},\overline{a}) \\
  & = \left[ \sum_{j=1}^r (z_j - \hat{a}_j)^2 - \sum_{j=1}^r (z_j - \tilde{a}_j)^2 \right] + \gamma \left[ \sum_{j=0}^{r} |\hat{a}_{j+1} - \hat{a}_j| - \sum_{j=0}^{r}|\tilde{a}_{j+1} - \tilde{a}_j| \right] \\
  & = 2 \varepsilon \sum_{j=r_{1}}^{r_{2}} (\hat{a}_j - z_j) - (r_2 - r_1 + 1) \varepsilon^2 - 2 \gamma \varepsilon, 
\end{align*}
which in turn yields that
\[ 0 \geq \sum_{j=r_{1}}^{r_{2}} (\hat{a}_j - z_j) - \frac{(r_2 - r_1 + 1) \varepsilon}{2} - \gamma. \]
Letting $\varepsilon \to 0$, we can infer that
\[ 0 \geq \sum_{j=r_{1}}^{r_{2}} (\hat{a}_j - z_j) - \gamma = (r_2 - r_1 + 1) \hat{a}_{r_{1}} - \sum_{j=r_{1}}^{r_{2}} z_j - \gamma \]
and thus
\[ \hat{a}_{r_{1}} \leq \frac{\sum_{j=r_{1}}^{r_{2}} z_j + \gamma}{r_2 - r_1 + 1}. \]
Rearranging this inequality, recalling that $\hat{a}_{r_1} = \hat{a}_{k}$ for all $k \in \mathcal{I}$ and inserting the bound \eqref{eq:bound-theo3.4-Zhang2019} yields
\begin{equation}\label{eq:theo3.4-Zhang2019-structureC-bound1}
\hat{a}_{k} \leq \frac{C_z \sqrt{r_2 - r_1 + 1} + \gamma}{r_2 - r_1 + 1} = \frac{C_z}{\sqrt{r_2-r_1+1}} + \frac{\gamma}{r_2-r_1+1} 
\end{equation}
for any $k \in \mathcal{I}$. Since we deal with pattern 3(a), it additionally holds that
\begin{equation}\label{eq:theo3.4-Zhang2019-structureC-bound2}
\hat{a}_k \leq \min \left\{ \frac{C_z}{\sqrt{r_1-1}}, \frac{C_z}{\sqrt{r-r_2}} \right\} 
\end{equation}
for any $k \in \mathcal{I}$. Combining the bounds \eqref{eq:theo3.4-Zhang2019-structureC-bound1} and \eqref{eq:theo3.4-Zhang2019-structureC-bound2}, we can conclude the following:
\begin{itemize}
\item If $r_1 - 1 \geq r/4$, then $\hat{a}_k \leq \frac{2 C_z}{\sqrt{r}}$ for $k \in \mathcal{I}$.
\item If $r - r_2 \geq r/4$, then $\hat{a}_k \leq \frac{2 C_z}{\sqrt{r}}$ for $k \in \mathcal{I}$.
\item If $r_1 - 1 \leq r/4$ and $r - r_2 \leq r/4$, then $r_2 - r_1 + 1 \geq r/2$ and thus $\hat{a}_k \leq \frac{C_z}{\sqrt{r/2}} + \frac{2 \gamma}{r}$ for $k \in \mathcal{I}$.
\end{itemize}
Combining these bounds for $k \in \mathcal{I}$ with those derived for $k \notin \mathcal{I}$, we finally arrive at the desired bound \eqref{eq:statement-theo3.4-Zhang2019}.
\end{proof}

To prove Proposition \ref{prop:fused-lasso-elementwise-error}, we now connect the minimization problem underlying the fused lasso to the minimization problem \eqref{eq:aux-min-problem} analyzed above. For simplicity, we denote the fused lasso by $\hat{\bs{\alpha}}$, that is, we drop the subscript $\lambda$. It holds that
\[ \hat{\bs{\alpha}} \in  \underset{\bs{a} \in \reals^{n}}{\textnormal{argmin}} \ \mathcal{L}(\bs{Y},\bs{a}) \]
with
\[ \mathcal{L}(\bs{Y},\bs{a}) = \frac{1}{n} \sum_{j=1}^{n} (Y_j - a_j)^2 + \lambda \sum_{j=1}^{n-1} |a_{j+1} - a_j|. \]
Let $1 < n_1 < \ldots < n_K \le n$ be the indices where the signal vector $\hazvec = (\haz_1,\ldots,\haz_{n})$ has a jump (i.e., $\haz_{n_k-1} \ne \haz_{n_k}$ for $1 \le k \le K$), and additionally set $n_0 = 1$ along with $n_{K+1} = n + 1$. Given the indices $n_1,\ldots,n_K$, any vector $\bs{x} = (x_1,\ldots,x_n) \in \reals^{n}$ can be represented as $\bs{x} = (\bs{x}_0,\bs{x}_1,\ldots,\bs{x}_K)$ with $\bs{x}_k = (x_{n_k},\ldots,x_{n_{k+1}-1})$ for $k \in \{0,\ldots,K\}$. Obviously, for any $k \in \{1,\ldots,K-1\}$, it holds that 
\begin{align*}
\hat{\bs{\alpha}}_k 
 & \in \underset{\bs{a}_k \in \reals^{n_{k+1} - n_k + 1}}{\textnormal{argmin}} \mathcal{L}(\bs{Y},\hat{\bs{\alpha}}_0,\ldots,\hat{\bs{\alpha}}_{k-1}, \ \bs{a}_k, \ \hat{\bs{\alpha}}_{k+1},\ldots,\hat{\bs{\alpha}}_K) \\
 & = \underset{\bs{a}_k \in \reals^{n_{k+1} - n_k + 1}}{\textnormal{argmin}} \mathcal{L}(\bs{Y}_k,\bs{a}_k \, | \, \hat{\alpha}_{n_k-1}, \hat{\alpha}_{n_{k+1}}),
\end{align*}
where 
\begin{align*}
 & \mathcal{L}(\bs{Y}_k, \bs{a}_k \, | \, \hat{\alpha}_{n_k-1}, \hat{\alpha}_{n_{k+1}}) \\*
 & \quad = \sum_{j=n_k}^{n_{k+1}-1} (Y_j - a_j)^2 + \gamma \sum_{j=n_k}^{n_{k+1}-2} |a_{j+1} - a_j| \\* & \qquad \qquad + \gamma \big\{ |a_{n_k} - \hat{\alpha}_{n_k - 1}| + |\hat{\alpha}_{n_{k+1}} - a_{n_{k+1}-1}| \big\}
\end{align*}
is the criterion function defined in \eqref{eq:aux-min-problem-crit} with $\gamma = n \lambda$. This shows that for any $k \in \{1,\ldots,K-1\}$, the subvector $\hat{\bs{\alpha}}_k$ can be obtained by solving a subproblem with correctly specified end points, in particular, by minimizing the criterion function $ \mathcal{L}(\bs{Y}_k,\bs{a}_k \, | \, \underline{a}, \overline{a} )$ with the end points $\underline{a}=\hat{\alpha}_{n_k-1}$ and $\overline{a}=\hat{\alpha}_{n_{k+1}}$. It is easy to see that for any constant vector $\bs{c}_k = (c,\ldots,c) \in \reals^{n_{k+1} - n_k + 1}$,
\[ (\hat{\bs{\alpha}}_k - \bs{c}_k) \in 
\underset{\bs{a}_k \in \reals^{n_{k+1} - n_k + 1}}{\textnormal{argmin}}\mathcal{L}(\bs{Y}_k - \bs{c}_k, \bs{a}_k \, | \, \hat{\alpha}_{n_k-1}-c, \hat{\alpha}_{n_{k+1}}-c). \]
Hence, if we shift the input vector $\bs{Y}_k$ as well as the end points by a constant $c$, the solution vector $\hat{\bs{\alpha}}_k$ gets shifted by the same amount. Since the $k$-th part of the signal vector $\hazvec_k$ is constant, this in particular implies that 
\begin{align}
(\hat{\bs{\alpha}}_k - \hazvec_k) 
 & \in \underset{\bs{a}_k \in \reals^{n_{k+1} - n_k + 1}}{\textnormal{argmin}}\mathcal{L}(\bs{u}_k, \bs{a}_k \, | \, \hat{\alpha}_{n_k-1}-\haz_{n_k}, \hat{\alpha}_{n_{k+1}}-\haz_{n_k}),  \label{eq:min-subvector-k}
\end{align}
where $\bs{u}_k = \bs{Y}_k - \hazvec_k$. Assuming that  
\[ \max_{1 \le k \le \ell \le n} \left| \frac{1}{\sqrt{\ell-k+1}} \sum_{j=k}^{\ell} u_j \right| \le C_u, \]
we can now apply Lemma \ref{prop:theo3.4-Zhang2019} to obtain that  
\begin{equation}\label{eq:theo3.4-Zhang2019-desired-bound}
|\hat{\alpha}_j - \haz_j| \le \max \left\{ \frac{C_u}{\sqrt{d_j}}, \frac{C_u^2}{4n\lambda},\frac{2n\lambda}{r_{k(j)}} + \frac{2C_u}{\sqrt{r_{k(j)}}} \right\} 
\end{equation}
for $j \in \{n_1,\ldots,n_{K}-1\}$.

The subvectors $\hat{\bs{\alpha}}_0 - \hazvec_0$ and $\hat{\bs{\alpha}}_K - \hazvec_K$ solve a minimization problem similar to \eqref{eq:min-subvector-k}, where only one instead of two endpoints are given. Repeating the above arguments for these slightly different minimization problems yields the bound \eqref{eq:theo3.4-Zhang2019-desired-bound} also for $j \in \{1,\ldots,n_1-1\} \cup \{n_K,\ldots,n\}$. This completes the proof.

\subsection*{Proof of Proposition \ref{theo:LinSharpnackRinaldoTibshirani2017}}

Fix some $\varepsilon > 0$. By assumption, we know that there is a constant $C = C(\varepsilon)$ and an integer $N_1 = N_1(\varepsilon)$ such that 
\[ \pr \left( \frac{1}{n} \norm{\tilde{\bs{\alpha}} - \hazvec}_2^2 \ge \frac{C}{2} R_n \right) \leq \varepsilon \]
for all $n \geq N_1$. We also know that there is an integer $N_2 = N_2(\varepsilon) > 0$ with $2 \lceil C n R_n / H_n^2 \rceil < D_n$ for all $n \geq N_2$. Let $N = \max\{N_1,N_2\}$, take $n \geq N$ and let $r_n = \lceil C n R_n / H_n^2 \rceil$.
Suppose that $d(\tilde{J} \, | \, J^*) > r_n$. Then by definition, there exists a jump index $n_k \in J^*$ such that no jump points of $\tilde{\bs{\alpha}}$ are within $r_n$ of $n_k$, which implies that $\tilde{\alpha}_j$ is constant over $j \in \{n_k - r_n, \ldots, n_k + r_n - 1\}$. With the notation
\[ z = \tilde{\alpha}_{n_k - r_{n}} = \ldots = \tilde{\alpha}_{n_k-1} = \tilde{\alpha}_{n_k} = \ldots = \tilde{\alpha}_{n_k + r_{n} - 1}, \]
we obtain the lower bound
\[ \frac{1}{n} \sum_{j = n_k - r_{n}}^{n_k + r_{n} - 1} \left( \tilde{\alpha}_j - \haz_j \right)^2 = \frac{r_n}{n} \left( z - \haz_{n_k-1} \right)^2 + \frac{r_n}{n} \left( z - \haz_{n_k} \right)^2 \geq \frac{r_n H_n^2}{2n} \ge \frac{C}{2} R_n , \]
where the first inequality holds because $(x-a)^2 + (x-b)^2 \geq (a-b)^2/2$ for all $x$ (the quadratic in $x$ here is minimized at $x = (a+b)/2$) and the second one holds because $r_n = \lceil C n R_n / H_n^2 \rceil$. As a consequence, we see that $d(\tilde{J} \, | \, J^*) > r_n$ implies
\[ \frac{1}{n} \norm{\tilde{\bs{\alpha}} - \hazvec}_2^2 \geq \frac{1}{n} \sum_{j = n_k - r_{n}}^{n_k + r_{n} - 1} \left( \tilde{\alpha}_j - \alpha_j^* \right)^2 \ge \frac{C}{2} R_n, \]
which yields that
\[ \pr \left( d(\tilde{J} \, | \, J^*) > r_n \right) \leq \pr \left( \frac{1}{n} \norm{\tilde{\bs{\alpha}} - \hazvec}_2^2 \ge \frac{C}{2} R_n \right) \leq \varepsilon \]
for all $n \geq N$. This completes the proof.

\end{document}